\titleformat{\section}{\large\bfseries\filcenter}{\thesection}{1em}{}
\titleformat{\subsection}{\bfseries}{\thesubsection}{1em}{}
\titleformat{\subsubsection}[runin]{\bfseries}{\thesubsubsection}{1em}{}[.]
\newlist{longenum}{enumerate}{5}
\setlist[longenum,1]{label=\alph*)}
\renewcommand{\emptyset}{\varnothing}
\theoremstyle{plain}
    \newtheorem{lemma}{Lemma}
    \newtheorem{theorem}{Theorem}
    \newtheorem{consequence}{Corollary}[section]
    \newtheorem{statement}{Proposition}[section]
\theoremstyle{remark}
  \theoremstyle{definition}
    \newtheorem{definition}{Definition}
            \newtheorem{remark}{Remark}[section]
\newcommand{\Ker}[1]{\mathrm{Ker} \, #1}
\newcommand{\rank}[1]{\mathrm{rank} \, #1}
\newcommand{\corank}[1]{\mathrm{corank} \, #1}
\newcommand{\Sing}[1]{\mathrm{Sing} \, #1}
\newcommand{\diff}[1]{\mathrm{d}  #1}
\newcommand{\diffFX}[2]{ \frac{\partial #1}{\partial #2} }
\newcommand{\diffFXi}[2]{ {\partial #1}/{\partial #2} }
\newcommand{\diffFXYp}[2]{ \frac{\mathrm{d} #1}{\mathrm{d} #2} }
\newcommand{\diffXp}[1]{ \frac{\mathrm{d} }{\diff #1} }
\newcommand{\Z}{\mathbb{Z}}
\newcommand{\R}{\mathbb{R}}
\newcommand{\Complex}{\mathbb{C}}
\newcommand{\Curve}{{C}}
\newcommand{\T}{\mathrm{T}}
\newcommand{\Cont}{\mathrm{C}}
\newcommand{\const}{\mathrm{const}}
\newcommand{\CP}{{\mathbb{C}}\mathrm{P}}
\newcommand{\CPP}{\overline \Complex}
\newcommand{\Hom}{\mathrm{H}}
\newcommand{\Ree}[1]{\mathrm{Re} \, #1}
\newcommand{\Imm}[1]{\mathrm{Im} \, #1}
\newcommand{\Tr}[1]{\mathrm{Tr} \, #1}
\newcommand{\tr}[1]{\mathrm{tr} \, #1}
\newcommand{\quo}[2]{\left.{#1}\middle/{#2}\right.}
\newcommand{\eps}{\varepsilon}
\newcommand{\wave}{\widetilde}
\newcommand{\g}{\mathfrak{g}}
\newcommand{\h}{\mathfrak{h}}
\newcommand{\so}{\mathfrak{so}}
\newcommand{\Pic}{\mathrm{Pic}}
\newcommand{\Jac}{\mathrm{Jac}}
\newcommand{\Div}{\mathrm{Div}}
\newcommand{\PDiv}{\mathrm{PDiv}}
\newcommand{\Res}{\mathrm{Res}}
\newcommand{\PGL}{\mathbb{P}\mathrm{GL}}
\newcommand{\E}{\mathrm{E}}
\newcommand{\sP}{\mathfrak{sp}}
\newcommand{\gl}{\mathfrak{gl}}
\renewcommand{\L}{  {L}}
\newcommand{\F}{{\pazocal F}}
\renewcommand{\deg}{\mathrm{deg}\,}
\DeclareMathAlphabet{\pazocal}{OMS}{zplm}{m}{n}
\title{\Large Singularities of integrable systems and nodal curves}
\author{\normalsize Anton Izosimov\footnote{E-mail: izosimov@math.utoronto.ca}
}
\date{}
\begin{document}
\maketitle
\abstract{The relation between integrable systems and algebraic geometry is known since the XIXth century.
The modern approach is to represent an integrable system as a Lax equation with spectral parameter. In this approach, the integrals of the system turn out to be the coefficients of the characteristic polynomial $\chi$ of the Lax matrix, and the solutions are expressed in terms of theta functions related to the curve $\chi = 0$.\par
The aim of the present paper is to show that the possibility to write an integrable system in the Lax form, as well as the algebro-geometric technique related to this possibility, may also be applied to study qualitative features of the system, in particular its singularities.
}
\section*{Introduction}
It is well known that the majority of finite dimensional integrable systems can be written in the form
\begin{align}\label{Lax}
\diffXp{t}L(\lambda) = [L(\lambda), A(\lambda)]
\end{align}
where $L$ and $A$ are matrices depending on the time $t$ and  additional parameter $\lambda$. The parameter $\lambda$ is called a \textit{spectral parameter}, and equation \eqref{Lax} is called a \textit{Lax equation} with spectral parameter\footnote{More generally, any equation of the form $\dot L = [L,A]$ where $L,A$ are operators is called a Lax equation. Equations of such kind were first considered by Lax \cite{Lax} in connection with the KdV equation. Lax equations with spectral parameter were first considered in \cite{Novikov}.}.\par The possibility to write a system in the Lax form allows us to solve it explicitly by means of algebro-geometric technique. The algebro-geometric scheme of solving Lax equations can be briefly described as follows. Let us assume that the dependence on $\lambda$ is polynomial. Then, with each matrix polynomial $L$, there is an associated algebraic curve \begin{align}\label{SpectralCurve} C(L) = \{(\lambda,\mu) \in \Complex^2 \mid\det(L(\lambda)-\mu \E) = 0\} \end{align} called the \textit{spectral curve}. The Lax equation implies that this curve does not depend on time. Consider the set $\pazocal S_\Curve$ of matrix polynomials having the same spectral curve $ \Curve$. For each $L \in \pazocal S_\Curve$, there is an associated linear bundle over $\Curve$. This bundle is obtained by considering for each point $(\lambda,\mu) \in \Curve$ the kernel of the operator $L(\lambda)-\mu \E$, where $\E$ stands for the identity matrix. In this way, we obtain a map from $\pazocal S_\Curve$ to the Jacobian variety of the spectral curve. The classical result is that this map linearizes the Lax flow. For details, see e.g. the reviews \nolinebreak \cite{DMN, DKN, ReimanRev}, as well as references therein and Section \nolinebreak \ref{DSC} of the present paper. \par\smallskip
The aim of our paper is to show that the possibility to write an integrable system in the Lax form, as well as the algebro-geometric technique related to this possibility, may also be applied to study qualitative features of the system. \par In the last 30 years, there has been considerable interest in topology of singular Lagrangian fibrations associated to integrable systems \nolinebreak \cite{Kharlamov,Lerman, Lerman2, Delzant, FZ, AL1, AL2,PVN,leung2010almost, intsys, bolosh}. The generic structure of such fibrations is described by the classical Arnold-Liouville theorem which asserts that the phase space of an integrable system is almost everywhere foliated into invariant tori\footnote{In this paper, we mostly deal with integrable systems given by holomorphic functions on complex manifolds, so, formally speaking, there are no tori. However, the systems we deal with have a property of being \textit{algebraically completely integrable}, so that each of their regular invariant sets can be identified with an open subset in a certain Abelian variety, which is already a torus. }. This description breaks down on the singular set, that is the subset of the phase space where the first integrals become dependent.
%
%
Though the set of such points is of measure zero, these are singularities which mainly determine the global topology of the system. Furthermore, the most remarkable solutions, such as fixed points, stable periodic trajectories, and heteroclinic connections, belong to the singular set. Apart from this, singularities also arise in problems of quantization \nolinebreak \cite{MirHam, PVN2}, nearly-integrable systems \nolinebreak\cite{zung1996kolmogorov, Dullin}, and mirror symmetry \nolinebreak \cite{castano2010symmetries}. \par
Although singularities have been studied for a long time, their relation to algebro-geometric description of integrable systems seems to be not well understood.  We note that singularities of integrable systems can be, in principle, described by means of straightforward computations using explicit formulas for commuting Hamiltonians\footnote{We should mention that there also exists another approach to singularities of integrable systems based on the notion of a \textit{bi-Hamiltonian structure}. See \cite{Bolsinov,biham,SBSn}.}. However, firstly, these computations are rather tedious even for low-dimensional systems and, secondly, they do not allow us to see the relation between singularities and algebraic geometry related to the problem. Since first integrals of most of the known integrable systems arise as coefficients of an algebraic curve equation, and the solutions of these systems are expressed in terms of theta functions related to that curve, it seems to be inconsistent to ignore algebraic geometry when studying singularities\footnote{We note that topology of integrable systems, with no relation to singularities, was studied from the algebro-geometric point of view by Audin and her collaborators \cite{Audin, audin2}. See also our work \cite{LaxStab} where we discuss the relation between algebraic geometry and stability of solutions of integrable systems. }.
In this paper, we show that singularities naturally fit it the classical algebro-geometric scheme of solving Lax equations.\par \smallskip
Let us get down to the details. Let $m,n \in \mathbb{N}^*$ be positive integers, and let $J \in \gl(n,\Complex)$ be a fixed matrix with distinct eigenvalues. Consider the space \begin{equation*}\mathcal L_m^J (\gl(n,\Complex))= \left\{  \sum_{i=0}^{m} L_i\lambda^i \mid  L_i \in \gl(n,\Complex), \,L_m = J \right\} 
\end{equation*} 
of matrix-valued polynomials of degree $m$ with a fixed leading term $J$. It is well known that this space has a structure of a Poisson manifold. The Poisson structure on $\mathcal L_m^J (\gl(n,\Complex))$ is related to the decomposition of the loop algebra $\gl(n,\Complex) \otimes \Complex[\lambda, \lambda^{-1}] $ into a sum of two subalgebras  \cite{Reyman1,Reyman2}.
 The Poisson bracket turns the space of holomorphic functions on $\mathcal L_m^J (\gl(n,\Complex))$ into a Lie algebra. This Lie algebra has a natural large commutative subalgebra.
Namely, let $\psi \in \Complex[\mu, \lambda^{-1}]$ be a polynomial in $\mu$ and $\lambda^{-1}$. Define a holomorphic function $H_\psi \colon \mathcal L_m^J (\gl(n,\Complex)) \to \Complex$ by the following formula:
\begin{align}\label{HamI}
H_{\psi}(L)= \Res_{\lambda = 0}\,\lambda^{-1}\,\Tr\psi(L(\lambda), \lambda^{-1}).
\end{align}
Then for each $\psi_1, \psi_2 \in \Complex[\mu, \lambda^{-1}]$ we have $\{H_{\psi_1}, H_{\psi_2}\} = 0$, so that the space $$\pazocal F = \{H_\psi \mid \psi \in \Complex[\mu, \lambda^{-1}] \} $$ is a Poisson-commutative subalgebra of the space of holomorphic functions on $\mathcal L_m^J (\gl(n,\Complex))$. Moreover, $\pazocal F$ is an \textit{integrable system}, which means that the space $$\diff \pazocal F(L) = \{ \diff H(L) \mid H \in \pazocal F\}$$ is maximal isotropic at almost every point $L \in \mathcal L_m^J (\gl(n,\Complex)) $, see \cite{MF} for the $m=1$ case, in which this construction coincides with the so-called \textit{argument shift method}, and \cite{Reyman1,Reyman2, adler2} for the general case. \par 
In order to relate the above definition of integrability with the classical one, let us consider the functions
$$
H_{jk}(L)= 
\Res_{\lambda = 0}\,\lambda^{-1}\,\Tr \lambda^{-j}L(\lambda)^k
$$
where $1 \leq k \leq n$ and $0 \leq j < mk$. It is claimed that the functions $H_{jk}$ Poisson commute and are independent almost everywhere. Among the Hamiltonians $H_{jk}$, there are $mn$ Casimir functions, and the number of the remaining functions equals
$
\frac{1}{2}{mn(n-1)} $,
that is exactly one half of the dimension of a generic symplectic leaf. Therefore, Hamiltonian flows generated by each of the functions $H_{jk}$ are completely integrable in the Liouville sense\footnote{We note that some definitions of Liouville integrability include the requirement of completeness of Hamiltonian flows. In our case, this requirement is not satisfied.}. Note that the functions $H_{jk}$ are ``generators'' of the family $\F$, which means that each function $H_\psi \in \F$ is a function of $H_{jk}$'s.\par

For each $H_\psi \in \F$, the  Hamiltonian flow corresponding to $H_\psi$ have the Lax form
\begin{align}\label{loopI}
\diffXp{t}L(\lambda) = [L(\lambda), \phi(L(\lambda), \lambda^{-1})_+]
\end{align}
where $\phi = \diffFXi{\psi}{\mu}$ and $(\,\dots)_+$ denotes the sum of the terms of positive degree. \par As it is mentioned above, with each matrix polynomial $L \in \mathcal L_m^J$, we can associate the curve \nolinebreak \eqref{SpectralCurve} called the spectral curve. For each fixed curve $C$, the isospectral set
\begin{align}\label{isospec}
\pazocal S_C =  \{L \in \mathcal L_m^J (\gl(n,\Complex)) \mid  \Curve(L) = C \}
\end{align}
is preserved by each of the flows \eqref{loopI}. As it is easy to see, coefficients of the spectral curve equation are linear combinations of $H_{jk}$'s and vice versa, so $\pazocal S_C $ coincides with a common level set of Hamiltonians $H_{jk}$. The fibration
$
\mathcal L_m^J = \bigsqcup \pazocal S_C,
$
 where $C$ varies in the set of affine algebraic curves, is a \textit{singular Lagrangian fibration}. A fiber $ \pazocal S_C$ is called \textit{regular} if each point $L \in S_C$ is non-singular for the integrable system $\F$, i.e. if the Hamiltonians $H_{jk}$ are independent at each point $L \in \pazocal S_C $ (or, which is the same, the space $\diff \F(L)$ is maximal isotropic). Each regular fiber $\pazocal S_C$ is smooth, moreover it is a Lagrangian submanifold in the ambient symplectic leaf of $\mathcal L_m^J$. Fibers which are not regular are called \textit{singular}.
 \par
 As is well known, if the curve $C$ is non-singular, then the fiber $ \pazocal S_C$ is non-singular as well. Furthermore, in this case $ \pazocal S_C$ can be explicitly described as an open dense subset in the total space of a principal $(\Complex^*)^{n-1}$-bundle over the Jacobian of $C$, and the mapping $ \pazocal S_C \to \Jac(C)$ linearizes each of the flows \eqref{loopI}.\par
If the curve $C$ is singular, then some points $L \in \pazocal S_C$ may become singular, which means that the differentials of the Hamiltonians $H_{jk}$ become dependent.
The goal of this paper is to describe singularities arising on $ \pazocal S_C$ when the curve $C$ is nodal.\par\smallskip

The first part of the paper (Section 1) is devoted to the description of the set $\pazocal S_\Curve$ itself. Namely, we show that if $C$ is a nodal, possibly reducible curve, then $\pazocal S_\Curve$ is subdivided into natural smooth strata indexed by partial normalizations of $C$ and integer points in a certain convex polytope. For each stratum, there is a map to the generalized Jacobian of the corresponding partial normalization, and the image of each of the flows \eqref{loopI} under this map is a linear flow. Main result of this part of the paper is Theorem \ref{thm1} (see Section \ref{DSC}).

 \par \smallskip In the second part of the paper (Section 2), we prove that if the spectral curve $C$ is nodal, then all singular points on $\pazocal S_\Curve$ are non-degenerate. Non-degenerate singularities of integrable systems are in a sense analogous to Morse singular points of smooth functions. In the complex case, all non-degenerate singularities of the same rank are locally symplectomorphic to each other. In the real case, each non-degenerate singularity can be represented as a product of three basic singularities: elliptic, hyperbolic, and focus-focus. In the same way, there are three kinds of nodal points of \textit{real} algebraic curves: acnodes (isolated points in the real part of the curve), crunodes (double points in the real part), and nodes which do not belong to the real part. 
In the case when the system under consideration is real\footnote{We note that the real version of the integrable system $\F$ discussed above is constructed in exactly the same way. Its phase space is $\mathcal L_m^J (\gl(n,\R))= \left\{  \sum_{i=0}^{m} L_i\lambda^i \mid  L_i \in \gl(n,\R), \,L_m = J \right\}. 
$
The Hamiltonians are of the same form \eqref{HamI} where the polynomial $\psi$ is real, and the corresponding Hamiltonian flows have the form \nolinebreak \eqref{loopI} where $\phi$ is also real. For each $L \in \mathcal L_m^J (\gl(n,\R))$, the associated spectral curve $C(L)$ is a \textit{real algebraic curve}, i.e. an affine algebraic curve over $\Complex$ endowed with an antiholomorphic involution $(\lambda, \mu) \to (\bar \lambda, \bar \mu)$.}, we show that acnodes, crunodes, and complex nodes in the spectral curve correspond to elliptic, hyperbolic, and focus-focus singularities respectively. Main results of this part are Theorems \ref{rkFormula} and \ref{NDT} (see Section \nolinebreak \ref{ncns}).



\par\smallskip

Let us make one important remark.
The Hamiltonians \eqref{HamI} and equations \eqref{loopI} may seem to be of a rather special form. Nevertheless, it turns out that almost all known finite-dimensional integrable systems can be written in this form (see \cite{ReimanRev} and references therein), so the construction discussed is quite universal. However, in order to obtain physically interesting examples, we need to pass to a certain subspace $ \mathcal L' \subset  \mathcal L_m^J (\gl(n,\Complex))$. There are many natural subspaces $  \mathcal L'  \subset  \mathcal L_m^J (\gl(n,\Complex))$ which are Poisson manifolds, and if we pick those flows  \eqref{loopI}  which leave the subspace $  \mathcal L' $ invariant, we obtain an integrable hierarchy\footnote{For example, consider the subspace $\mathcal   L'  = \mathcal L_m^J (\so(n,\R)) \subset \mathcal L_m^J (\gl(n,\R)) $ which consists of those polynomials $L(\lambda)$ which satisfy
$
L(\lambda)^t = -L(\lambda).
$
As it is easy to see, the space $\mathcal L_m^J (\so(n,\R))$ is invariant with respect to the flow  \eqref{loopI}  if and only if the polynomial $\phi$ is real and odd in the variable $\mu$. Considering the flows  \nolinebreak\eqref{loopI}  for all such polynomials $\phi$, we obtain a completely integrable system on  
$\mathcal L_m^J (\so(n,\R))$. In particular, taking $m=2$ and $n=3$, we obtain the Lagrange top \cite{ratiu1982lagrange}.} on \nolinebreak $  \mathcal L' $.
In this paper, we only consider the hierarchy $\F$ either on the whole space $\mathcal L_m^J (\gl(n,\Complex))$, or on its real counterpart $\mathcal L_m^J (\gl(n,\R))$. However, almost all of our results, in particular Theorems \ref{rkFormula} and \ref{NDT}, can be extended to restricted systems in a more or less straightforward way.
 In particular, we claim that  singularities of such classical integrable systems as Euler, Lagrange and Kovalevskaya tops, spherical pendulum, geodesic flow on ellipsoid etc., as well as their multidimensional generalizations, can be described using our approach.

\section{Singular spectral curves, generalized Jacobians, and convex polytopes }\label{sec1}
\subsection{Description of the set $\pazocal S_C$}\label{DSC}
In this section, we assume that  $C$ is a nodal curve and describe the set $\pazocal S_C$ defined by \nolinebreak \eqref{isospec}. {We note that the simplicity of the spectrum of the leading term $J$ implies that the spectral curve $C$ is necessarily reduced, i.e. its defining polynomial has no multiple factors.}\par
It is clear that the curve $C$ should satisfy some additional assumptions in order for the set $\pazocal S_C$  to be non-empty. Namely, let $\mathcal C_{spec} $ be the set of plane affine algebraic curves in with defining polynomial $\chi(\lambda, \mu)$ satisfying
\begin{equation*}
\lim_{z \to 0}\left(\chi \left(\frac{1}{z}, \frac{w}{z^m}\right)z^{nm} \right) =
 \det(J - w\E).
 \end{equation*}




Clearly, for $\pazocal S_C$ to be non-empty, we should have $C \in  \mathcal C_{spec}$.
So, in what follows, we consider the set  $\pazocal S_C$ only for $C \in  \mathcal C_{spec}$.\par
First, assume that the spectral curve $C$ is non-singular. The description of the set $\pazocal S_C$ in this case is well-known. Namely, consider the Riemann surface $X$ which is obtained from the spectral curve $C$ by adding points at infinity. Let $$\PGL(\Complex, J) =\{ R \in \PGL(n,\Complex) \mid RJ = JR\}.$$ The set $\pazocal S_C$ carries the natural action of $\PGL(\Complex,J)$ by conjugation. This action is free and preserves each of the flows \eqref{loopI}. Let $\hat{\pazocal S}_C = \quo{\pazocal  S_C}{\PGL(\Complex, J)}.$ Then, as shown in \cite{mvm, adler, Reiman2}, there exists a biholomorphic map
$$
\hat \Phi \colon \hat{\pazocal S}_C \to \Pic_{g + n -1}(X) \setminus (\Theta_{g-1} + [D_\infty])
$$
where $g$ is the genus of $X$, $\Theta_{g-1} \subset \Pic_{g  -1}(X)$ is the theta divisor, and $D_\infty$ is the pole divisor of $\lambda$.  Furthermore, the image of the flow \eqref{loopI} under the map $\hat \Phi$ is a linear flow given by
\begin{align}\label{velocity}
\omega\left(\diffFXYp{\xi}{t}\right) = \sum_{P: \,\lambda(P) = \infty} \Res_{P}\, \phi\omega
\end{align}
where $ \xi \in \Pic_{g+n-1}(X)$, $\omega \in \Omega^1(X)$, and the cotangent space to  $\Pic_{g+n-1}(X)$ is identified with the space $ \Omega^1(X)
$ of holomorphic differentials on $X$.
\par
The set $\pazocal S_C$ itself has a structure of a holomorphic principal $\PGL(\Complex,J)$-bundle over $\hat{\pazocal S}_C$. The structure of this bundle is described in \cite{Gavrilov2}. Let $\infty_1, \dots, \infty_n$ be the poles of $\lambda$, and let $ X'$ be the curve obtained from $X$ by identifying $\infty_1 \sim \dots \sim \infty_n$. Then there exists a biholomorphic map 
$$
\Phi \colon {\pazocal S}_C \to \Pic_{g + n -1}(X') \setminus \pi^{-1}(\Theta_{g-1} + [D_\infty])
$$
where $\pi$ is the natural projection $ \Pic_{g+n-1}(X') \to  \Pic_{g+n-1}(X)$. The projection $\pi$ defines a principal bundle structure on $\Pic_{g+n-1}(X')$, and the following diagram commutes:\begin{align*}
       				 \begin{CD}
           			 	\pazocal S_C@> \Phi  >> \Pic_{g+n-1}(X')  \\
            				@VV   V  @VV  \pi V \\
          				 \hat{\pazocal S}_C  @> \hat \Phi>> \Pic_{g+n-1}(X)
       			 \end{CD}
  \end{align*}
 The image of the flow \eqref{loopI} under the map $\Phi$ is given by the same formula \eqref{velocity} where $\omega$ is now not necessarily holomorphic, but may have poles of the first order at points $\infty_1, \dots , \infty_n$.\par
The goal of this part of the paper is to extend these results to the case when the spectral curve is nodal and possibly reducible. Recall that a singular point $(\lambda,\mu)$ of a plane affine algebraic curve $\{ \lambda, \mu \in \Complex^2 \mid p(\lambda,\mu) = 0\}$ is called a node, or an ordinary double point, if the Hessian $\diff^2 p(\lambda,\mu)$ is non-degenerate. A plane algebraic curve is called nodal if all its singular points are nodes. See Section \ref{nc} for details on nodal curves.
\par
Let $C$ be a nodal, possibly reducible curve, and let $\mathrm{Sing}\, C$ be the set of its nodes.  Let \nolinebreak $L \in \pazocal S_C$. As is well known (see \cite{babelon}, Chapter 5.2), for all points $(\lambda, \mu) \in C \setminus \Sing C$, we have $$\dim \Ker(L(\lambda)-\mu \E) = 1.$$ For any $(\lambda, \mu) \in \Sing C$, there are two possibilities:  
 $$ \mbox{}\quad \dim \Ker(L(\lambda)-\mu \E) = 1 \quad \mbox{or}\quad\dim\Ker(L(\lambda)-\mu \E) = 2.$$
This dichotomy gives rise to a natural stratification of $\pazocal S_C$. For $L \in \pazocal S_C$, let
$$
K(L) = \{ (\lambda, \mu) \in \Sing C(L) \mid \dim \Ker(L(\lambda)-\mu \E) = 2\}.
$$
 Let $K \subset \Sing C$, and let
$$\pazocal S_{C}^K = \{L \in \pazocal S_C \mid K(L) = K \}.$$
The set $\pazocal S_{C}^K$ is a quasi-affine variety. 
Clearly, we have
\begin{align}\label{scstrat}
\pazocal S_{C}=\bigsqcup_{K \subset \Sing C}\pazocal S_{C}^K
\end{align}
where the union is disjoint in set-theoretical, not in topological sense.
Stratification \eqref{scstrat} is preserved by each of the flows \eqref{loopI}, that is all these flows leave $\pazocal S_{C}^K$ invariant for each $K$. Below, we give a geometric characterization of the sets $\pazocal S_{C}^K$.\par
Let $X_K$ be the curve which is obtained from $C$ by adding points at infinity and blowing up at the points of $K$, and let $X'_K$ be the curve obtained from $X_{K}$ by identifying $\infty_1 \sim \dots \sim \infty_n$. Our first result is that  $\pazocal S_{C}^K$ is biholomorphic to an open subset in the disjoint union of $r$ copies of the generalized Jacobian of $X'_{K}$ where $r$ is the number of integer points in a certain polytope constructed from the curve  $X'_{K}$.  Let us describe the construction of this polytope.\par
 Let $Y$ be a curve, and let $Y_1, \dots, Y_k$ be its irreducible components. A \textit{multidegree} on $Y$ is a mapping $d \colon \{ Y_1, \dots, Y_k\} \to \Z$. The total degree of a multidegree $d$ is the number $|d| = \sum_{i=1}^k d(Y_i).$ For each $I \subset \{ 1,\dots, k\}$, define the \textit{subcurve}  $Y_I \subset Y$ as the union $\bigcup_{i \in I}Y_i$. For each subcurve $Y_I \subset Y $ we can restrict $d$ on $Y_I$ and get a multidegree $d_{I}$ on $Y_I$ (see also Section \ref{pnsc}). \par
Let $d$ a multidegree on $Y$ of total degree $g(Y)$ where $g(Y)$ is the arithmetic genus of $Y$ (see Section \ref{nc}). We say that $d$ is \textit{uniform} if for each subcurve $Y_I \subset Y$ we have
$$
\lvert d_{I} \rvert \geq g(Y_I).
$$
The set of uniform multidegrees is non-empty and coincides with the set of integer points in the convex polytope
\begin{align}\label{polytope}
P = \left\{ x \in \R^k : \sum_{i=1}^k  x_i = g(Y); \,\, \sum_{i \in I} x_i \geq g\left(Y_I\right) \,\, \forall \,\, I \subset \{ 1, \dots, k\} \right\}.
\end{align}
Let $\Delta_K$ be the set of uniform multidegrees on $X'_K$. 
\begin{theorem}\label{thm1}
Assume that $C \in \mathcal C_{spec}$ is a nodal curve, and let $K \subset \Sing C$. Then:
\begin{enumerate}
\item
The set $\pazocal S_{C}^K$ is a complex analytic manifold of dimension
$$
\dim \pazocal S_{C}^K = g(X'_K) = \frac{mn(n-1)}{2}  - |K|.
$$
\item There exists a biholomorphic map $$\Phi \colon  \pazocal S_{C}^K \to  \bigsqcup_{d  \in \Delta_K }\left(\Pic_{d}(X'_{K}) \setminus \Upsilon_d \right) $$
where $ \Delta_K$ is the set of uniform multidegrees on $X'_K$, and $\Upsilon_d \subset \Pic_{d}(X'_{K})$ is a subset of positive codimension.
\item
The image of the flow \eqref{loopI} under the mapping $\Phi$ is given by the formula \eqref{velocity} where $\omega$ is any regular differential on $X'_{K}$.
\item  The flows \eqref{loopI} span the tangent space to $\pazocal S_{C}^K$ at every point.
\end{enumerate}
\end{theorem}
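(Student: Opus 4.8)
The plan is to extend the eigenvector (Krichever) construction recalled above for smooth $C$ to the nodal case. First I would associate to $L \in \pazocal S_C^K$ the subsheaf $\mathcal E_L \subset \mathcal O_C^{\oplus n}$ whose fibre over $(\lambda,\mu)$ is $\Ker(L(\lambda)-\mu\E)$; away from $\Sing C$ and over the nodes outside $K$ this is a line bundle on $C$, while over a node $p \in K$ the kernel is two-dimensional and, pulled back to the blow-up, splits into one line on each branch, so on the partial normalization $X_K$ it becomes a genuine line bundle. Adding the points at infinity --- where the asymptotics are controlled, exactly as in the smooth case, by $\det(J - w\E)$ and the simplicity of the spectrum of $J$ --- and then applying the standard framing that identifies $\infty_1 \sim \dots \sim \infty_n$ yields a line bundle $\wave{\mathcal E}_L$ on $X'_K$. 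The key structural point is that its multidegree $d$ must be \emph{uniform}: for every subcurve $Y_I \subset X'_K$ the restriction $\wave{\mathcal E}_L|_{Y_I}$ is a rank-one subsheaf of a trivial bundle, and a Riemann--Roch / cohomology-vanishing estimate then forces $\lvert d_I\rvert \geq g(Y_I)$. This is exactly what brings the polytope $P$ into the picture, and it is the genuinely new ingredient compared with the classical irreducible situation.

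Conversely, to invert $\Phi$ I would start from a line bundle $\mathcal L$ of uniform multidegree $d$ on $X'_K$ lying outside a certain bad locus $\Upsilon_d$, push it forward along the $\lambda$-projection to $\CP^1$ (after compactification), and use uniformity to prove that the resulting rank-$n$ sheaf on $\CP^1$ is balanced, hence trivial up to twist; the pushforward of multiplication by $\mu$ is then a matrix polynomial whose behaviour at infinity pins down its leading term as $J$. Uniformity is precisely the condition that forbids a destabilizing line subbundle of positive degree, and $\Upsilon_d$ is the jump locus of the relevant $H^0$, on which $\codim \Upsilon_d \geq 1$. Checking that the two constructions are mutually inverse and holomorphic is then routine, which gives (1) and (2); the dimension formula follows from $\dim \Pic_d(X'_K) = g(X'_K) = h^1(\mathcal O_{X'_K})$, together with $g(X'_K) = g(X'_\emptyset) - |K|$ and the value $g(X'_\emptyset) = \tfrac12 mn(n-1)$ inherited from the smooth case.

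For (3) I would run the classical computation: normalizing the eigenvector $v$ of $L(\lambda)$ in a $t$-independent way, along \eqref{loopI} one has $\dot v = -\phi(L(\lambda), \lambda^{-1})_+\, v + (\mathrm{scalar})\, v$; the zero divisor of $v$ on $X'_K$ represents $\Phi(L)$, and differentiating it and applying the residue theorem on $X'_K$ gives, for every regular differential $\omega$, the pairing $\sum_{\lambda(P) = \infty} \Res_P \phi\omega$, which is \eqref{velocity}. Since the dualizing sheaf and residues behave functorially under the partial normalization, this transports the smooth-case argument essentially verbatim. Statement (4) is then the dual assertion that these velocity covectors span $T\Pic_d(X'_K) \cong H^1(X'_K, \mathcal O)$: if a regular differential $\omega$ on $X'_K$ satisfies $\sum_{\lambda(P)=\infty}\Res_P(\phi\omega) = 0$ for all $\phi = \diffFXi{\psi}{\mu}$, $\psi \in \Complex[\mu, \lambda^{-1}]$, then choosing $\psi$ so as to realize arbitrary principal parts at the points over $\lambda = \infty$ forces $\omega$ to be a global regular differential with vanishing incompatible with Riemann--Roch, hence $\omega = 0$.

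I expect the genuine obstacle to lie in the uniformity dichotomy of the first two steps --- proving that the multidegrees realized by matrix polynomials with leading term $J$ are exactly the uniform ones. In one direction this is the subsheaf-degree estimate on every subcurve; in the other it is the balancedness of the pushforward to $\CP^1$, where the reducibility of $C$ genuinely matters and the combinatorics of $P$ has to be controlled. By comparison, the linearization and spanning statements (3)--(4) amount to the classical Krichever computations carried through the normalization map and the dualizing sheaf, and should present no essential difficulty.
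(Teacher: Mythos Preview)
Your overall architecture matches the paper's: build the eigenvector line bundle on the partial normalization, show its divisor class lies in $\Pic_d(X'_K)$ for a uniform $d$, invert via sections, then run the classical residue computation for the flow. The paper, however, carries this out in explicitly classical divisor language rather than the sheaf-theoretic language you use. Concretely, the paper normalizes the eigenvector to a tuple of meromorphic functions $h_\alpha = (h_\alpha^1,\dots,h_\alpha^n)$, sets $D_\alpha = (h_\alpha)_\infty$, and defines $\Phi(L)=[D_\alpha - (f_\alpha)]_{\Sigma'}$; the inverse is built by choosing $h^i \in \mathrm L(D-D_\infty+\infty_i,\Sigma)$ and setting $L(a)=H(a)M(a)H(a)^{-1}$. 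Your pushforward/balancedness description is the Beauville reformulation of the same thing, and for items (3) and (4) your sketch is essentially what the paper does (formula \eqref{velocity} via the residue theorem, and spanning by showing the pairing $\langle\phi,\omega\rangle_\infty=\sum_j\Res_{\infty_j}\phi\omega$ has trivial right radical).

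There is one genuine gap. Your argument for uniformity reads: ``$\wave{\mathcal E}_L|_{Y_I}$ is a rank-one subsheaf of a trivial bundle, and a Riemann--Roch / cohomology-vanishing estimate then forces $|d_I|\geq g(Y_I)$.'' But being a subsheaf of a trivial bundle gives an \emph{upper} bound on the degree, not the lower bound you need; indeed, the kernel line bundle has negative multidegree, while the line bundle whose class you actually record in $\Pic(X'_K)$ has total degree $g(X'_K)>0$ and is certainly not a subsheaf of $\mathcal O^n$. The paper's proof of uniformity is not cohomological at all: for a subcurve $X_I$ with $q=\deg\lambda|_{X_I}$ points at infinity, one projects the eigenvectors to $\Complex^q$, forms $r_I(a)=\det^2\bigl(pr(h_\alpha(P_1)),\dots,pr(h_\alpha(P_q))\bigr)$ over $\lambda^{-1}(a)\cap X_I$, observes $r_I(\infty)\neq 0$ so $r_I\not\equiv 0$, and then obtains $|\deg D_\alpha|_{X_I}|\geq \tfrac12\deg(r_I)_0\geq g(X_I,\Sigma'_I)$ by local zero-counting at branch points and at the two kinds of nodes (Propositions \ref{loc1}--\ref{loc3}). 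If you want to stay sheaf-theoretic, you will have to replace the ``subsheaf of trivial'' heuristic by an honest argument---for instance, by showing that the restricted eigenvector map $X_I\to\CP^{q-1}$ is nondegenerate (this is exactly $r_I\not\equiv 0$) and then computing the degree of the pulled-back hyperplane class. You correctly flagged this step as the obstacle; as written, the sketch does not cross it.
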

\begin{remark} We note that Theorem \ref{thm1} remains true over the reals. Namely, the set $$\Ree \pazocal S_{C}^K =  \pazocal S_{C}^K \cap \mathcal L_m^J (\gl(n,\R))$$ is a smooth manifold diffeomorphic to an open subset in the real part of  $\bigsqcup_{d  \in \Delta_k }\Pic_{d}(X'_{K}) $.
%
\end{remark}
\begin{remark}
Regular differentials on $X'_K$ can be viewed as meromorphic differentials on its non-singular compact model $X$ with special properties. Namely, a meromorphic differential $\omega$ on $X$ is regular on  $X'_K$ if all poles of $\omega$ are simple, and for each $Q \in X'_K$, we have
$$
\sum_{P: \pi(P) = Q} \Res_P \,\omega = 0
$$
where $\pi \colon X \to X'_K$ is the normalization map.
See Section \ref{nc} for details on regular differentials on curves. 
\end{remark}
\begin{consequence}\label{corIrCom}
Assume that $C \in \mathcal C_{spec}$ is a nodal curve. Then:
\begin{enumerate}
\item  The dimension of $\pazocal S_{C}$ is equal to $\frac{1}{2}mn(n-1)$.
\item  The number of irreducible components of $\pazocal S_{C}$ is equal to the number of uniform multidegrees on the curve $X'_\emptyset$.
\end{enumerate}
\end{consequence}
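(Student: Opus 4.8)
The plan is to derive both assertions from Theorem \ref{thm1} applied to the open stratum $K = \emptyset$, together with the stratification \eqref{scstrat} and one elementary extra input: a dimension bound for $\pazocal S_C$ coming from its description as a level set of the Hamiltonians $H_{jk}$.

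For part (1) I would argue as follows. By \eqref{scstrat} we have $\dim \pazocal S_C = \max_{K \subseteq \Sing C} \dim \pazocal S_C^K$, and Theorem \ref{thm1}(1) evaluates each term as $\dim \pazocal S_C^K = \frac{1}{2}mn(n-1) - |K|$, which is maximal precisely for $K = \emptyset$. It only remains to see that the top stratum is non-empty; but $\Delta_\emptyset$, the set of uniform multidegrees on $X'_\emptyset$, is non-empty by the discussion preceding Theorem \ref{thm1}, so Theorem \ref{thm1}(2) identifies $\pazocal S_C^\emptyset$ with a non-empty disjoint union of non-empty open subsets of Picard varieties. Hence $\pazocal S_C^\emptyset \neq \emptyset$ and $\dim \pazocal S_C = \frac{1}{2}mn(n-1)$.

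For part (2) the first observation is that $\pazocal S_C^\emptyset$ is Zariski open in $\pazocal S_C$: the node set $\Sing C$ is fixed and finite, and $\pazocal S_C \setminus \pazocal S_C^\emptyset = \bigcup_{(\lambda_0,\mu_0) \in \Sing C} \{ L \in \pazocal S_C : \rank (L(\lambda_0) - \mu_0 \E) \leq n-2 \}$ is a finite union of closed subvarieties. The crux — the one step that is not bookkeeping — is to show that every irreducible component of $\pazocal S_C$ has dimension exactly $\frac{1}{2}mn(n-1)$, so that none of them can be contained in the union of the lower strata $\bigsqcup_{K \neq \emptyset} \pazocal S_C^K$. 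For this I would use that $\pazocal S_C$ is the common zero locus, inside the affine space $\mathcal L_m^J(\gl(n,\Complex))$ of dimension $mn^2$, of the $\frac{1}{2}mn(n+1)$ polynomial functions $H_{jk} - \const$ (recall that the coefficients of the spectral curve equation and the $H_{jk}$ are linear combinations of one another); by Krull's principal ideal theorem every irreducible component of $\pazocal S_C$ then has dimension at least $mn^2 - \frac{1}{2}mn(n+1) = \frac{1}{2}mn(n-1)$, while Theorem \ref{thm1}(1) bounds it above by $\frac{1}{2}mn(n-1)$, with equality forcing $K = \emptyset$. Consequently each component $Z$ of $\pazocal S_C$ has dimension $\frac{1}{2}mn(n-1)$, and $Z \cap \pazocal S_C^\emptyset$ is a non-empty, dense, irreducible open subset of $Z$; taking closures then sets up a bijection between the irreducible components of $\pazocal S_C^\emptyset$ and those of $\pazocal S_C$.

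It then remains to count the components of $\pazocal S_C^\emptyset$. By Theorem \ref{thm1}(2) it is biholomorphic to $\bigsqcup_{d \in \Delta_\emptyset} (\Pic_d(X'_\emptyset) \setminus \Upsilon_d)$, a disjoint union of $|\Delta_\emptyset|$ pieces; each $\Pic_d(X'_\emptyset)$ is a torsor under the generalized Jacobian of $X'_\emptyset$, which is a connected commutative algebraic group and hence an irreducible variety, and removing the positive-codimension subset $\Upsilon_d$ leaves it irreducible. Thus $\pazocal S_C^\emptyset$ has exactly $|\Delta_\emptyset|$ irreducible components, and by the previous paragraph so does $\pazocal S_C$; since $|\Delta_\emptyset|$ is the number of uniform multidegrees on $X'_\emptyset$, this is part (2). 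I expect the only genuine obstacle to be the equidimensionality of $\pazocal S_C$, which rests on nothing more than Krull's theorem once one records that $\pazocal S_C$ is precisely the zero set of the functions $H_{jk} - \const$; everything else follows formally from Theorem \ref{thm1}.
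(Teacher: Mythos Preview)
Your argument is correct, and it is genuinely different from the paper's. The paper defers Assertion~2 to Section~\ref{ncns}: there it first proves that all singular points on $\pazocal S_C$ are non-degenerate (Theorem~\ref{NDT}), then invokes Vey's local normal form (Theorem~\ref{EliassonThm}) to conclude that every lower stratum $\pazocal S_C^{(p_2)}$ lies in the closure of every higher one $\pazocal S_C^{(p_1)}$ (Corollary~\ref{rkcor}), and reads off the component count from the case $p=0$. In other words, the paper's route to Corollary~\ref{corIrCom} passes through the entire symplectic machinery of Section~2.

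Your route avoids all of this: Krull's principal ideal theorem, applied to the $\tfrac{1}{2}mn(n+1)$ equations $H_{jk}=\mathrm{const}$ on the $mn^2$-dimensional affine space $\mathcal L_m^J$, forces every irreducible component of $\pazocal S_C$ to have dimension at least $\tfrac{1}{2}mn(n-1)$, hence to meet the top stratum $\pazocal S_C^\emptyset$; the rest is bookkeeping from Theorem~\ref{thm1}. This is more elementary and self-contained within Section~1. What the paper's approach buys in exchange is the stronger Corollary~\ref{rkcor}, valid for every corank level $p$: the sets $\pazocal S_C^{(p)}$ for $p>0$ are not cut out by a predictable number of equations, so your Krull argument does not immediately generalise, whereas the non-degeneracy plus normal-form argument does.
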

\nopagebreak
\begin{proof}
Assertion 1 is obvious. Assertion 2 is proved in Section \ref{ncns}.
\end{proof}
The proof of Theorem \ref{thm1} is given in Section \ref{proof1}. 
In Section \ref{ex1}, we consider an example. Namely, we take $m=1$ and discuss the set $\pazocal S_C^\emptyset$ when the spectral curve $C$ is $n$ straight lines in general position. In this case, the polytope \eqref{polytope} turns out to be the \textit{permutohedron} $P_n$. It turns out that solutions of \eqref{loopI} corresponding to $n!$ vertices of $P_n$ lie in Borel subalgebras containing the centralizer of $J$. These solutions are linear combinations of exponents, which means that if $d$ is a vertex of $P_n$, then $\Pic_{d}(X'_{\emptyset})$  is completely contained in the image of \nolinebreak $\Phi$, i.e. the exceptional set $\Upsilon_d$ is empty. For integer points in the interior of $P_n$, this is no longer so, and the solutions turn out to be rational functions of exponents.\par\smallskip

%



Let us also define $\hat{\pazocal S}_{C}^K$ as the quotient of $\pazocal S_{C}^K$ by the $\PGL(\Complex,J)$ action. We note that this action is no longer free, however there always exists a subgroup $H \subset \PGL(\Complex,J)$ such that $\PGL(\Complex,J) / H$ acts freely, and $H$ acts trivially.
\par
The following statement follows from Theorem \ref{thm1}.
\begin{consequence}
Assume that $C \in \mathcal C_{spec}$ is a nodal curve, and that $K \subset \Sing C$. Then:
\begin{enumerate}
\item
The set $\hat{\pazocal S}_{C}^{K}$ is a complex analytic manifold of dimension
$$
\dim \hat{\pazocal S}_{C}^{K} = g(X_K) = \frac{mn(n-1)}{2} - n - |K| + \dim \Hom_0(X_K).
$$
\item There exists a biholomorphic map 
$$\hat \Phi \colon  \hat{\pazocal S}_{C}^K \to  \bigsqcup_{d  \in \Delta_K }\left(\Pic_{d}(X_{K}) \setminus \hat \Upsilon_d \right) $$
where $ \Delta_K$ is the set of uniform multidegrees on $X'_K$ (not $X_K$!), and $\hat \Upsilon_d \subset \Pic_{d}(X_{K})$ is a subset of positive codimension

\item
The image of the flow \eqref{loopI} under the mapping $\hat \Phi$ is given by the formula \eqref{velocity} where $\omega$ is any regular differential on $X_{K}$.
\item  The flows \eqref{loopI} span the tangent space to $\hat{\pazocal S}_{C}^{K}$ at every point.
\item Let $\pi \colon \Pic(X'_{K}) \to \Pic(X_{K}) $ be the natural projection. Then the following diagram commutes:
\begin{align*}
       				 \begin{CD}
           			 	\pazocal S_{C}^K@> \Phi  >> \Pic(X'_K)  \\
            				@VV   V  @VV  \pi V \\
          				 \hat{\pazocal S}_{C}^{K}  @> \hat \Phi>> \Pic(X_K)
       			 \end{CD}
  \end{align*}
\end{enumerate}
\end{consequence}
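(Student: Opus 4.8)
The plan is to deduce the corollary from Theorem~\ref{thm1} by passing to the quotient by $G \eqdef \PGL(\Complex,J)$. Since $J$ is regular semisimple, its centralizer in $\GL(n,\Complex)$ is a maximal torus $(\Complex^*)^n$, so $G\cong(\Complex^*)^{n-1}$. On the algebro-geometric side, $X'_K$ is obtained from $X_K$ by identifying the $n$ points $\infty_1,\dots,\infty_n$ into a single point; by the theory of generalized Jacobians (Section~\ref{nc}) the natural map $\pi\colon\Pic_d(X'_K)\to\Pic_d(X_K)$ is then a Zariski-locally trivial principal bundle whose structure group is a torus of dimension $g(X'_K)-g(X_K)$, and an Euler-characteristic computation (using the genus formula of Section~\ref{nc}) gives $g(X'_K)-g(X_K)=n-\dim\Hom_0(X_K)$, where $\dim\Hom_0(X_K)$ is the number of connected components of $X_K$. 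Write $c\eqdef\dim\Hom_0(X_K)$, so that $\pi$ is a principal $(\Complex^*)^{n-c}$-bundle; the torus $(\Complex^*)^{n-c}$ is a quotient of $G$ with kernel of dimension $c-1$.

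The key step is that the biholomorphism $\Phi$ of Theorem~\ref{thm1} is $G$-equivariant, where $G$ acts on $\pazocal S_C^K$ by conjugation and on $\bigsqcup_d\bigl(\Pic_d(X'_K)\setminus\Upsilon_d\bigr)$ through the surjection $G\to(\Complex^*)^{n-c}$ onto the structure group of $\pi$. Unwinding the construction of $\Phi$ in Section~\ref{proof1}: $\Phi(L)$ is built from the sheaf of eigenvectors of $L(\lambda)-\mu\E$ together with its trivialization at $\infty_1,\dots,\infty_n$ coming from the eigenbasis of the leading term $J$. Conjugating $L$ by $R\in G$ multiplies the eigenvector at every point by $R$; near $\infty_i$ the eigenvector tends to the $i$-th eigenvector of $J$, so the chosen trivialization there gets rescaled by the $i$-th eigenvalue of $R$, which is precisely how $R$ acts as an element of the structure torus of $\pi$. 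Hence $\Phi(RLR^{-1})=R\cdot\Phi(L)$. Since $G$ is connected, $\Phi$ preserves multidegrees, and since $\pazocal S_C^K$ is $G$-invariant so is each $\Upsilon_d$; as $G$ acts transitively on the fibers of $\pi$, this forces $\Upsilon_d$ to be a union of fibers of $\pi$.

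Given the equivariance, the rest is formal. The stabilizer in $G$ of a point of the target is the kernel $H$ of $G\to(\Complex^*)^{n-c}$, so by equivariance $H$ is the common stabilizer on $\pazocal S_C^K$ and $G/H$ acts freely; this is the subgroup $H$ mentioned before the corollary, of dimension $c-1$. Taking quotients, $\Phi$ descends to a biholomorphism $\hat\Phi\colon\hat{\pazocal S}_C^K=\pazocal S_C^K/(G/H)\to\bigsqcup_d\bigl(\Pic_d(X'_K)\setminus\Upsilon_d\bigr)/(\Complex^*)^{n-c}=\bigsqcup_d\bigl(\Pic_d(X_K)\setminus\hat\Upsilon_d\bigr)$ with $\hat\Upsilon_d\eqdef\pi(\Upsilon_d)$, and $\hat\Upsilon_d$ has the same positive codimension as $\Upsilon_d$ because $\pi$ is a submersion; this gives (2), and the commutativity of the diagram in (5) is built into the construction. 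For (1), $\dim\hat{\pazocal S}_C^K=\dim\pazocal S_C^K-\dim(G/H)=g(X'_K)-(n-c)=g(X_K)$, and the value $\tfrac{mn(n-1)}{2}-n-|K|+\dim\Hom_0(X_K)$ follows on substituting $g(X'_K)=\tfrac{mn(n-1)}{2}-|K|$ from Theorem~\ref{thm1}. Finally, the flows \eqref{loopI} are equivariant under conjugation by constant matrices, hence descend to $\hat{\pazocal S}_C^K$; as $\pazocal S_C^K\to\hat{\pazocal S}_C^K$ is a submersion, the images of the flows still span the tangent space, proving (4); and dually, along the cotangent inclusion $T^*\Pic_d(X_K)\hookrightarrow T^*\Pic_d(X'_K)$ --- that is, regular differentials on $X_K$ sitting inside regular differentials on $X'_K$ as those holomorphic at the $\infty_i$ --- the velocity formula \eqref{velocity} of Theorem~\ref{thm1}(3) restricts to the velocity formula on $\Pic_d(X_K)$, proving (3).

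The step I expect to be the main obstacle is the $G$-equivariance of $\Phi$: it requires tracking how the framing-at-infinity data entering the definition of $\Phi$ transforms under conjugation and, in particular, identifying the $\PGL(\Complex,J)$-action with the structure-group action of the generalized-Jacobian bundle $\pi$. Once that is in place, the descent, the dimension count, and the behaviour of the flows are all routine.
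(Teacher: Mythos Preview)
The paper gives no explicit proof of this corollary; it simply states that ``the following statement follows from Theorem~\ref{thm1}'' and leaves the details to the reader. Your approach---establishing $\PGL(\Complex,J)$-equivariance of $\Phi$ and then descending to the quotient---is exactly the natural way to fill in those details, and your argument is correct, including the identification of the structure group of $\pi$ with a quotient of $G$ and the resulting dimension count.
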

\begin{remark}
We note that it is also possible to describe the set $\Delta_K$ in terms of the curve $X_K$ itself.\par
Let $Y$ be a nodal curve, and let $d$ a multidegree on $Y$ such that $|d| = g(Y) - \dim \Hom_0(Y)$. Then $d$ is called \textit{semistable} if for each subcurve $Y_I \subset Y$ we have
\begin{align}\label{semistab}
\lvert d_{I} \rvert \geq g(Y_I) -  \dim \Hom_0(Y_I).
\end{align}
A condition equivalent to \eqref{semistab} first appeared in \cite{beauville}. The term \textit{semistable multidegree} is suggested in \cite{Alexeev}.\\ 
As it is to see (see Proposition \ref{uss}), we have
$$\Delta_K = \{ d \in \Z^k \mbox{ such that } d - \deg D_\infty \mbox{ is semistable} \}.$$
We note that the multidegree $\deg D_\infty$ has a transparent description. If the defining polynomial of $C$ is $\chi = \chi_1\cdot \ldots \cdot \chi_k$, then $\deg D_\infty = (\deg_{\!\mu}\, \chi_1, \dots, \deg_{\!\mu}\, \chi_k)$.\par
\end{remark}

\subsection{Nodal curves and generalized Jacobians}\label{nc}
The theory of generalized Jacobians is due to Rosenlicht \cite{Rosenlicht1, Rosenlicht}, see also Serre \cite{Serre}. 
 In this section, we present an elementary exposition of this theory for nodal, possibly reducible, curves. We note that all the presented results are well-known, at least in the irreducible case. As for the reducible case, we were not able to find some statements in the literature, in particular, Proposition \ref{density} concerning effective Weil divisors on reducible curves.
\subsubsection{Nodal curves and arithmetic genus}
\begin{definition}
A plane affine algebraic curve $\{ \lambda, \mu \in \Complex^2 \mid p(z,w) = 0\}$ is called a \textit{plane nodal curve} if $\det \diff^2 p \neq 0$ at all points where $\diff p = 0$.
\end{definition}
 Below, we give a more abstract definition of nodal curves.\par
Let $X = X_1 \sqcup \ldots  \sqcup X_r$  where $r\geq 1$ be a disjoint union of connected Riemann surfaces, 
and let $ \Sigma = \{ \pazocal P_1, \dots, \pazocal P_\sigma \} $ be a finite set of pairwise disjoint $2$-element subsets of $X$. Assume that $\pazocal P_i = \{ P_i^+, P_i^-\} $, and consider the topological space $X / \Sigma$ obtained from $X$ by identifying $ P_i^+$ with $P_i^-$  for each $1 \leq i \leq \sigma$.  
\par
Let $\pi \colon X \to X / \Sigma$ be the natural projection, and let $\mathrm{supp}( \Sigma) = \bigcup_{i=1}^{\sigma}\pazocal P_i$. A function $f \colon X / \Sigma \to \CPP$ is called meromorphic on $X / \Sigma $ if $\pi^*(f)$ is a meromorphic function on $X$ which does not have poles at the points of $\mathrm{supp}(\Sigma)$.\par
\begin{definition} The space $X / \Sigma$ endowed with the described ring of meromorphic functions\footnote{Formally speaking, to turn $X / \Sigma$ into a complex analytic space, we should have described its structure sheaf. However, we do not need it.} is called a \textit{nodal curve}.
\end{definition} Obviously, a plane nodal curve completed at infinity is a nodal curve. The converse is of course not true, i.e. not any nodal curve can be obtained in this way.\par
\par
 In what follows, we prefer to work with the non-singular curve $X$ endowed with the set \nolinebreak$\Sigma$ rather than with the singular curve $X / \Sigma$. The terminology which we introduce below may seem to be non-standard, however it is quite convenient in the situation when we need to work with different singularizations of the same Riemann surface simultaneously. We also note that our approach to singular curves is rather close to the original approach of Rosenlicht.
 \begin{definition}\label{srf}
 We say that a function {on $X$} is $\Sigma$\textit{-regular} if it is meromorphic and takes same finite values at $P_i^\pm$. 
  \end{definition}
 Obviously, $\Sigma$-regular functions on $X$ are in one-to-one correspondence with meromorphic functions on $X / \Sigma$. The ring of $\Sigma$-regular functions will be denoted by $\pazocal M(X, \Sigma)$:
$$
\pazocal M(X,\Sigma) = \{ f \in \pazocal M(X) : f(P_i^+)=f(P_i^-) \neq \infty \,\,\forall\,\, 1 \leq i \leq |\Sigma|)\},
$$
where  $\pazocal M(X)$ is the ring of functions meromorphic on $X$. \par
 \begin{definition}\label{srd}
A meromorphic differential $\omega$ on $X$ is $\Sigma$\textit{-regular} if all its poles are simple, contained in $\mathrm{supp}(\Sigma)$, and
\begin{align}\label{resSum}
\mathrm {res}_{P_i^+} \,\omega + \mathrm {res}_{P_i^-} \,\omega = 0 \quad \forall\,\,  1 \leq i \leq |\Sigma|.\end{align}
  \end{definition}
   We denote the space of $\Sigma$-regular differentials by $\Omega^1(X,\Sigma)$. 
 \begin{definition}The number \nolinebreak$\dim \Omega^1(X,\Sigma)$ is called the \textit{arithmetic genus} of $X / \Sigma$.\end{definition}

Let us denote the arithmetic genus by $g(X,\Sigma)$. We shall also use the notation $g(X)$ for the geometric genus of $X$ (that is the dimension of the space $\Omega^1(X)$ of holomorphic differentials on $X$), $c(X)$ for the number of connected components of $X$, or, which is the same, number of irreducible components of $X/\Sigma$, and $c(X,\Sigma)$ for the number of connected components of $X/\Sigma$. These notations are summarized in Table \ref{table1}. \par 
\begin{table}
{
\centerline{\begin{tabular}{c|c}
Notation & Meaning \\
\Xhline{2\arrayrulewidth}
 $g(X)$ &  $\vphantom{\displaystyle \int^a_b}\dim \Omega^1(X) = \frac{1}{2}\dim \Hom_1(X,\Complex)$, genus of $X$ \\
\hline
 $c(X)\vphantom{\displaystyle \int^a_b}$&$\dim \Hom_0(X,\Complex)$, number of irreducible components of $X/\Sigma$\\
 \hline
 $g(X,\Sigma)$ & $\vphantom{\displaystyle \int^a_b}\dim \Omega^1(X,\Sigma)$, arithmetic genus of $X/\Sigma$\\ 
 \hline
 $c(X,\Sigma)$ & $\vphantom{\displaystyle \int^a_b}\dim \Hom_0(X/\Sigma,\Complex)$, number of connected components of $X/\Sigma$\\ 
 \end{tabular}}
 }
 \caption{Notations}\label{table1}
\end{table}
To count the arithmetic genus, we need the notion of the dual graph of a nodal curve. This graph has $r = c(X)$ vertices and $\sigma = |\Sigma|$ edges. 
The vertices $v_1, \dots, v_r$ of this graph correspond to irreducible components $X_1, \dots, X_r$, and the edges $e_1, \dots, e_\sigma$ correspond to nodes $P_1^\pm, \dots, P_\sigma^\pm$: if $P_i^- \in X_j$, and $P_i^+ \in X_k$, then there is an oriented\footnote{Of course, if we rename $P_i^-$ to $P_i^+$ and vice versa, the orientation will be reversed. However, it is convenient to assume that the orientation is fixed.} edge going from $v_j$ to $v_k$. We denote the dual graph of $X/\Sigma$ by $\Gamma(X,\Sigma)$ or, when it does not cause confusion, just $\Gamma$. We note that $$\dim \Hom_0(\Gamma, \Complex) = c(X,\Sigma) = \dim \Hom_0(X/\Sigma, \Complex). $$\par
With each $\omega \in \Omega^1(X,\Sigma)$, we associate a $1$-chain in the dual graph:
\begin{align*}
 Z(\omega) = \sum\nolimits_{i=1}^{|\Sigma|} \left(\Res_{P_i^+} \,\omega \right)e_i.
\end{align*}
\begin{statement}
For each $\omega \in \Omega^1(X,\Sigma)$, the chain $Z(\omega)$ is a cycle.
\end{statement}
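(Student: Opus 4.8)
The plan is to show that $Z(\omega)$ has zero boundary as a $1$-chain in the dual graph $\Gamma(X,\Sigma)$, i.e.\ that at each vertex the signed sum of residues of the incident edges vanishes. First I would fix a vertex $v_j$ of $\Gamma$, corresponding to an irreducible component $X_j$ of $X$, and recall that $\partial e_i = v_k - v_j$ when the oriented edge $e_i$ runs from $v_j$ to $v_k$, i.e.\ when $P_i^- \in X_j$ and $P_i^+ \in X_k$. Thus the coefficient of $v_j$ in $\partial Z(\omega)$ is
\begin{align*}
(\partial Z(\omega))_{v_j} = \sum_{i:\, P_i^+ \in X_j} \Res_{P_i^+}\omega \;-\; \sum_{i:\, P_i^- \in X_j} \Res_{P_i^+}\omega .
\end{align*}
Using the residue relation \eqref{resSum}, namely $\Res_{P_i^+}\omega = -\Res_{P_i^-}\omega$, the second sum can be rewritten as $-\sum_{i:\, P_i^- \in X_j}(-\Res_{P_i^-}\omega) = \sum_{i:\,P_i^-\in X_j}\Res_{P_i^-}\omega$. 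Hence
\begin{align*}
(\partial Z(\omega))_{v_j} = \sum_{i:\, P_i^+ \in X_j}\Res_{P_i^+}\omega + \sum_{i:\, P_i^- \in X_j}\Res_{P_i^-}\omega = \sum_{P \in \mathrm{supp}(\Sigma)\cap X_j}\Res_P\omega .
\end{align*}

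The key step is then to observe that the right-hand side is exactly the sum of all residues of the restriction $\omega|_{X_j}$, a meromorphic differential on the \emph{compact} connected Riemann surface $X_j$ (here we use Definition \ref{srd}, which guarantees that every pole of $\omega$ is simple and lies in $\mathrm{supp}(\Sigma)$, so $\omega|_{X_j}$ has all its poles among $\mathrm{supp}(\Sigma)\cap X_j$). By the classical residue theorem on a compact Riemann surface, the sum of the residues of a meromorphic differential is zero, so $(\partial Z(\omega))_{v_j} = 0$. Since $v_j$ was arbitrary, $\partial Z(\omega) = 0$, i.e.\ $Z(\omega)$ is a cycle.

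The only point requiring a little care — and the closest thing to an obstacle — is bookkeeping the orientation conventions: one must be consistent about which of $P_i^{\pm}$ is the source and which the target of $e_i$, and correspondingly which sign appears in $\partial e_i$ and in $Z(\omega)$ (which is defined using $\Res_{P_i^+}$). The residue relation \eqref{resSum} is precisely what makes the two conventions compatible, turning the signed incidence sum at a vertex into the honest total-residue sum of $\omega|_{X_j}$; once this is set up correctly the argument is immediate from the residue theorem. No genus computation or properties of $\Sigma$-regular functions are needed here.
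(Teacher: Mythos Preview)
Your proof is correct and is essentially the argument the paper has in mind; the paper's own proof is the one-liner ``This follows from condition \eqref{resSum},'' and you have simply written out in full what that entails, namely that \eqref{resSum} converts the signed incidence sum at each vertex into the total residue of $\omega|_{X_j}$, which then vanishes by the residue theorem on the compact surface $X_j$.
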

\begin{proof}
This follows from condition \eqref{resSum}.
\end{proof}
The following is simple.
\begin{statement}\label{regDiff} \quad\par
\begin{enumerate}\item The mapping $Z \colon \quo{\Omega^1(X,\Sigma)}{\Omega^1(X)}\to \Hom_1(\Gamma(X,\Sigma), \Complex)$ is an isomorphism.
\item The arithmetic genus of a nodal curve is given by $$ g(X,\Sigma) = g(X) + \dim \Hom_1(\Gamma,\Complex) = g(X)+|\Sigma| + c(X,\Sigma) - c(X) .$$
\end{enumerate}
\end{statement}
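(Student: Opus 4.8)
The plan is to analyze the residue map $Z$ and show it fits into a short exact sequence that computes $g(X,\Sigma)$. First I would verify that $Z$ is well-defined on the quotient: if $\omega \in \Omega^1(X)$ is holomorphic, then it has no poles at all, so in particular all residues at points of $\mathrm{supp}(\Sigma)$ vanish and $Z(\omega) = 0$. Thus $Z$ descends to a linear map $\quo{\Omega^1(X,\Sigma)}{\Omega^1(X)} \to \Hom_1(\Gamma,\Complex)$; by the preceding proposition its image indeed lies in the cycle space. For \textbf{injectivity}, suppose $\omega \in \Omega^1(X,\Sigma)$ has $Z(\omega) = 0$, i.e. $\Res_{P_i^+}\omega = 0$ for all $i$; by the relation \eqref{resSum} also $\Res_{P_i^-}\omega = 0$. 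Since every pole of a $\Sigma$-regular differential is simple and lies in $\mathrm{supp}(\Sigma)$, a vanishing residue at a simple pole forces the differential to be regular there, so $\omega$ has no poles at all, hence $\omega \in \Omega^1(X)$ and its class is zero.

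For \textbf{surjectivity}, I would argue component by component on $\Gamma$. The cycle space $\Hom_1(\Gamma,\Complex)$ is spanned by (images of) cycles, and it suffices to realize, for each edge $e_i$ corresponding to the node $\pazocal P_i = \{P_i^+, P_i^-\}$ with $P_i^\pm$ lying on components $X_k, X_j$, a $\Sigma$-regular differential whose residue data is a prescribed cycle. Concretely: given any $1$-cycle $z = \sum c_i e_i$, on each connected component $X_s$ of $X$ the boundary condition that $z$ be a cycle means the signed sum of the $c_i$ over edges incident to $v_s$ is zero; this is precisely the condition (sum of residues equals zero on a compact Riemann surface) that guarantees the existence of a meromorphic differential on $X_s$ with simple poles exactly at the points $P_i^\pm \in X_s$ and residues the corresponding $\pm c_i$. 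Assembling these component-wise differentials gives $\omega \in \Omega^1(X,\Sigma)$ (condition \eqref{resSum} holds by the sign bookkeeping built into the orientation of edges) with $Z(\omega) = z$. This proves part 1.

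Part 2 is then immediate: from the exact sequence $0 \to \Omega^1(X) \to \Omega^1(X,\Sigma) \to \Hom_1(\Gamma,\Complex) \to 0$ we get
\begin{align*}
g(X,\Sigma) = \dim \Omega^1(X,\Sigma) = \dim \Omega^1(X) + \dim \Hom_1(\Gamma,\Complex) = g(X) + \dim \Hom_1(\Gamma,\Complex).
\end{align*}
The final equality follows from the Euler characteristic of the graph $\Gamma$, which has $c(X)$ vertices and $|\Sigma|$ edges: $\dim \Hom_0(\Gamma,\Complex) - \dim \Hom_1(\Gamma,\Complex) = c(X) - |\Sigma|$, and since $\dim \Hom_0(\Gamma,\Complex) = c(X,\Sigma)$ (noted in the text), we obtain $\dim \Hom_1(\Gamma,\Complex) = |\Sigma| + c(X,\Sigma) - c(X)$.

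The main obstacle I anticipate is the surjectivity step, specifically being careful that the classical existence theorem for meromorphic differentials with prescribed simple poles and residues is applied correctly on each (possibly non-compact, if $X$ is an affine curve completed at infinity — but here $X$ is a disjoint union of compact Riemann surfaces) component, and that the orientation conventions on the edges of $\Gamma$ are tracked so that \eqref{resSum} comes out with the correct signs. Everything else is bookkeeping with the definitions of $\Sigma$-regular differentials and the graph $\Gamma$.
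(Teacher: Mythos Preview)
Your proof is correct and complete. The paper itself does not actually prove this proposition --- it merely prefaces the statement with ``The following is simple'' and moves on --- so there is nothing to compare against; your argument supplies exactly the standard details (injectivity from the definition of $\Sigma$-regularity, surjectivity via the classical existence theorem for meromorphic differentials with prescribed simple-pole residues summing to zero on each compact component, and the Euler-characteristic count for $\dim \Hom_1(\Gamma,\Complex)$) that the paper evidently had in mind.
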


\subsubsection{Generalized Jacobian of a nodal curve}

Let us define the generalized Jacobian of $X/\Sigma$. As in the non-singular case, there is a natural mapping
$$
 \mathcal I \colon \Hom_1(X\setminus \mathrm{supp}(\Sigma), \Z) \to \Omega^1(X,\Sigma)^*
$$
given by $$\langle \mathcal I(\gamma), \omega \rangle = \oint_\gamma \omega.$$ The image of this mapping is a lattice $L(X, \Sigma) \subset \Omega^1(X,\Sigma)^*$ called the \textit{period lattice}.
\begin{definition}
The quotient $$\Jac(X, \Sigma) = \quo{\Omega^1(X,\Sigma)^*}{L(X, \Sigma)}$$ is called the \textit{generalized Jacobian} of $X / \Sigma$. 
\end{definition}
Let us describe the structure of the generalized Jacobian.
By Proposition \ref{regDiff}, we have an exact sequence
\begin{align*}
\begin{CD}
           			 	0 @>>> \Hom^1(\Gamma,\Complex) @> Z^* >> \Omega^1(X,\Sigma)^* @> \pi^* >> \Omega^1(X)^* @>>> 0 \\
       			 \end{CD}
   			 \end{align*}
			 where $\pi^*$ is the restriction map. Obviously, $\pi^*$ maps $L(X,\Sigma)$ onto $L(X)$ where $L(X)$ is the usual period lattice of $X$. The kernel of the mapping $\pi^* \colon L(X,\Sigma) \to L(X)$ consists of integrals over cycles contractible in $X$, i.e. functionals $\xi$ of the form
			 $$
			\langle  \xi,\omega \rangle = 2\pi\mathrm{i}\sum\nolimits_{i=1}^{|\Sigma|} k_i\,\Res_{P_i^+} \omega, \quad k_i \in \Z.
			 $$ We get another exact sequence 
			 \begin{align*}
       				 \begin{CD}
				        0 @>>> \Hom^1(\Gamma, 2\pi  \mathrm{i}\Z) @> Z^* >> L(X,\Sigma) @> \pi^* >> L(X) @>>> 0. \\
       			 \end{CD}
   			 \end{align*}
			 Combining these two exact sequences, we get the following commutative diagram:
\begin{align}\label{9diag}
       				 \begin{CD}
				        @. 0 @. 0 @. 0 @. \\
				        @. @VVV @ VVV @ VVV @. \\
				        0 @>>> \Hom^1(\Gamma, 2\pi  \mathrm{i} \Z) @>Z^*>> L(X,\Sigma) @>\pi^*>> L(X) @>>> 0 \\
				        @. @VVV @ VVV @ VVV @. \\
           			 	0 @>>> \Hom^1(\Gamma,\Complex) @>Z^*>> \Omega^1(X,\Sigma)^* @>\pi^*>> \Omega^1(X)^* @>>> 0 \\
				  @. @VVV @ VVV @ VVV @. \\
				  0 @>>> \Hom^1(\Gamma,\Complex / 2\pi  \mathrm{i} \Z) @>Z^*>>\Jac(X,\Sigma) @>\pi^*>>\Jac(X) @>>> 0 \\
				  		        @. @VVV @ VVV @ VVV @. \\
						           @. 0 @. 0 @. 0 @. \\
       			 \end{CD}
   			 \end{align}
The columns and the two top rows of this diagram are exact, so the bottom row is exact as well.
We conclude that the generalized Jacobian $\Jac(X,\Sigma)$ is the extension of the usual Jacobian $\Jac(X)$ by the group $ \Hom^1(\Gamma,\Complex / 2\pi  \mathrm{i}\Z) \simeq (\Complex^*)^{m}$ where $$m = \dim \Hom_1(\Gamma,\Complex)= |\Sigma| - c(X) + c(X,\Sigma).$$
\subsubsection{Abel map}
Now, let us construct the Abel map for nodal curves.  For each Weil divisor $D$ on $X$, we define its \textit{multidegree} $\deg D = (d_1, \dots, d_{r}) \in \Z^{r}$ where $r = c(X)$, $d_i = \deg \left(D\mid_{X_i}\right)$. The total degree of $D$ is the number $$|\deg D| = \sum_{i=1}^{r} d_i.$$ We denote the set of divisors of multidegree $d$ by $\Div_d(X)$. 
\begin{definition}A divisor $D$ on $X$ is called \textit{$\Sigma$-regular}, if its support does not intersect $\mathrm{supp}(\Sigma)$. 
\end{definition}
The set of $\Sigma$-regular divisors of multidegree $d$ will be denoted by $\Div_d(X, \Sigma)$. The set of all $\Sigma$-regular divisors
$$
\Div(X, \Sigma) = \bigsqcup_{d \in \Z^{c(X)}} \Div_d(X, \Sigma)
$$
is a $\Z^k$-graded Abelian group.\par Let $\pazocal M^*(X, \Sigma)$ be the set of invertible elements in $\pazocal M(X, \Sigma)$:
$$
\pazocal M^*(X, \Sigma) = \{ f \in\pazocal M(X, \Sigma) :  f(P) \neq 0 \,\, \forall \,\, P \in \mathrm{supp}(\Sigma), \,\, f\mid_{X_j} \not\equiv 0 \,\,\forall\,\,  1 \leq j \leq c(X) \}.
$$
For each $f \in \pazocal M^*(X, \Sigma)$, its divisor $(f)$ is a $\Sigma$-regular divisor. 
\begin{definition}Divisors of the form $(f)$ where $f \in \pazocal M^*(X, \Sigma)$ will be called \textit{$\Sigma$-principal}. Two $\Sigma$-regular divisors are $\Sigma$-\textit{linearly equivalent} if their difference is a $\Sigma$-principal divisor. 
\end{definition}
 We denote the space of $\Sigma$-principal divisors by $\PDiv(X, \Sigma)$. For two $\Sigma$-equivalent divisors, we write:
$$D_1 \stackrel{  \Sigma}{\sim} D_2.$$
Let $D$ be a $\Sigma$-regular divisor of multidegree $0$. Then $D$ can be written as
$$
D = \sum\nolimits_{i=1}^{c(X)}(D_i^+ - D_i^-)
$$
where $D_i^{\pm}$ are effective divisors on $X_i$, and $\deg D_i^{+} = \deg D_i^{-}$. For a $\Sigma$-regular differential \nolinebreak$\omega$, we set
$$
\int_D \omega = \sum\nolimits_{i=1}^{c(X)}\int_{D_i^-}^{D_i^+}\omega.
$$
This integral is defined up to periods of $\omega$, hence we obtain a map $$\pazocal A_\Sigma \colon \Div_0(X,\Sigma) \to \Jac(X, \Sigma)$$ which is the analogue of the usual Abel map. \par
\begin{statement}\label{comd}
The following diagram is commutative:
\begin{align*}
       				 \begin{CD}
           			 	 \Div_0(X,\Sigma)  @>   >> \Div_0(X) \\
            				@VV \pazocal A_\Sigma V  @VV \pazocal A V \\
          				\Jac(X, \Sigma) @> \pi^*>>\Jac(X)
       			 \end{CD}
   			 \end{align*}
			 where the upper arrow is the natural inclusion, and $\pazocal A$ is the usual Abel map.
\end{statement}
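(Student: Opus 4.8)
The plan is to unwind all three maps in the square at the level of linear functionals on spaces of differentials, where the statement reduces to the evident compatibility of ``restriction of a functional'' with integration. Fix $D \in \Div_0(X,\Sigma)$ and write $D = \sum_{i=1}^{c(X)}(D_i^+ - D_i^-)$ with $D_i^\pm$ effective of equal degree on $X_i$, together with a choice of integration paths joining $D_i^-$ to $D_i^+$ inside $X_i$. Since $D$ is $\Sigma$-regular its support is disjoint from $\mathrm{supp}(\Sigma)$, and as $\mathrm{supp}(\Sigma)$ is finite the paths may be chosen to avoid $\mathrm{supp}(\Sigma)$ as well; with such a choice the number $\int_D\omega = \sum_i\int_{D_i^-}^{D_i^+}\omega$ is well defined (no pole is crossed) for every $\omega\in\Omega^1(X,\Sigma)$, and simultaneously for every $\omega\in\Omega^1(X)$, since $\Omega^1(X)\subset\Omega^1(X,\Sigma)$ (a holomorphic differential is vacuously $\Sigma$-regular).

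With this single choice of data, the functional $\omega\mapsto\int_D\omega$ on $\Omega^1(X,\Sigma)$ is by definition a representative of $\pazocal A_\Sigma(D)\in\Jac(X,\Sigma)$, while its restriction to the subspace $\Omega^1(X)$ — which is literally the same assignment $\omega\mapsto\int_D\omega$ evaluated only on holomorphic differentials — is a representative of $\pazocal A(D)\in\Jac(X)$, where $D$ is regarded via the natural inclusion as an element of $\Div_0(X)$. But restriction of functionals from $\Omega^1(X,\Sigma)^*$ to $\Omega^1(X)^*$ is precisely the map $\pi^*$ of diagram \eqref{9diag}: it is the dual of the inclusion $\Omega^1(X)\hookrightarrow\Omega^1(X,\Sigma)$, which (via Proposition \ref{regDiff}) is the source of the exact sequences assembled there. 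Hence the functional computing $\pi^*(\pazocal A_\Sigma(D))$ and the one computing $\pazocal A(D)$ literally coincide.

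It then remains only to check that this equality descends to the quotients, i.e. that the period ambiguity of $\pazocal A_\Sigma(D)$ maps onto the period ambiguity of $\pazocal A(D)$; but this is exactly the already-recorded fact that $\pi^*$ carries $L(X,\Sigma)$ onto $L(X)$, which is built into diagram \eqref{9diag}. Consequently $\pi^*$ does induce the asserted morphism $\Jac(X,\Sigma)\to\Jac(X)$ and, by the previous paragraph, sends $\pazocal A_\Sigma(D)$ to $\pazocal A(D)$, proving commutativity. I do not expect any genuine obstacle here: the proposition is a formal consequence of the definitions and of Proposition \ref{regDiff}, and the only point deserving a word of care is the consistent choice of integration paths avoiding $\mathrm{supp}(\Sigma)$, so that one and the same numerical integral simultaneously represents both Abel images.
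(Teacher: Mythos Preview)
Your proof is correct and is exactly the argument the paper has in mind; the paper's own proof reads in full ``Obvious.'' You have simply made explicit the tautology that $\pi^*$ is restriction of functionals and that the same integral $\int_D\omega$ computes both Abel images, which is all there is to it.
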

\begin{proof}Obvious. \end{proof}
\subsubsection{Abel theorem and generalized Picard group}
Let   $\PDiv(X) $ be the group of principal divisors on $X$, and let $D \in \PDiv(X) $. Then we can find a meromorphic function \nolinebreak$f$ such that $D = (f)$. 
This function is defined up to a factor which is constant on each component of $X$. To each edge $e_i$ of the dual graph $\Gamma(X, \Sigma)$ we assign a number $$r_i = f(P_i^+) / f(P_i^-) \in \Complex^*.$$ The numbers $\{r_i\}$ define a $1$-cocycle on the dual graph. Since $f$ is defined up to a locally constant factor, this cocycle is defined up to a coboundary. Therefore, to each principal divisor we can assign a cohomology class. Denote this class by $\pazocal R(D)$. We have a mapping
$$
\pazocal R \colon \PDiv(X) \cap  \Div(X, \Sigma) \to \Hom^1(\Gamma(X,\Sigma),  \Complex^*).
$$
and an exact sequence
\begin{align}\label{RDiag}
\begin{CD}
           			 	0 @>>>\PDiv(X, \Sigma) @> >>\PDiv(X) \cap  \Div(X, \Sigma) @> \pazocal R >>\Hom^1(\Gamma,  \Complex^*) @>>> 0.
       			 \end{CD}
   			 \end{align}
\begin{statement}\label{conv}
The following diagram commutes:
{
\begin{equation}
\label{convD}
\centering
\begin{tikzcd}
\PDiv(X) \cap  \Div(X, \Sigma) \arrow{r}{\pazocal A_\Sigma} \arrow{d}{\pazocal R} & \Jac(X, \Sigma)\\
 \Hom^1(\Gamma,  \Complex^*)  \arrow{r}{\ln_*} &  \Hom^1(\Gamma,  \Complex / 2\pi\mathrm{i}\Z) \arrow{u}{Z^*}
\end{tikzcd}
\end{equation}
}
where $\ln_*$ is the map induced by $\ln \colon  \Complex^* \to \Complex/2\pi\mathrm{i}\Z$.
\end{statement}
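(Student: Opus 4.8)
The plan is to unwind both composites in diagram \eqref{convD} applied to a fixed divisor $D \in \PDiv(X) \cap \Div(X,\Sigma)$, and to check they agree as elements of $\Jac(X,\Sigma) = \Omega^1(X,\Sigma)^*/L(X,\Sigma)$. Since $\Jac(X,\Sigma)$ sits in the exact sequence (bottom row of \eqref{9diag})
\begin{align*}
\begin{CD}
0 @>>> \Hom^1(\Gamma,\Complex/2\pi\mathrm{i}\Z) @>Z^*>> \Jac(X,\Sigma) @>\pi^*>> \Jac(X) @>>> 0,
\end{CD}
\end{align*}
and since $D$ is a principal divisor on $X$, its image under $\pazocal A_\Sigma$ projects to $\pazocal A(D) = 0$ in $\Jac(X)$ by Proposition \ref{comd} and the classical Abel theorem. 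Hence $\pazocal A_\Sigma(D)$ lies in the image of $Z^*$, so there is a unique class $c(D) \in \Hom^1(\Gamma,\Complex/2\pi\mathrm{i}\Z)$ with $\pazocal A_\Sigma(D) = Z^*(c(D))$; the claim is precisely that $c(D) = \ln_*(\pazocal R(D))$.

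First I would write $D = (f)$ for some $f \in \pazocal M^*(X,\Sigma)$ and split $D$ componentwise as $D = \sum_i (D_i^+ - D_i^-)$ with $\deg D_i^+ = \deg D_i^-$ on each $X_i$. By definition, $\pazocal A_\Sigma(D)$ is the functional $\omega \mapsto \sum_i \int_{D_i^-}^{D_i^+}\omega$ on $\Sigma$-regular $\omega$, taken modulo periods. The key computation is to express this integral, for $\omega$ with only simple poles on $\mathrm{supp}(\Sigma)$, in terms of $f$. On each component $X_i$, since $(f|_{X_i})$ is the divisor $D_i^+ - D_i^-$ (it has degree zero there), the classical argument gives $\int_{D_i^-}^{D_i^+}\omega \equiv \frac{1}{2\pi\mathrm{i}}\oint f^{-1}\diff f \cdot (\text{something})$ — more precisely, one uses that $\omega - \diff(\log f)\cdot(\text{const})$... the cleanest route is: pick $\omega$ holomorphic away from $\mathrm{supp}(\Sigma)$ with simple poles there, consider $\int_{D_i^-}^{D_i^+}\omega$ and deform the path; the ambiguity coming from $\mathrm{supp}(\Sigma)$ is governed by the residues of $\omega$, which is exactly what $Z(\omega)$ records. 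Carrying this out, $\pazocal A_\Sigma(D)$ evaluated on $\omega$ becomes $\sum_i \big(\Res_{P_i^+}\omega\big)\cdot \log\big(f(P_i^+)/f(P_i^-)\big) = \langle Z(\omega), \ln_*(\pazocal R(D))\rangle$ modulo $2\pi\mathrm{i}\Z$, which is the pairing defining $Z^*(\ln_*(\pazocal R(D)))$.

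The main obstacle is making the "classical argument" on each component rigorous modulo periods: one must show that the path-dependence of $\int_{D_i^-}^{D_i^+}\omega$ on $X_i \setminus \mathrm{supp}(\Sigma)$ contributes only elements of $L(X,\Sigma)$ together with the residue terms $2\pi\mathrm{i} k_i \Res_{P_i^+}\omega$ that are absorbed into the $\Complex/2\pi\mathrm{i}\Z$ coefficients of $\Hom^1(\Gamma,\Complex/2\pi\mathrm{i}\Z)$, and then to identify the "residue contribution" with $\ln_*(\pazocal R(D))$ compatibly with the isomorphism $Z$ of Proposition \ref{regDiff}. Concretely, after choosing a branch of $\log f$ on the complement of paths from $P_i^-$ to $P_i^+$, integration by parts (or the residue theorem applied to $\log f \cdot \omega$ on $X_i$ cut along those paths) turns $\int_{D_i^-}^{D_i^+}\omega$ into a sum of residues of $\log f \cdot \omega$ at $\mathrm{supp}(\Sigma)$, i.e. $\sum_{i}\big(\Res_{P_i^+}\omega\big)\log(f(P_i^+)/f(P_i^-))$ up to periods and multiples of $2\pi\mathrm{i}$. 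Once this identity is in hand, comparing with the definitions of $Z^*$, $\ln_*$, and $\pazocal R$ closes the square; the compatibility with $\pi^*$ (i.e. that the discrepancy lands in the correct subgroup) is automatic from diagram \eqref{9diag} and Proposition \ref{comd}. I would present the $\log f\cdot\omega$ residue computation as the heart of the proof and treat the rest as bookkeeping against the diagrams already established.
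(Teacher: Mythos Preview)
Your approach is correct in outline and leads to the same identity, but it is genuinely different from the paper's proof. One small slip: you write $D=(f)$ with $f\in\pazocal M^*(X,\Sigma)$, but that would force $f(P_i^+)=f(P_i^-)$ and hence $\pazocal R(D)=0$. You want $f$ merely meromorphic on $X$, nonvanishing on each component and without zeros or poles on $\mathrm{supp}(\Sigma)$; the rest of your argument already treats $f(P_i^+)/f(P_i^-)$ as possibly $\neq 1$, so this is a typo rather than a conceptual error. Also, the cuts needed to define a single-valued branch of $\log f$ run between the zeros and poles of $f$ (the support of $D$), not between the nodal preimages $P_i^\pm$; once that is set up, applying the residue theorem to $(\log f)\,\omega$ on each cut component $X_i$ and summing gives $\sum_j(\Res_{P_j^+}\omega)\log\bigl(f(P_j^+)/f(P_j^-)\bigr)$ modulo $L(X,\Sigma)$, exactly as you claim.

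The paper takes a different and shorter route: instead of cutting $X$ and working with $(\log f)\,\omega$, it regards $f$ as a branched cover $X\to\CPP$, chooses a path $\gamma$ from $\infty$ to $0$ on the target avoiding $f(\mathrm{supp}(\Sigma))$, and observes that $f^{-1}(\gamma)$ is a $1$-chain on $X$ with boundary $D$. Then $\langle\pazocal A_\Sigma(D),\omega\rangle=\int_{f^{-1}(\gamma)}\omega=\int_\gamma \tr_{\!f}\,\omega$, and the pushforward $\tr_{\!f}\,\omega$ is an explicit rational $1$-form on $\CPP$ with simple poles at the points $f(P_i^\pm)$ and residues $\Res_{P_i^\pm}\omega$; integrating it along $\gamma$ yields $\sum_i(\Res_{P_i^+}\omega)\log\bigl(f(P_i^+)/f(P_i^-)\bigr)$ in one line. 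This bypasses all the bookkeeping with homology cuts and period terms that your reciprocity argument has to track. Your approach, on the other hand, is the classical Riemann bilinear relations / third-kind reciprocity computation and has the advantage of being entirely intrinsic to $X$, with no auxiliary map to $\CPP$.
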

\begin{proof}
Let $D = (f) \in \PDiv(X) \cap  \Div(X, \Sigma) $. Choose a path $\gamma$ joining $\infty$ and $0$ on the Riemann sphere such that $f^{-1}(\gamma)$ does not intersect $\mathrm{supp}(\Sigma)$. Let $\omega$ be a $\Sigma$-regular differential. Then
\begin{align*}
\langle \pazocal A_\Sigma(D), \,&\omega \rangle = \int_{f^{-1}(\gamma)} \omega = \int_\gamma \tr_{\!\!f}\, \omega =  -\int_0^\infty \sum\nolimits_{i=1}^{|\Sigma|}\left( \frac{1}{z - P_i^+} -  \frac{1}{z - P_i^-}\right)\Res_{P_i^+}\,f \, \diff z=  \\ &=\sum\nolimits_{i=1}^{|\Sigma|} \ln \frac{f(P_i^+)}{f(P_i^-)}\,\Res_{P_i^+}\,\omega = \langle\ln_* \pazocal R(D), Z(\omega) \rangle =  \langle Z^*\left(\ln_* \pazocal R(D)\right), \omega \rangle. 
\end{align*}

\end{proof}

\begin{statement}[Abel theorem for nodal curves]\label{at}
A $\Sigma$-regular divisor $D$ of multidegree zero is $\Sigma$-principal if and only if $\pazocal A_\Sigma(D) = 0$.
\end{statement}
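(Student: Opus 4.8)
The plan is to bootstrap from the classical Abel theorem on the compact (possibly disconnected) Riemann surface $X$, the exact sequence \eqref{RDiag}, and the two commuting diagrams of Propositions \ref{comd} and \ref{conv}; together these encode exactly the passage from ordinary linear equivalence on $X$ to $\Sigma$-linear equivalence on $X/\Sigma$. Two structural facts will do the work: the map $Z^*$ in diagram \eqref{9diag} is injective, and $\ln\colon\Complex^*\to\Complex/2\pi\mathrm{i}\Z$ is a group isomorphism (inverse induced by $\exp$), so the induced map $\ln_*\colon\Hom^1(\Gamma,\Complex^*)\to\Hom^1(\Gamma,\Complex/2\pi\mathrm{i}\Z)$ is an isomorphism.

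\textbf{The ``only if'' direction.} Suppose $D$ is $\Sigma$-principal, say $D=(f)$ with $f\in\pazocal M^*(X,\Sigma)$. Then $D\in\PDiv(X)\cap\Div(X,\Sigma)$, and by exactness of \eqref{RDiag} we have $\pazocal R(D)=0$ (concretely, the cocycle $\{f(P_i^+)/f(P_i^-)\}$ representing $\pazocal R(D)$ is identically $1$ since $f$ is $\Sigma$-regular). Proposition \ref{conv} then gives $\pazocal A_\Sigma(D)=Z^*\!\left(\ln_*\pazocal R(D)\right)=0$.

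\textbf{The ``if'' direction.} Let $D$ be a $\Sigma$-regular divisor of multidegree zero with $\pazocal A_\Sigma(D)=0$. Applying $\pi^*$ and using Proposition \ref{comd}, we obtain $\pazocal A(D)=\pi^*\pazocal A_\Sigma(D)=0$; the classical Abel theorem, applied componentwise on $X$ (legitimate because $D$ has multidegree zero), produces $f\in\pazocal M(X)$ with $(f)=D$. Since $D$ is $\Sigma$-regular, $f$ has neither zeros nor poles on $\mathrm{supp}(\Sigma)$, and since each $D|_{X_j}$ is a genuine divisor, $f|_{X_j}\not\equiv 0$; hence $f\in\pazocal M^*(X,\Sigma)$ and $D\in\PDiv(X)\cap\Div(X,\Sigma)$. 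It remains to show $\pazocal R(D)=0$, for then exactness of \eqref{RDiag} places $D$ in $\PDiv(X,\Sigma)$. By Proposition \ref{conv}, $0=\pazocal A_\Sigma(D)=Z^*\!\left(\ln_*\pazocal R(D)\right)$; injectivity of $Z^*$ forces $\ln_*\pazocal R(D)=0$, and since $\ln_*$ is an isomorphism, $\pazocal R(D)=0$, as required.

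\textbf{Main difficulty.} I do not expect a genuine obstacle: once the classical Abel theorem is invoked, the proof is a diagram chase threading Propositions \ref{comd} and \ref{conv} and the sequence \eqref{RDiag}. The one point worth flagging is that the ``if'' direction really uses both auxiliary facts above — in particular that $\ln_*$ is an isomorphism (not merely injective or surjective); this is the small piece of substance that upgrades ``$\pazocal A_\Sigma(D)=0$'' to ``$\pazocal R(D)=0$'' exactly, so that the cocycle obstruction to the existence of a $\Sigma$-regular representative of $f$ genuinely vanishes rather than merely landing in a proper subgroup.
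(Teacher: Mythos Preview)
Your argument is correct and essentially identical to the paper's: classical Abel via Proposition~\ref{comd}, then Proposition~\ref{conv} together with injectivity of $Z^*$ and bijectivity of $\ln_*$ to force $\pazocal R(D)=0$, finishing by exactness of~\eqref{RDiag}. One slip to fix: the claim ``hence $f\in\pazocal M^*(X,\Sigma)$'' is premature --- membership in $\pazocal M^*(X,\Sigma)$ requires $f(P_i^+)=f(P_i^-)$, which is precisely what $\pazocal R(D)=0$ encodes and what you have not yet shown --- but you never actually use that claim, only the (correct) weaker conclusion $D\in\PDiv(X)\cap\Div(X,\Sigma)$, so the proof stands.
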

\begin{proof}
Let $D$ be a $\Sigma$-principal divisor. Then $ \pazocal R(D) = 0$, so $\pazocal A_\Sigma(D) =Z^*\left(\ln_* \pazocal R(D)\right) = 0$. Vice versa, let $\pazocal A_\Sigma(D) = 0$. Applying Proposition \ref{comd}, we conclude that $\pazocal A(D) = 0$. So, by the standard Abel theorem, $D$ is principal. Since  $\pazocal A_\Sigma(D) = 0$, and $Z^*$ is injective,  we conclude that $\pazocal R(D) = 0$, so the divisor $D$ is $\Sigma$-principal.
\end{proof}
Using diagrams \eqref{9diag} and \eqref{RDiag}, the diagram \eqref{convD} is reduced to
{
\begin{equation}
\centering
\begin{tikzcd}
\PDiv(X) \cap  \Div(X, \Sigma) / \PDiv(X,\Sigma)  \arrow{r}{\pazocal A_\Sigma} \arrow{d}{\pazocal R} & \Ker \pi^*\\
 \Hom^1(\Gamma,  \Complex^*)  \arrow{r}{\ln_*} &  \Hom^1(\Gamma,  \Complex / 2\pi\mathrm{i}\Z) \arrow{u}{Z^*}
\end{tikzcd}
\end{equation}
}
where $R$, $\ln_*$, and $Z^*$ are isomorphisms. Therefore, $\pazocal A_\Sigma$ is also an isomorphism.
\begin{definition}
The generalized \textit{generalized Picard group} is
 $$\Pic_0(X,\Sigma) =   \Div(X, \Sigma) /  \PDiv(X, \Sigma).  $$
 \end{definition}
The group $\Pic(X,\Sigma)  $ is a $\Z^k$-graded Abelian group:
$$
\Pic(X,\Sigma) = \bigsqcup_{d \in \Z^{c(X)}} \Pic_d(X,\Sigma)
$$
 where the multidegree $d$ generalized Picard variety $\Pic_d(X,\Sigma)$ is the set of $\Sigma$-regular divisors of multidegree $d$ modulo linear equivalence. We denote the $\Sigma$-linear equivalence class of a divisor $D$ by $[D]_\Sigma$, or just $[D]$.
\begin{statement}[Abel-Jacobi theorem for nodal curves]
The Abel map is an isomorphism between $\Pic_0(X,\Sigma) $ and $\Jac(X,\Sigma)$. 
\end{statement}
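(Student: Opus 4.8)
The plan is to deduce this statement from the Abel theorem for nodal curves (Proposition \ref{at}) together with the surjectivity of the Abel map $\pazocal A_\Sigma$ onto $\Jac(X,\Sigma)$, in exact analogy with the classical proof for smooth curves. The map $\pazocal A_\Sigma$ has so far only been defined on $\Div_0(X,\Sigma)$, so the first step is to extend it to a map on all of $\Pic_0(X,\Sigma)$. By definition $\Pic_0(X,\Sigma)$ consists of multidegree-zero $\Sigma$-regular divisors modulo $\Sigma$-principal divisors; since Proposition \ref{at} asserts that a multidegree-zero $\Sigma$-regular divisor is $\Sigma$-principal precisely when it lies in the kernel of $\pazocal A_\Sigma$, the map $\pazocal A_\Sigma$ descends to a well-defined \emph{injective} homomorphism $\Pic_0(X,\Sigma) \to \Jac(X,\Sigma)$. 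This already gives injectivity for free; it remains to prove surjectivity.

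For surjectivity I would first recall, from diagram \eqref{9diag}, the extension
$$
0 \to \Hom^1(\Gamma, \Complex/2\pi\mathrm i\Z) \xrightarrow{Z^*} \Jac(X,\Sigma) \xrightarrow{\pi^*} \Jac(X) \to 0.
$$
By the classical Abel–Jacobi theorem applied to the (possibly disconnected) smooth curve $X$, the usual Abel map $\pazocal A\colon \Div_0(X) \to \Jac(X)$ is surjective; combining this with Proposition \ref{comd}, every element of $\Jac(X)$ is hit by $\pi^*\circ \pazocal A_\Sigma$ applied to some $\Sigma$-regular divisor of multidegree zero (one may move the support of any given divisor off $\mathrm{supp}(\Sigma)$ by adding a principal divisor on $X$, which does not change its class in $\Jac(X)$). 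Hence $\pi^* \circ \pazocal A_\Sigma$ is onto $\Jac(X)$. A diagram chase then reduces surjectivity of $\pazocal A_\Sigma$ to surjectivity onto $\Ker \pi^* \cong \Hom^1(\Gamma,\Complex/2\pi\mathrm i\Z)$; and this is exactly the content of the reduced diagram \eqref{convD} displayed just above the statement, where $\pazocal A_\Sigma$ is identified with the composite $Z^*\circ \ln_* \circ \pazocal R$ of three isomorphisms on $\PDiv(X)\cap\Div(X,\Sigma)/\PDiv(X,\Sigma)$. Concretely: given a class in $\Hom^1(\Gamma,\Complex^*)$, lift it through the isomorphism $\pazocal R$ to a principal divisor $(f)$ on $X$ with $f \in \pazocal M^*(X,\Sigma)$, and then $\pazocal A_\Sigma([(f)]_\Sigma)$ realizes the desired element of $\Ker\pi^*$.

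The main obstacle — and the only place requiring genuine care rather than diagram-chasing — is ensuring that the classical Abel–Jacobi surjectivity is available on $X$ in the form needed, namely for \emph{$\Sigma$-regular} divisors and for a possibly \emph{disconnected} $X$, and that one can always choose divisor representatives avoiding $\mathrm{supp}(\Sigma)$. For disconnectedness one simply applies the classical theorem componentwise, since $\Jac(X) = \prod_i \Jac(X_i)$ and likewise for $\Div_0$. For the $\Sigma$-regularity: given any $D_0 \in \Div_0(X)$ representing a prescribed class in $\Jac(X)$, one perturbs it within its linear equivalence class on $X$ to move its support off the finite set $\mathrm{supp}(\Sigma)$; this is possible because effective divisors in a fixed linear system on each component $X_i$ of positive dimension move, and in the degree-zero case one writes $D_0 = D_0^+ - D_0^-$ with $\deg D_0^\pm$ large and adjusts. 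Once this is in place, the two exact sequences — the bottom row of \eqref{9diag} and the image of the Abel maps — fit into a commuting ladder whose outer terms admit surjections, so the five lemma (or a direct chase) yields surjectivity of $\pazocal A_\Sigma$, completing the proof that it is an isomorphism $\Pic_0(X,\Sigma) \xrightarrow{\ \sim\ } \Jac(X,\Sigma)$.
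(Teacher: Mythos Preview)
Your proposal is correct and follows essentially the same route as the paper: injectivity from the Abel theorem (Proposition~\ref{at}), and surjectivity by lifting through the exact sequence $0 \to \Hom^1(\Gamma,\Complex/2\pi\mathrm i\Z) \to \Jac(X,\Sigma) \to \Jac(X) \to 0$, using classical Abel--Jacobi on $X$ to hit the quotient and the isomorphism $\pazocal A_\Sigma \simeq Z^*\circ\ln_*\circ\pazocal R$ from the reduced diagram to hit $\Ker\pi^*$. The paper's argument is the explicit version of your diagram chase: given $x\in\Jac(X,\Sigma)$, pick a $\Sigma$-regular $D$ with $\pazocal A(D)=\pi^*(x)$, then find $D'$ with $\pazocal A_\Sigma(D')=x-\pazocal A_\Sigma(D)\in\Ker\pi^*$; your five-lemma formulation and the paper's direct construction are the same argument in different clothing.
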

\begin{proof}
By Proposition \ref{at}, the Abel map $\pazocal A_\Sigma \colon \Pic_0(X,\Sigma)  \to \Jac(X,\Sigma)$ is injective. Let us prove that it is surjective. Take $x \in \Jac(X,\Sigma)$. Then $\pi^*(x) \in \Jac(X)$, and by the classical Abel-Jacobi theorem, there exists $D \in   \Div_0(X)$ such that $\pazocal A(D) = \pi^*(x)$. As it is easy to see, $D$ may be chosen to be $\Sigma$-regular.
Then $x - \pazocal A_\Sigma(D) \in \Ker \pi^*$. As it is proved above, the Abel map  is an isomorphism between $ (\PDiv(X) \cap  \Div(X, \Sigma)) /  \PDiv(X)   $ and $\Ker \pi^*$, so there exists $D' \in \Div_0(X, \Sigma)$ such that  $\pazocal A_\Sigma(D') = x - \pazocal A_\Sigma(D) $, i.e.  $x = \pazocal A_\Sigma(D'+D) $, q.e.d.
\end{proof}

 The variety $\Pic_d(X,\Sigma)$ is thus a principal homogeneous space of the group $$\Pic_0(X,\Sigma)  \simeq \Jac(X,\Sigma)$$ for each multidegree $d$. In particular, $\Pic_d(X,\Sigma)$ has a canonical affine structure and its tangent space at each point can be naturally identified with $\Omega^1(X,\Sigma)^\ast$.
 \subsubsection{Partial normalizations and subcurves}\label{pnsc}
 Let $\Lambda \subset \Sigma$. Then $X / \Lambda$ is a \textit{partial normalization} of the curve $X/\Sigma$. Each $\Sigma$-regular divisor is also a $\Lambda$-regular divisor, hence there is a natural inclusion map $ \Div(X,\Sigma)  \to \Div(X,\Lambda) $. \pagebreak[3]
 \begin{statement}
There exists a unique graded epimorphism $\pi_{\Lambda}^*$ which makes the following diagram commutative:
\begin{align*}
       				 \begin{CD}
           			 	 \Div(X,\Sigma) @>   >>  \Div(X,\Lambda)  \\
            				@VV V  @VVV \\
          				 \Pic(X,\Sigma)  @> \pi_{\Lambda}^*>> \Pic(X,\Lambda) 
       			 \end{CD}
   			 \end{align*}
			 where the upper arrow is the natural inclusion, and vertical arrows are natural projections.
 \end{statement}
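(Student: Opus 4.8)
The plan is to get $\pi_{\Lambda}^*$ for free from the universal property of the quotient $\Pic(X,\Sigma) = \Div(X,\Sigma)/\PDiv(X,\Sigma)$, and then to establish surjectivity separately; the latter is where the real work is. First I would record the two inclusions that make the diagram sensible. Since $\Lambda \subset \Sigma$ we have $\mathrm{supp}(\Lambda) \subseteq \mathrm{supp}(\Sigma)$, so every $\Sigma$-regular divisor is automatically $\Lambda$-regular (this is the top arrow), and every $f \in \pazocal M^*(X,\Sigma)$ lies in $\pazocal M^*(X,\Lambda)$ (it is meromorphic, nonzero on each component, nonzero on $\mathrm{supp}(\Lambda)\subseteq\mathrm{supp}(\Sigma)$, and takes equal finite values at $P_i^\pm$ for $\pazocal P_i \in \Lambda \subseteq \Sigma$). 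Hence $\PDiv(X,\Sigma) \subseteq \PDiv(X,\Lambda)$.

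Next, existence and uniqueness: the composite $\Div(X,\Sigma) \hookrightarrow \Div(X,\Lambda) \twoheadrightarrow \Pic(X,\Lambda)$ kills $\PDiv(X,\Sigma)$ by the inclusion just noted, hence factors through $\Pic(X,\Sigma) = \Div(X,\Sigma)/\PDiv(X,\Sigma)$; the factorization is $\pi_{\Lambda}^*$, and it is the unique map making the square commute because $\Div(X,\Sigma)\to\Pic(X,\Sigma)$ is surjective. All four maps in the square preserve multidegree — the top inclusion is degreewise the identity on divisor groups, and the two projections preserve multidegree by the definition of the grading on $\Pic$ — so $\pi_{\Lambda}^*$ is graded, i.e.\ it restricts to maps $\Pic_d(X,\Sigma) \to \Pic_d(X,\Lambda)$ for every $d$.

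It remains to prove $\pi_{\Lambda}^*$ is onto, and I claim it suffices to do this in multidegree $0$. Indeed, for arbitrary $d$ the set $\Div_d(X,\Sigma)$ is nonempty (each component $X_i$ is an infinite Riemann surface, so one can realize any multidegree by a divisor disjoint from the finite set $\mathrm{supp}(\Sigma)$); picking $D_0 \in \Div_d(X,\Sigma)$, any class in $\Pic_d(X,\Lambda)$ has the form $[D_0]_\Lambda + c$ with $c \in \Pic_0(X,\Lambda)$ by the torsor structure, and writing $c = \pi_{\Lambda}^*(c')$ with $c'\in\Pic_0(X,\Sigma)$ gives the preimage $[D_0]_\Sigma + c'$ (using that $\pi_{\Lambda}^*$ is a homomorphism of graded abelian groups). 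For multidegree $0$ I would identify $\Pic_0(X,\Sigma)\cong\Jac(X,\Sigma)$ and $\Pic_0(X,\Lambda)\cong\Jac(X,\Lambda)$ by the Abel--Jacobi theorem for nodal curves proved above; under these identifications $\pi_{\Lambda}^*$ becomes the restriction map $\Omega^1(X,\Sigma)^* \to \Omega^1(X,\Lambda)^*$ dual to the inclusion $\Omega^1(X,\Lambda)\subseteq\Omega^1(X,\Sigma)$ (the compatibility is proved exactly as Proposition \ref{comd}). Now I would invoke the extension structure coming from diagram \eqref{9diag}: $\Jac(X,\bullet)$ is an extension of $\Jac(X)$ by $\Hom^1(\Gamma(X,\bullet),\Complex/2\pi\mathrm{i}\Z)$, and $\Gamma(X,\Lambda)$ is the subgraph of $\Gamma(X,\Sigma)$ with the same vertices and only the edges coming from $\Lambda$, so the restriction $\Hom^1(\Gamma(X,\Sigma),\Complex/2\pi\mathrm{i}\Z)\to\Hom^1(\Gamma(X,\Lambda),\Complex/2\pi\mathrm{i}\Z)$ is surjective (it is induced by the surjection of $1$-cochain groups that forgets the coordinates of the deleted edges, which carries $\mathrm{image}(d^0)$ onto $\mathrm{image}(d^0)$). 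In the resulting ladder of short exact sequences the leftmost vertical map is surjective and the rightmost is the identity on $\Jac(X)$, so the four lemma yields surjectivity of the middle map $\Jac(X,\Sigma)\to\Jac(X,\Lambda)$, as needed.

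The genuinely nontrivial part is this last surjectivity; the existence, uniqueness, and gradedness of $\pi_{\Lambda}^*$ are purely formal once $\PDiv(X,\Sigma)\subseteq\PDiv(X,\Lambda)$ is in hand. I expect the main obstacle to be organizing the surjectivity cleanly: the Jacobian route above works, but an alternative, more hands-on argument would be a moving lemma — every $\Lambda$-regular divisor is $\Lambda$-linearly equivalent to one whose support avoids the finite set $\mathrm{supp}(\Sigma)$ — and the subtle point there is to carry out the move using a function that is itself $\Lambda$-regular (equal values at the remaining paired points, nonvanishing on $\mathrm{supp}(\Lambda)$), not merely meromorphic on $X$; this is essentially the density statement about effective Weil divisors on reducible curves used elsewhere in this section.
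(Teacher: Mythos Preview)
Your proof is correct. The paper's own proof consists of the single word ``Obvious,'' so you have supplied substantially more than the author did. Your identification of surjectivity as the only nontrivial point is apt, and your argument via the extension structure of the generalized Jacobian (the bottom row of diagram~\eqref{9diag}) is clean and uses only machinery already developed in the section: the inclusion $\Omega^1(X,\Lambda)\subset\Omega^1(X,\Sigma)$, the surjectivity of $\Hom^1(\Gamma(X,\Sigma),\Complex/2\pi\mathrm{i}\Z)\to\Hom^1(\Gamma(X,\Lambda),\Complex/2\pi\mathrm{i}\Z)$ for a subgraph on the same vertex set, and the diagram chase all go through exactly as you describe. The alternative moving-lemma route you sketch at the end would also work but, as you correctly flag, requires some care in the reducible case to produce a genuinely $\Lambda$-regular function with prescribed zeros and poles; the Jacobian route avoids this.
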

 \begin{proof}Obvious. \end{proof}
The notation $ \pi_{\Lambda}^*$ reflects the fact that this map is backward with respect to the partial normalization map $\pi_{\Lambda} \colon X/\Lambda \to X/\Sigma$.

Now, let us discuss subcurves. Let $I = \{i_1, \dots, i_p\} \subset \{ 1, \dots, c(X)\}$, and let $$ X_I = \bigsqcup_{i \in I} X_i.$$ Let also $$ \Sigma_I= \{ \{P_i^+,P_i^-\} \in \Sigma \mid P_i^+ \in  X_I, P_i^- \in  X_I \}.$$ Then $ X_I /  \Sigma_I$ is a \textit{subcurve} of  $ X /  \Sigma$. Subcurves of $X/\Sigma$ correspond to complete subgraphs of its dual graph. If  $ X_I /  \Sigma_I$ is a sub-curve of  $ X /  \Sigma$, then for each multidegree $d = (d_1, \dots , d_{c(X)})$ on $X$ there is a natural restriction map $ \Div_d(X,\Sigma)  \to \Div_{d_I}(X_I,\Sigma_I) $ where $d_I = (d_{i_1}, \dots, d _{i_p})$.
 \begin{statement}
There exists a unique epimorphism $i_{I}^*$ which makes the following diagram commutative:
\begin{align*}
       				 \begin{CD}
           			 	 \Div_d(X,\Sigma) @>   >>  \Div_{d_I}(X_I,\Sigma_I)  \\
            				@VV V  @VVV \\
          				 \Pic_d(X,\Sigma)  @> i_{I}^*>> \Pic_{d_I}(X_I,\Sigma_I) 
       			 \end{CD}
   			 \end{align*}
			 where the upper arrow is the natural restriction map, and vertical arrows are natural projections.
 \end{statement}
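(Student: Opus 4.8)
The plan is to follow the recipe of the preceding proposition verbatim, the only nontrivial input being that the restriction-of-functions operation behaves well with respect to $\Sigma$-principal divisors. First I would check that the natural restriction map on divisors
$$
\rho\colon \Div_d(X,\Sigma)\longrightarrow \Div_{d_I}(X_I,\Sigma_I)
$$
is well-defined and surjective. Well-definedness is immediate: if $D\in\Div_d(X,\Sigma)$ then $\mathrm{supp}(D)$ is disjoint from $\mathrm{supp}(\Sigma)$, hence a fortiori from $\mathrm{supp}(\Sigma_I)\subseteq\mathrm{supp}(\Sigma)\cap X_I$, and the multidegree of $D|_{X_I}$ is $d_I$ by definition. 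For surjectivity, given a $\Sigma_I$-regular divisor $D'$ of multidegree $d_I$ on $X_I$, extend it to $X$ by choosing on each component $X_j$ with $j\notin I$ an arbitrary divisor of degree $d_j$ whose support avoids the finitely many points of $\mathrm{supp}(\Sigma)$ lying on $X_j$; such a divisor exists because a connected Riemann surface has infinitely many points.

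Next I would prove the key claim: $\rho\bigl(\PDiv(X,\Sigma)\bigr)\subseteq\PDiv(X_I,\Sigma_I)$. Let $f\in\pazocal M^*(X,\Sigma)$. I claim $f|_{X_I}\in\pazocal M^*(X_I,\Sigma_I)$ and $(f|_{X_I})=\rho\bigl((f)\bigr)$. Indeed $f|_{X_I}$ is meromorphic on $X_I$, is not identically zero on any component $X_i$ with $i\in I$, and is nonzero at every point of $\mathrm{supp}(\Sigma_I)$; moreover, for a pair $\pazocal P_i=\{P_i^+,P_i^-\}\in\Sigma_I$ — which by definition of $\Sigma_I$ has \emph{both} feet on $X_I$ — the equality $f(P_i^+)=f(P_i^-)\neq\infty$ is inherited from $f\in\pazocal M(X,\Sigma)$. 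Pairs of $\Sigma$ with exactly one foot on $X_I$ simply do not belong to $\Sigma_I$, so there is nothing to verify for them. Finally $(f|_{X_I})$ and $(f)|_{X_I}$ agree because the order of $f$ at a point of $X_I$ depends only on the germ of $f$ there. Consequently, if $D_1\stackrel{\Sigma}{\sim}D_2$ in $\Div_d(X,\Sigma)$ then $\rho(D_1)\stackrel{\Sigma_I}{\sim}\rho(D_2)$, so the composite $\Div_d(X,\Sigma)\xrightarrow{\rho}\Div_{d_I}(X_I,\Sigma_I)\to\Pic_{d_I}(X_I,\Sigma_I)$ annihilates $\Sigma$-linear equivalence and therefore descends to a map $i_I^*\colon\Pic_d(X,\Sigma)\to\Pic_{d_I}(X_I,\Sigma_I)$ making the square commute; it is a group homomorphism because $\rho$ is.

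Finally, uniqueness of $i_I^*$ follows since the left vertical arrow $\Div_d(X,\Sigma)\to\Pic_d(X,\Sigma)$ is surjective, so the value of $i_I^*$ on any class is forced by commutativity; and $i_I^*$ is an epimorphism because $\rho$ is surjective and the right vertical arrow is surjective, hence their composite — which factors as $i_I^*$ precomposed with a surjection — is surjective, whence so is $i_I^*$. I do not expect a genuine obstacle here: the one point worth stating carefully is the behaviour of $\Sigma$ under passing to the subcurve — pairs straddling $X_I$ and its complement are discarded rather than kept as nodes — after which everything reduces to the elementary identities above, exactly as in the proof of the partial-normalization statement, which the paper dispatches with ``Obvious''.
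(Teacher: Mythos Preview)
Your argument is correct and supplies exactly the details the paper omits: the paper's proof is simply ``Obvious.'' The only remark is that calling $i_I^*$ a ``group homomorphism'' is slightly off since $\Pic_d$ for fixed $d$ is a torsor rather than a group, but this is cosmetic and does not affect the argument.
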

 \begin{proof}Obvious. \end{proof}
 The notation $ i_{I}^*$ reflects the fact that this map is backward with respect to the natural inclusion map $i_{I} \colon X_I/\Sigma_I \to X/\Sigma$.\par

\subsubsection{Riemann's inequality and effective divisors}
 Let $D \in \Div(X,\Sigma)$, and let
$$
\mathrm{L}(D,\Sigma) = \{ f \in \pazocal M(X, \Sigma) \mid \mathrm{ord}_P\,f \geq -D(P) \,\,\forall\,\, P \in X \}
$$
where $\mathrm{ord}_P\,f $ is the order of $f$ at the point $P$, and we set $\mathrm{ord}_P\,f =\infty$ if $P \in X_i$ and \nolinebreak $f\mid_{X_i} \equiv 0$. Obviously, the set $\mathrm{L}(D,\Sigma) $ is a vector space. 
\begin{statement}[Riemann's inequality for nodal curves] For each $D \in \Div(X,\Sigma)$, the following inequality holds $$\dim \mathrm{L}(D, \Sigma) \geq |\deg D| - g(X,\Sigma) + c(X,\Sigma).$$
\end{statement}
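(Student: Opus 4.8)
The plan is to reduce the nodal-curve Riemann inequality to the classical Riemann inequality on the smooth curve $X$ by carefully bookkeeping the extra linear conditions imposed by $\Sigma$-regularity. Recall that on a connected Riemann surface $X_i$ of genus $g(X_i)$, for a divisor $D_i$ of degree $d_i$ the classical Riemann inequality gives $\dim \mathrm{L}_{X_i}(D_i) \geq d_i - g(X_i) + 1$. Summing over the $c(X)$ components, a function $f \in \pazocal M(X)$ with $(f) \geq -D$ lives in a space of dimension at least $|\deg D| - g(X) + c(X)$ (using $g(X) = \sum g(X_i)$ and that $X$ has $c(X)$ components). The space $\mathrm{L}(D,\Sigma)$ is cut out of this space by the conditions $f(P_i^+) = f(P_i^-)$ for $i = 1, \dots, |\Sigma|$, together with the requirement that $f$ have no pole at $\mathrm{supp}(\Sigma)$ — but the latter is already built into the hypothesis that $D$ is $\Sigma$-regular, so $D(P) = 0$ for $P \in \mathrm{supp}(\Sigma)$ and these are genuinely just $|\Sigma|$ linear equality conditions on an a priori larger space.

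First I would make this precise: let $V = \mathrm{L}(D) = \{ f \in \pazocal M(X) \mid \mathrm{ord}_P f \geq -D(P)\ \forall P\}$ be the classical Riemann–Roch space on the smooth curve (well-defined componentwise since $D$ is $\Sigma$-regular, hence in particular has non-negative order $0$ at the glued points, so every $f \in V$ is finite there). Then $\mathrm{L}(D,\Sigma) = \ker\bigl( V \xrightarrow{\ \Theta\ } \Complex^{|\Sigma|}\bigr)$ where $\Theta(f) = \bigl(f(P_i^+) - f(P_i^-)\bigr)_{i=1}^{|\Sigma|}$. Hence $\dim \mathrm{L}(D,\Sigma) \geq \dim V - |\Sigma| \geq |\deg D| - g(X) + c(X) - |\Sigma|$. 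Now I invoke Proposition~\ref{regDiff}(2): $g(X,\Sigma) = g(X) + |\Sigma| + c(X,\Sigma) - c(X)$, equivalently $g(X) - c(X) = g(X,\Sigma) - |\Sigma| - c(X,\Sigma)$. Substituting, $\dim \mathrm{L}(D,\Sigma) \geq |\deg D| - \bigl(g(X,\Sigma) - |\Sigma| - c(X,\Sigma)\bigr) - |\Sigma| = |\deg D| - g(X,\Sigma) + c(X,\Sigma)$, which is exactly the claim.

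The only genuinely delicate point is the very first inequality $\dim V \geq |\deg D| - g(X) + c(X)$: the classical Riemann inequality on a single connected surface $X_i$ requires the degree $d_i$ to make sense, and when $d_i$ is very negative the space $\mathrm{L}_{X_i}(D_i)$ is zero while $d_i - g(X_i) + 1$ is negative, so the inequality still holds trivially; I would spell this out so the reader sees there is no sign issue. The bound $\dim V \geq \sum_i \max(0,\,d_i - g(X_i) + 1) \geq \sum_i (d_i - g(X_i) + 1) = |\deg D| - g(X) + c(X)$ is then immediate, and passing to the subspace cut out by $|\Sigma|$ linear functionals can only drop the dimension by at most $|\Sigma|$. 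I expect no real obstacle here — the proof is essentially a one-line application of classical Riemann–Roch plus the genus formula from Proposition~\ref{regDiff} — the main thing to be careful about is that the $|\Sigma|$ gluing functionals are counted correctly and that the $\Sigma$-regularity of $D$ is used exactly where it is needed (to guarantee the values $f(P_i^{\pm})$ are finite, so that $\Theta$ is well-defined on all of $V$).
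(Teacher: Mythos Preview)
Your proof is correct and essentially identical to the paper's: both bound $\dim \mathrm{L}(D)$ from below via component-wise classical Riemann--Roch, then realize $\mathrm{L}(D,\Sigma)$ as the kernel of the evaluation map $f \mapsto (f(P_i^+)-f(P_i^-))_i$ into $\Complex^{|\Sigma|}$, and finish with the genus formula from Proposition~\ref{regDiff}. Your version is slightly more careful in spelling out why the componentwise Riemann inequality holds even for negative-degree components.
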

\begin{proof}
Let $\mathrm{L}(D) = \mathrm{L}(D,\emptyset)$. We have
$$
\dim \mathrm{L}(D) = \sum_{i=1}^{c(X)} \dim \mathrm{L}(D\mid_{X_i}) \geq  \sum_{i=1}^{c(X)}\left( \deg D\mid_{X_i} -g(X_i) + 1\right) = |\deg D| + c(X)  - g(X).
$$
Consider a map $\delta \colon \mathrm{L}(D) \to \Complex^{|\Sigma|}$ given by
$
\delta(f) = (f(P_1^+) - f(P_1^-), \dots).
$
We have $\mathrm{L}(D, \Sigma) = \Ker \delta$, so
$$\dim \mathrm{L}(D, \Sigma) = \dim \mathrm{L}(D) - \dim \Imm \delta \geq \dim \mathrm{L}(D) - |\Sigma| =  |\deg D| - g(X,\Sigma) + c(X,\Sigma).$$
\end{proof}
Let $D$ and $D'$ be $\Sigma$-linearly equivalent divisors. Then it easy to see that there exists an isomorphism $\phi \colon \mathrm{L}(D,\Sigma) \to \mathrm{L}(D', \Sigma)$. This allows us to define the set
$$
W_d(X,\Sigma) =\{ [D] \in  \Pic_d(X,\Sigma) : \mathrm L(D,\Sigma) \neq 0\}.
$$
By Riemann's inequality, we have $W_d(X,\Sigma) =  \Pic_d(X,\Sigma)$ if $|d| \geq g(X,\Sigma) - c(X,\Sigma) + 1$. However, if the curve is reducible, then it may happen that
$W_d(X,\Sigma) =  \Pic_d(X,\Sigma)$ even if $|d| \leq g(X,\Sigma) - c(X,\Sigma)$. In particular, if $|d| = g(X,\Sigma) - c(X,\Sigma)$, then the set $W_d(X,\Sigma) $ is a proper subset of $ \Pic_d(X,\Sigma)$ if and only if $d$ satisfies the so-called \textit{semistability} condition.
 \begin{definition}Let $d$ be a multidegree of total degree $g(X,\Sigma) - c(X,\Sigma)$. Then $d$ is called \textit{semistable}
  if for each subcurve $ X_I /  \Sigma_I \subset  X /  \Sigma$ we have
\begin{align}\label{LINEQ}
\left\lvert d_I \right\rvert \geq g( X_I, \Sigma_I) -  c( X_I, \Sigma_I).
\end{align}
 \end{definition}


\begin{statement}\label{semistabMain}
Let $d$ be a multidegree of total degree $g(X,\Sigma) - c(X,\Sigma)$. Then
\begin{longenum}
\item if $d$ is semistable, then $W_d(X,\Sigma) $ has a positive codimension in $\Pic_d(X,\Sigma)$;
\item  if $d$ is not semistable, then $W_d(X,\Sigma)  = \Pic_d(X,\Sigma)$.
\end{longenum}
\end{statement}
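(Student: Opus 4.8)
The plan is to prove the two parts by rather different means: part (b) is a short direct argument, while part (a) combines a structural reduction with a dimension count.

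\medskip
\noindent\textbf{Part (b).} Assume $d$ is not semistable, so there is a nonempty subcurve $X_I/\Sigma_I$ with $|d_I|\le g(X_I,\Sigma_I)-c(X_I,\Sigma_I)-1$; since $|d|=g(X,\Sigma)-c(X,\Sigma)$ the subset $I$ must be proper, so $J:=\{1,\dots,c(X)\}\setminus I$ is nonempty. I would first record the additive identity
\[
g(X,\Sigma)-c(X,\Sigma)=\bigl(g(X_I,\Sigma_I)-c(X_I,\Sigma_I)\bigr)+\bigl(g(X_J,\Sigma_J)-c(X_J,\Sigma_J)\bigr)+e,
\]
where $e$ is the number of nodes joining $X_I$ to $X_J$; this follows at once from Proposition~\ref{regDiff}, since $g(\,\cdot\,)-c(\,\cdot\,)=g(X)+|\Sigma|-c(X)$ is additive over the disjoint decomposition $X=X_I\sqcup X_J$. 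Combined with the bound on $|d_I|$ it yields $|d_J|-e\ge g(X_J,\Sigma_J)-c(X_J,\Sigma_J)+1$. Then, given any $\Sigma$-regular $D$ of multidegree $d$, let $B$ be the sum of those endpoints of the $I$--$J$ nodes lying on $X_J$; the divisor $D|_{X_J}-B$ is $\Sigma_J$-regular of total degree $|d_J|-e$, so Riemann's inequality for $X_J/\Sigma_J$ gives a nonzero $f\in\mathrm{L}(D|_{X_J}-B,\Sigma_J)$. Such $f$ automatically vanishes at every point of $B$, so extending it by $0$ on $X_I$ produces a nonzero element of $\mathrm{L}(D,\Sigma)$; hence $W_d(X,\Sigma)=\Pic_d(X,\Sigma)$.

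\medskip
\noindent\textbf{Part (a).} Since $h^0$ is upper semicontinuous the set $W_d(X,\Sigma)$ is closed, and $\Pic_d(X,\Sigma)$ is irreducible, so it suffices to place $W_d$ inside a proper closed subvariety. The key step is a reduction lemma: if $f\in\mathrm{L}(D,\Sigma)$ is nonzero and $J$ is the set of components on which $f\not\equiv0$, then $f|_{X_J}$ is a $\Sigma_J$-regular function lying in $\mathrm{L}(D|_{X_J}-B_J,\Sigma_J)$ and nonvanishing on every component of $X_J$, where $B_J$ is the sum of the endpoints on $X_J$ of the nodes joining $X_J$ to its complement; conversely any such function extends by zero. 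Thus $W_d=\bigcup_{\emptyset\ne J}V_J$, where $V_J$ is the locus of $[D]$ for which $D|_{X_J}-B_J$ admits a section nonvanishing on every component of $X_J$. Next I would bound each $V_J$: the identity above, applied to $X=X_J\sqcup X_{J^c}$ together with semistability of $d$ at $X_{J^c}$ (for proper $J$; for $J$ everything one uses $|d|=g-c$ directly), shows that $D|_{X_J}-B_J$ has total degree at most $g(X_J,\Sigma_J)-c(X_J,\Sigma_J)$, which is strictly less than $\dim\Pic(X_J,\Sigma_J)=g(X_J,\Sigma_J)$ because $c(X_J,\Sigma_J)\ge1$. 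A section nonvanishing on all components is either invertible at the nodes of $\Sigma_J$ — hence exhibits $[D|_{X_J}]-[B_J]$ as the class of an effective $\Sigma_J$-regular divisor — or it vanishes at some $S\subseteq\Sigma_J$, in which case it becomes such a section of $D|_{X_J}-B_J-\sum_{P\in S}(P^++P^-)$ on the partial normalization $X_J/(\Sigma_J\setminus S)$; deleting a node drops the total degree by $2$ and $g-c$ by $1$, so after finitely many steps one lands in a situation where an effective divisor occurs whose total degree stays strictly below the dimension of the ambient generalized Picard variety. The effective divisors of a fixed multidegree on a nodal curve sweep out a subvariety of that dimension, so each resulting ``effective locus'' is proper and closed; pulling back along the relevant restriction, partial-normalization and translation maps — all surjective, by the epimorphisms $i_I^*$ and $\pi_\Lambda^*$ of Section~\ref{pnsc} — and taking the union over the finitely many $J$ and $S$ places $W_d(X,\Sigma)$ inside a finite union of proper closed subsets of the irreducible variety $\Pic_d(X,\Sigma)$, which is itself proper. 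Hence $W_d$ has positive codimension.

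\medskip
The routine parts — the additive identity, the surjectivity of the various maps between generalized Picard varieties, and the dimension count for effective divisors — I would only sketch. The main obstacle is the bookkeeping in part (a): one must track precisely which subcurves and partial normalizations appear as a section degenerates, and verify that the semistability inequality is inherited, with strict slack, at each step; packaging this as an induction on $|\Sigma|+c(X)$ looks like the cleanest way to organize it.
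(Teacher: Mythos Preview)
Your proof is correct and follows essentially the same strategy as the paper, though the paper organizes it differently. The paper introduces an auxiliary notion of \emph{R-semistability} (the inequality $|d_I|\le g(X_I,\Sigma_I)-c(X_I,\Sigma_I)+\kappa(I)$ for every proper subcurve) and shows, via the same additive identity you use, that for $|d|=g(X,\Sigma)-c(X,\Sigma)$ semistability and R-semistability are equivalent. It then splits $W_d=E_d\cup R_d\cup N_d$ into the loci where a nonzero section is, respectively, invertible, identically zero on some component, or vanishing at some node; two preliminary propositions handle $N_d$ (your $S\neq\emptyset$ case) and $R_d$ (your proper-$J$ case) separately before the final assembly. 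Your argument carries out the same reductions inline, without naming R-semistability or stating the two lemmas in advance: your $V_J$ for proper $J$ is the paper's $R_d$, and your passage to partial normalizations when the section vanishes at nodes is the paper's $N_d$ argument. Part (b) is literally the paper's construction for the ``not R-semistable'' half, read through the equivalence. The paper's packaging makes the pieces reusable (it later needs the $R_d$ and $N_d$ propositions again for uniform multidegrees), while your single-pass version is more self-contained for this one statement.
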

This result is due to Beauville \cite{beauville} and Alexeev \cite{Alexeev}. We will get a proof of this statement as a by-product of our further considerations.\par
Similarly to $W_d(X,\Sigma)$, we define $E_{d}(X, \Sigma) \subset \Pic_d(X,\Sigma)$ as the set of those classes \nolinebreak of divisors which are representable by effective divisors. Obviously, 
we have $E_{d}(X, \Sigma) \subset W_d(X,\Sigma)$. Moreover, these two sets are equal for non-singular connected curves.\par
 If we define $\mathring X_i = X_i \setminus (X_i \cap \mathrm{supp}(\Sigma))$, then the set $E_{d}(X, \Sigma)$ can be described as the image of the map
$$
 (\mathring X_1)^{d_1} \times \dots \times (\mathring X_k)^{d_k} \to \Pic_d(X,\Sigma)
$$
which maps a collection of points to the corresponding effective divisor. This description makes it obvious that $E_{d}(X, \Sigma)$ has a positive codimension in $\Pic_d(X,\Sigma)$ if $|d| < g(X,\Sigma)$. However, for reducible curves, the set $E_{d}(X, \Sigma)$ can have a positive codimension even if $|d| \geq g(X,\Sigma)$. This motivates us to give the following definition.
 \begin{definition}Let $d$ be a multidegree of total degree $g(X,\Sigma)$. We say that $d$ is \textit{uniform}
  if for each subcurve $ X_I /  \Sigma_I \subset  X /  \Sigma$ we have
$$
\left\lvert d_I \right\rvert \geq g( X_I, \Sigma_I).
$$ 
\end{definition}
\begin{statement}\label{density}
Let $d$ be a multidegree of total degree $g(X,\Sigma)$. \begin{longenum}
\item if $d$ is uniform, then $E_d(X,\Sigma) $ is dense in $\Pic_d(X,\Sigma)$;
\item  if $d$ is not uniform, then $E_d(X,\Sigma)$ has positive codimension in  $\Pic_d(X,\Sigma)$.
\end{longenum} \end{statement}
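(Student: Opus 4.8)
The plan is to exhibit $E_d(X,\Sigma)$ as the image of the natural Abel-type map and to control the rank of its differential by means of $\Sigma$-regular differentials. If some component $d_i$ is negative, then $E_d(X,\Sigma)=\emptyset$ and $d$ is not uniform (apply the defining inequality to the subcurve $X_i/\Sigma_i$), so I may assume $d_i\ge 0$ for all $i$. Set $k=c(X)$ and let $\mathbf P=(\mathring X_1)^{d_1}\times\dots\times(\mathring X_k)^{d_k}$ be the space of effective $\Sigma$-regular divisors of multidegree $d$; it is an irreducible complex manifold with $\dim\mathbf P=\lvert d\rvert=g(X,\Sigma)$, and $E_d(X,\Sigma)=\Phi_d(\mathbf P)$ for the natural map $\Phi_d\colon\mathbf P\to\Pic_d(X,\Sigma)$. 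Since $\Pic_d(X,\Sigma)$ is a torsor over the connected group $\Jac(X,\Sigma)=\Omega^1(X,\Sigma)^\ast/L(X,\Sigma)$, it is irreducible of the same dimension $g(X,\Sigma)$. Therefore $E_d(X,\Sigma)$ is dense as soon as $\mathrm d\Phi_d$ is an isomorphism at one point $D_0$: then $\Phi_d$ is a submersion at $D_0$, its image contains an open set, and the irreducible closed subset $\overline{E_d(X,\Sigma)}\subseteq\Pic_d(X,\Sigma)$ has full dimension, hence coincides with $\Pic_d(X,\Sigma)$. So (a) and (b) amount to deciding whether $\mathrm d\Phi_d$ is generically an isomorphism. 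For (b) no such computation is needed: given a subcurve $X_I/\Sigma_I$ with $\lvert d_I\rvert<g(X_I,\Sigma_I)$, the restriction map on divisors induces the epimorphism $i_I^\ast\colon\Pic_d(X,\Sigma)\to\Pic_{d_I}(X_I,\Sigma_I)$ of Section \ref{pnsc}, which is equivariant over the surjective homomorphism $\Jac(X,\Sigma)\to\Jac(X_I,\Sigma_I)$ of connected groups and hence a fibre bundle with fibres of dimension $g(X,\Sigma)-g(X_I,\Sigma_I)$; the restriction of an effective $\Sigma$-regular divisor is effective and $\Sigma_I$-regular, so $i_I^\ast(E_d(X,\Sigma))\subseteq E_{d_I}(X_I,\Sigma_I)$, and the latter is the image of a space of dimension $\lvert d_I\rvert<g(X_I,\Sigma_I)=\dim\Pic_{d_I}(X_I,\Sigma_I)$, hence lies in a proper closed subset $Z$. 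Thus $E_d(X,\Sigma)\subseteq(i_I^\ast)^{-1}(Z)$ has dimension at most $\dim Z+g(X,\Sigma)-g(X_I,\Sigma_I)<g(X,\Sigma)$, i.e. positive codimension.

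For (a) I will use two facts about $\Sigma$-regular differentials. First, identifying $T_{[D]}\Pic_d(X,\Sigma)$ with $\Omega^1(X,\Sigma)^\ast$ as in Section \ref{nc}, a deformation computation — vary the points of a divisor $D=\sum_{i,j}P_{ij}$ with distinct $P_{ij}$, using $\langle\pazocal A_\Sigma(\,\cdot\,),\omega\rangle=\int\omega$ — shows that $\mathrm{Im}\,(\mathrm d\Phi_d)_D$ is the annihilator of $W_D=\{\omega\in\Omega^1(X,\Sigma):\omega(P_{ij})=0\ \forall\,i,j\}$, so $(\mathrm d\Phi_d)_D$ is surjective precisely when $W_D=0$. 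Second, for every nonempty $I\subseteq\{1,\dots,k\}$, extension by zero gives an isomorphism
\[
\Omega^1(X_I,\Sigma_I)\ \xrightarrow{\ \sim\ }\ \bigl\{\omega\in\Omega^1(X,\Sigma):\ \omega|_{X_j}\equiv 0\ \text{ for all }j\notin I\bigr\}:
\]
indeed a $\Sigma$-regular differential vanishing off $X_I$ has, by the residue relation \eqref{resSum}, zero residue at every branch lying on $X_I$ of a node joining $X_I$ to $X\setminus X_I$ (the residue on the other branch vanishes), so it is regular there and restricts to a $\Sigma_I$-regular differential on $X_I$; the converse is obvious and injectivity is clear. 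In particular this subspace has dimension $g(X_I,\Sigma_I)$.

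Now assume $d$ is uniform. Let $\mathbf P^\circ\subseteq\mathbf P$ be the open dense set of divisors with distinct points and consider the incidence variety
\[
\mathcal I=\bigl\{(D,[\omega])\in\mathbf P^\circ\times\mathbf P(\Omega^1(X,\Sigma)):\ \omega(P)=0\ \text{ for every point }P\text{ of }D\bigr\},
\]
a closed analytic subvariety whose first projection is proper (the second factor is compact) with image exactly $\{D\in\mathbf P^\circ:W_D\neq 0\}$. Stratify $\mathbf P(\Omega^1(X,\Sigma))$ by the nonempty subset $I(\omega)=\{i:\omega|_{X_i}\not\equiv 0\}$. By the second fact, the locus $\{I(\omega)\subseteq I\}$ lies in $\mathbf P(\Omega^1(X_I,\Sigma_I))$, so has dimension at most $g(X_I,\Sigma_I)-1$; and over a point of it the fibre of the second projection of $\mathcal I$ has dimension $\lvert d_{I^c}\rvert$, where $I^c=\{1,\dots,k\}\setminus I$, because the part of $D$ on the components in $I$ must lie in the finite zero set of $\omega$ while the part on the components in $I^c$ is arbitrary. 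Using $\lvert d_{I^c}\rvert=g(X,\Sigma)-\lvert d_I\rvert$ together with the uniformity bound $\lvert d_I\rvert\ge g(X_I,\Sigma_I)$, every stratum of $\mathcal I$ has dimension at most $g(X_I,\Sigma_I)-1+g(X,\Sigma)-g(X_I,\Sigma_I)=g(X,\Sigma)-1$, so $\dim\mathcal I<\dim\mathbf P$. Hence $\{D:W_D\neq 0\}$ is a proper closed subset of $\mathbf P^\circ$; for $D$ in its (dense) complement $W_D=0$, so $(\mathrm d\Phi_d)_D$ is an isomorphism, and by the first paragraph $E_d(X,\Sigma)$ is dense in $\Pic_d(X,\Sigma)$.

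I expect the main obstacle to be the dimension count for $\mathcal I$, and within it the second of the two facts above — that a $\Sigma$-regular differential supported on a subcurve is automatically a regular differential of that subcurve, the point being the cancellation of residues at the nodes linking the subcurve to its complement. Once these are secured, assertion (a) is precisely the uniformity inequality applied to the subcurves $X_I/\Sigma_I$, and assertion (b) is the elementary fibre-dimension count through $i_I^\ast$.
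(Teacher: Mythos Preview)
Your argument is correct, and part (b) coincides with the paper's proof. For part (a), however, you take a genuinely different route. The paper argues indirectly: it decomposes $W_d(X,\Sigma)=E_d\cup R_d\cup N_d$, observes that $W_d=\Pic_d$ by Riemann's inequality since $|d|=g(X,\Sigma)$, then shows that uniformity (after reducing to the connected case) implies the auxiliary condition of $R$-semistability, whence $R_d$ has positive codimension by Proposition~\ref{yd}; together with Proposition~\ref{zd} on $N_d$, this forces $E_d$ to be dense. Your approach is the direct Brill--Noether computation: the cokernel of $\mathrm d\Phi_d$ at a reduced divisor $D$ is dual to the space of $\Sigma$-regular differentials vanishing along $D$, and the incidence-variety count (stratified by the support $I(\omega)$, with the key observation that a $\Sigma$-regular differential supported on $X_I$ is exactly a $\Sigma_I$-regular differential on the subcurve) shows this space is generically zero precisely under the uniformity inequalities. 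Your argument is more self-contained and transparent for this statement alone; the paper's route has the side benefit of simultaneously establishing Proposition~\ref{semistabMain} on semistable multidegrees and the structure of $R_d$, $N_d$, which are used elsewhere.
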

We shall postpone the proof of this result until we have obtained several preliminary statements. Let us consider the following decomposition of the space $\mathrm{L}(D, \Sigma)$:
$$
\mathrm{L}(D, \Sigma) = \mathrm L^{(*)}(D, \Sigma) \sqcup \mathrm L^{(r)}(D, \Sigma) \sqcup \mathrm L^{(n)}(D, \Sigma) 
$$
where 
\begin{align*}
 \mathrm L^{(*)}(D, \Sigma)  =\mathrm{L}(&D, \Sigma)  \cap  \pazocal M^*(X,\Sigma), \\  \mathrm L^{(r)}(D, \Sigma) = \{  f \in \mathrm{L}(D,\Sigma) &:  f\mid_{X_i} \equiv 0 \mbox{ for some } 1 \leq i \leq k \},\\
 \mathrm L^{(n)}(D, \Sigma) = \{  f \in \mathrm{L}(D,\Sigma) \setminus \mathrm L^{(r)}(D,&\, \Sigma)  :  f(P_i^\pm) = 0 \mbox{ for some } 1 \leq i \leq |\Sigma| \}.
\end{align*}
\begin{statement}\label{effect}Let $D \in \Div_d(X,\Sigma)$. Then $D \in E_d(X,\Sigma)$ if and only if $L^{(*)}(D, \Sigma) \neq \emptyset$.
\end{statement}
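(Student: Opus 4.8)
The plan is to simply unwind the definitions; no substantial input beyond the definition of $\Sigma$-linear equivalence and the fact that a non-zero meromorphic function on a compact Riemann surface has divisor of degree zero is needed. For the ``only if'' direction, suppose $[D] \in E_d(X,\Sigma)$, so that there is an effective $\Sigma$-regular divisor $D'$ with $D' \stackrel{\Sigma}{\sim} D$. By the definition of $\Sigma$-linear equivalence, $D' - D = (f)$ for some $f \in \pazocal M^*(X,\Sigma)$. Since $D'$ is effective, $\mathrm{ord}_P\,f = D'(P) - D(P) \geq -D(P)$ at every $P \in X$, hence $f \in \mathrm{L}(D,\Sigma)$; as $f \in \pazocal M^*(X,\Sigma)$ by construction, this gives $f \in \mathrm{L}^{(*)}(D,\Sigma)$, which is therefore non-empty.

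For the ``if'' direction, I would pick $f \in \mathrm{L}^{(*)}(D,\Sigma)$ and set $D' = D + (f)$. The membership $f \in \mathrm{L}(D,\Sigma)$ says precisely that $\mathrm{ord}_P\,f \geq -D(P)$ for all $P$, so $D'(P) = D(P) + \mathrm{ord}_P\,f \geq 0$ and $D'$ is effective. I then need to check that $D'$ is $\Sigma$-regular and has multidegree $d$: since $f \in \pazocal M^*(X,\Sigma)$, it has neither zeros nor poles on $\mathrm{supp}(\Sigma)$, so $\mathrm{ord}_P\,f = 0$ there, and as $D$ is already $\Sigma$-regular we get $D'(P) = 0$ for all $P \in \mathrm{supp}(\Sigma)$; moreover each restriction $f\!\mid_{X_i}$ is a non-zero meromorphic function on the compact Riemann surface $X_i$, hence $\deg(f\!\mid_{X_i}) = 0$, so $(f)$ has multidegree $0$ and $D' \in \Div_d(X,\Sigma)$. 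Finally $D' = D + (f) \stackrel{\Sigma}{\sim} D$, so $[D] = [D'] \in E_d(X,\Sigma)$.

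I do not anticipate any real obstacle; the only points that require a moment's care are (i) verifying that the two defining conditions of $\pazocal M^*(X,\Sigma)$ — non-vanishing on $\mathrm{supp}(\Sigma)$ together with $f\!\mid_{X_i}\not\equiv 0$ — are exactly what keeps $D'$ both $\Sigma$-regular and of unchanged multidegree, and (ii) using that the components $X_i$ are compact, so that principal divisors have total degree zero on each of them. One could also package both implications at once by observing that $f \mapsto D + (f)$ induces a bijection between $\mathrm{L}^{(*)}(D,\Sigma)$ modulo non-zero scalars and the set of effective $\Sigma$-regular divisors in the class $[D]$, but for the statement as phrased the two implications above are all that is required.
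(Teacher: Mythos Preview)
Your proof is correct and is precisely the unwinding of definitions that the paper has in mind; the paper's own proof consists of the single word ``Obvious.''
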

\begin{proof} Obvious.\end{proof}
It is easy to see that if $D$ and $D'$ are $\Sigma$-linearly equivalent, then the aforementioned isomorphism $\phi \colon \mathrm{L}(D,\Sigma) \to \mathrm{L}(D', \Sigma)$ maps $\mathrm L^{(r)}(D,\Sigma)$ to $\mathrm L^{(r)}(D', \Sigma)$ and $\mathrm L^{(n)}(D,\Sigma)$ to $\mathrm L^{(n)}(D', \Sigma)$.
Therefore, for each multi-degree $d$, we can define the sets
\begin{align*}
R_d(X,\Sigma) &=\{ [D] \in  \Pic_d(X,\Sigma) : \mathrm L^{(r)}(D,\Sigma) \neq 0\}, \quad \\ N_d(X,\Sigma) &=\{ [D] \in  \Pic_d(X,\Sigma) : \mathrm L^{(n)}(D,\Sigma) \neq \emptyset\}. 
\end{align*}
We have
$
W_d(X,\Sigma) = E_d(X,\Sigma) \cup R_d(X,\Sigma) \cup N_d(X,\Sigma).
$
We note that $R_d$ is empty for irreducible curves, while $N_d$ is empty for non-singular curves. 
\begin{statement}\label{zd}Let $d$ be a multidegree of total degree $|d| \leq g(X,\Sigma)$. Then $N_d(X,\Sigma)$ has positive codimension in   $\Pic_d(X,\Sigma)$.
\end{statement}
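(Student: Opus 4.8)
The plan is to argue by induction on the number $\sigma=|\Sigma|$ of nodes, peeling off one node at a time and reducing the statement for $X/\Sigma$ to the same statement for a partial normalization $X/\Lambda_i$, $\Lambda_i=\Sigma\setminus\{\pazocal P_i\}$, with the ``genus lost'' in passing to $X/\Lambda_i$ being absorbed by the set $E_{d'}$, whose codimension is already controlled earlier in the text. The base case $\sigma=0$ is immediate: a smooth curve has $\mathrm L^{(n)}(D,\Sigma)=\emptyset$ for every $D$, hence $N_d(X,\emptyset)=\emptyset$, which is trivially of positive codimension.

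For the inductive step, fix $\pazocal P_i=\{P_i^+,P_i^-\}\in\Sigma$ and set $\Lambda_i=\Sigma\setminus\{\pazocal P_i\}$, so that $X/\Lambda_i$ has $\sigma-1$ nodes. The key observation I would establish first is a transfer identity between the two linear systems: since a $\Sigma$-regular divisor $D$ satisfies $D(P_i^\pm)=0$, a meromorphic function $f$ on $X$ lies in $\mathrm L(D,\Sigma)$ with $f(P_i^+)=f(P_i^-)=0$ \emph{if and only if} $f\in\mathrm L(D-P_i^+-P_i^-,\Lambda_i)$ --- vanishing at the two preimages of the node is exactly what upgrades $\Lambda_i$-regularity at $\pazocal P_i$ to $\Sigma$-regularity and at the same time accounts for the shift of the divisor by $-P_i^+-P_i^-$. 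Intersecting both sides with ``nonzero on every irreducible component'', using $\mathrm L(E,\Lambda_i)\setminus\mathrm L^{(r)}(E,\Lambda_i)=\mathrm L^{(*)}(E,\Lambda_i)\sqcup\mathrm L^{(n)}(E,\Lambda_i)$, and invoking Proposition \ref{effect} to recognize the $\mathrm L^{(*)}$-locus as $E_{d'}$, one obtains the description
\[
N_d(X,\Sigma)=\bigcup_{i=1}^{\sigma}\rho_i^{-1}\!\bigl(E_{d'}(X,\Lambda_i)\cup N_{d'}(X,\Lambda_i)\bigr),\qquad \Lambda_i=\Sigma\setminus\{\pazocal P_i\},
\]
where $\rho_i\colon\Pic_d(X,\Sigma)\to\Pic_{d'}(X,\Lambda_i)$ sends $[D]_\Sigma$ to $[D-P_i^+-P_i^-]_{\Lambda_i}$ and $d'$ is the multidegree with total degree $|d'|=|d|-2$. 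The map $\rho_i$ descends to Picard groups because, as already recorded in the text, the natural isomorphisms between the spaces $\mathrm L(D,\Sigma)$ for $\Sigma$-linearly equivalent $D$ respect the trichotomy $\mathrm L=\mathrm L^{(*)}\sqcup\mathrm L^{(r)}\sqcup\mathrm L^{(n)}$.

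It then remains to bound each $\rho_i^{-1}(\cdots)$. I would note that $\rho_i$ is equivariant over the natural homomorphism $\Jac(X,\Sigma)\to\Jac(X,\Lambda_i)$, which is surjective because every cycle on $X\setminus\mathrm{supp}(\Sigma)$ can be perturbed off the two extra punctures, so the period lattice $L(X,\Sigma)$ maps onto $L(X,\Lambda_i)$ (compare the construction of $\Jac$ and diagram \eqref{9diag}); since $\Pic_d(X,\Sigma)$ is a nonempty torsor under $\Jac(X,\Sigma)$, this forces $\rho_i$ itself to be surjective, in particular dominant, so that $\rho_i^{-1}$ of a set of positive codimension again has positive codimension. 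Hence it suffices to show $E_{d'}(X,\Lambda_i)\cup N_{d'}(X,\Lambda_i)$ has positive codimension in $\Pic_{d'}(X,\Lambda_i)$. By Statement \ref{regDiff} one has $g(X,\Lambda_i)=g(X,\Sigma)-1-\bigl(c(X,\Sigma)-c(X,\Lambda_i)\bigr)\ge g(X,\Sigma)-1$ (identifying one more pair of points drops the number of connected components by $0$ or $1$), so $|d'|=|d|-2\le g(X,\Sigma)-2\le g(X,\Lambda_i)-1<g(X,\Lambda_i)$. Therefore $E_{d'}(X,\Lambda_i)$ has positive codimension by the dimension count given just before the definition of uniform multidegree, while $N_{d'}(X,\Lambda_i)$ has positive codimension by the inductive hypothesis, applicable since $X/\Lambda_i$ has $\sigma-1$ nodes and $|d'|\le g(X,\Lambda_i)$. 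A finite union of subsets of positive codimension in the irreducible variety $\Pic_d(X,\Sigma)$ is again of positive codimension, and this union is $N_d(X,\Sigma)$; the induction is complete.

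The step I expect to cost the most effort is the transfer identity and the clean formula for $N_d(X,\Sigma)$ it produces: one must verify carefully that vanishing of a $\Sigma$-regular function at a node is literally membership in the shifted linear system for the partial normalization, that $\rho_i$ is well defined on linear-equivalence classes, and that on $X/\Lambda_i$ the pieces $\mathrm L^{(*)}$, $\mathrm L^{(r)}$, $\mathrm L^{(n)}$ correspond respectively to $E_{d'}$, to what is discarded, and to $N_{d'}$. Everything past that point is bookkeeping with arithmetic genera, plus the single external input that $\rho_i$ is surjective, which is an elementary feature of generalized Jacobians.
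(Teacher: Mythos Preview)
Your argument is correct. The transfer identity, the well-definedness and surjectivity of $\rho_i$, the genus bookkeeping, and the inductive appeal all check out; in particular your inequality $|d'|=|d|-2\le g(X,\Sigma)-2\le g(X,\Lambda_i)-1$ is exactly what is needed both to control $E_{d'}$ and to feed the induction.

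The paper's proof, however, is not inductive: instead of peeling off a single node and landing in $E_{d'}\cup N_{d'}$, it removes \emph{all} nodes at which the witness $f\in\mathrm L^{(n)}(D,\Sigma)$ vanishes simultaneously. If $f$ vanishes precisely at $\pazocal P_{i_1},\dots,\pazocal P_{i_p}$, set $\Lambda=\Sigma\setminus\{\pazocal P_{i_1},\dots,\pazocal P_{i_p}\}$ and $D_\Lambda=\sum_{j=1}^{p}(P_{i_j}^++P_{i_j}^-)$; then $f\in\mathrm L^{(*)}(D-D_\Lambda,\Lambda)$ outright, so one obtains the one-shot inclusion
\[
N_d(X,\Sigma)\subset\bigcup_{\Lambda\subsetneq\Sigma}\bigl(\pi_\Lambda^*\bigr)^{-1}\!\bigl(E_{e(\Lambda)}(X,\Lambda)+[D_\Lambda]\bigr),\qquad e(\Lambda)=d-\deg D_\Lambda,
\]
and it remains only to check $|e(\Lambda)|=|d|-2p<g(X,\Sigma)-p\le g(X,\Lambda)$, giving positive codimension of each $E_{e(\Lambda)}$ directly. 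Your inductive route recovers the same bound step by step; the trade-off is that you avoid indexing over all proper $\Lambda\subsetneq\Sigma$ at the cost of carrying an inductive hypothesis. Either way the content is the same: the ``lost'' degree $2p$ strictly outruns the genus drop, which is at most $p$.
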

\begin{proof}
Let $\Lambda \subset \Sigma$, $S = \mathrm{supp}(\Sigma \setminus \Lambda)$ and let $D_\Lambda = \sum_{P \in S} P$. 
 Let also $e(\Lambda) = d - \deg D_\lambda$.  We claim that
$$
N_d(X,\Sigma) \subset \bigcup_{\Lambda \subsetneq \Sigma} \left(\pi_\Lambda^*\right)^{-1}\left(E_{e(\Lambda)}(X, \Lambda) +  [D_\Lambda]\right).
$$
Indeed, let $[D] \in N_d(X,\Sigma)$. Then there exists $f \in L^{(n)}(X,\Sigma)$. Assume that $f$ vanishes at points $P_{i_1}^\pm, \dots, P_{i_p}^\pm$, and let $$\Lambda = \Sigma \setminus \{\{P_{i_1}^+, P_{i_1}^-\}, \dots, \{P_{i_p}^+, P_{i_p}^-\}\}. $$ \pagebreak[4] We have $f \in L^{(*)}(D - D_\Lambda, \Lambda)$, which implies that $[D] \in \left(\pi_\Lambda^*\right)^{-1}\left(E_{e(\Lambda)}(X, \Lambda) +  [D_\Lambda]\right),$ q.e.d.\par\smallskip Further, we have $$|e(\Lambda) | = |d| - 2p < |d| - p \leq g(X,\Sigma) - p \leq g(X, \Lambda),$$ so $E_{e(\Lambda)}(X, \Lambda)$ has positive codimension in $\Pic_{e(\Lambda)}(X, \Lambda)$, which implies the proposition.
\end{proof}
   Let $ X_I/\Sigma_I \subset X/\Sigma $ be a subcurve. Taking $I'= \{ 1, \dots, c(X)\} \setminus I$, we obtain the complimentary subcurve $X_{I'} / \Sigma_{I'}$.
Let us define the number $\kappa(I) = |\Sigma| - |\Sigma_I| - |\Sigma_{I'}|$ which is equal to the geometric number of points in the intersection $X_I / \Sigma_I \cap X_{I'} / \Sigma_{I'}$. 
 \begin{definition}Let $d$ be a multidegree. We say that $d$ is R-semistable
  if for each proper non-empty subcurve $ X_I /  \Sigma_I \subset  X /  \Sigma$ we have
\begin{align}\label{RINEQ}
\left\lvert d_{I}\right\rvert \leq g(X_I,\Sigma_I) -c(X_I, \Sigma_I)+  \kappa(I).
\end{align}
\end{definition}
\begin{statement}\label{yd}Let $d$ be a multidegree. Then
\begin{longenum}
\item if $d$ is $R$-semistable, then $R_d(X,\Sigma) $ has positive codimension in $\Pic_d(X,\Sigma)$;
\item  if $d$ is not $R$-semistable, then $R_d(X,\Sigma)  = \Pic_d(X,\Sigma)$.
\end{longenum}
\end{statement}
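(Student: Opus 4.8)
The plan is to argue by induction on the number $c(X)$ of irreducible components of $X/\Sigma$. The base case $c(X)=1$ is immediate, since then $\mathrm L^{(r)}(D,\Sigma)=0$ for every $D$, so $R_d(X,\Sigma)=\emptyset$ while $R$-semistability is vacuous. For the inductive step I would first decompose
$$ R_d(X,\Sigma)=\bigcup\nolimits_{\emptyset\neq I\subsetneq\{1,\dots,c(X)\}} R_d^I, $$
where $R_d^I$ is the set of classes $[D]$ carrying a nonzero $f\in\mathrm L(D,\Sigma)$ with $f\mid_{X_{I'}}\equiv 0$ ($I'$ the complementary index set); the union is exhaustive because an element of $\mathrm L^{(r)}$ vanishes identically on at least one single component. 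The local observation is that such an $f$, being $\Sigma$-regular and vanishing on the $X_{I'}$-branch of each node joining $X_I$ to $X_{I'}$, must also vanish at the $\kappa(I)$ points of $X_I$ lying over those nodes. Writing $B_I$ for the corresponding reduced effective divisor on $X_I$ and $e(I)=d_I-\deg B_I$ (so $|e(I)|=|d_I|-\kappa(I)$), restriction to $X_I$ followed by subtraction of $B_I$, together with extension-by-zero in the reverse direction, identifies
$$ R_d^I=(i_I^*)^{-1}\bigl(W_{e(I)}(X_I,\Sigma_I)+[B_I]\bigr). $$
Since $i_I^*\colon\Pic_d(X,\Sigma)\to\Pic_{d_I}(X_I,\Sigma_I)$ is surjective and equivariant for the surjection $\Jac(X,\Sigma)\twoheadrightarrow\Jac(X_I,\Sigma_I)$, hence a fibre bundle, this gives $\codim R_d^I=\codim W_{e(I)}(X_I,\Sigma_I)$, and $R_d^I$ is all of $\Pic_d(X,\Sigma)$ precisely when $W_{e(I)}(X_I,\Sigma_I)=\Pic_{e(I)}(X_I,\Sigma_I)$.

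Part (b) is then immediate: if $d$ violates \eqref{RINEQ} for some proper nonempty $I$, then $|e(I)|\geq g(X_I,\Sigma_I)-c(X_I,\Sigma_I)+1$, so Riemann's inequality for $X_I/\Sigma_I$ forces $W_{e(I)}(X_I,\Sigma_I)=\Pic_{e(I)}(X_I,\Sigma_I)$, whence $R_d(X,\Sigma)\supseteq R_d^I=\Pic_d(X,\Sigma)$. For part (a) I would fix a proper nonempty $I$ and use the decomposition $W_{e(I)}(X_I,\Sigma_I)=E_{e(I)}(X_I,\Sigma_I)\cup R_{e(I)}(X_I,\Sigma_I)\cup N_{e(I)}(X_I,\Sigma_I)$, bounding the three pieces separately. $R$-semistability of $d$ forces $|e(I)|\leq g(X_I,\Sigma_I)-c(X_I,\Sigma_I)<g(X_I,\Sigma_I)$, so $E_{e(I)}(X_I,\Sigma_I)$ has positive codimension (it is the image of a product of punctured copies of the components of $X_I$, of dimension $<g(X_I,\Sigma_I)$), and $N_{e(I)}(X_I,\Sigma_I)$ has positive codimension by Proposition \ref{zd}, whose hypothesis $|e(I)|\leq g(X_I,\Sigma_I)$ is met.

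The remaining term $R_{e(I)}(X_I,\Sigma_I)$ is where the induction enters: $X_I/\Sigma_I$ has strictly fewer components, so by the inductive hypothesis $R_{e(I)}(X_I,\Sigma_I)$ has positive codimension \emph{provided} $e(I)$ is $R$-semistable on $X_I/\Sigma_I$. Checking this is the step I expect to be the crux, even though it reduces to bookkeeping on the dual graph: for $\emptyset\neq J\subsetneq I$, letting $b$ be the number of nodes joining $X_J$ to $X_{I'}$, one has $|e(I)_J|=|d_J|-b$ and $\kappa_{X_I/\Sigma_I}(J)=\kappa(J)-b$ — since any node meeting a $J$-component is either internal to $X_J$, joins $X_J$ to $X_{I\setminus J}$, or joins $X_J$ to $X_{I'}$ — so the inequality \eqref{RINEQ} for $(e(I),J)$ inside $X_I/\Sigma_I$ is literally \eqref{RINEQ} for $(d,J)$ inside $X/\Sigma$, which holds because $J$ is a proper nonempty subset of $\{1,\dots,c(X)\}$. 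Granting this, each $W_{e(I)}(X_I,\Sigma_I)$, hence each $R_d^I$, is a closed set of positive codimension, and since $\Pic_d(X,\Sigma)$ is irreducible (a torsor over the connected group $\Jac(X,\Sigma)$) the finite union $R_d(X,\Sigma)$ has positive codimension as well, completing the induction. Apart from this combinatorial lemma, the only point needing genuine care is the bidirectional verification of the $\Sigma$-regularity conditions in the identification of $R_d^I$ (in particular that extension-by-zero of a section of $\mathrm L(D|_{X_I}-B_I,\Sigma_I)$ lands back in $\mathrm L(D,\Sigma)$, which uses the node-disjointness hypothesis and the $\Sigma$-regularity of the divisor representatives); the rest is assembly of facts already established.
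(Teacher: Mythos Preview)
Your argument is correct, and part (b) matches the paper's proof. For part (a), however, you take a genuinely different route from the paper, and it is worth pointing out what you are missing by doing so.

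The paper avoids induction entirely. Given $[D]\in R_d(X,\Sigma)$ and a nonzero $f\in\mathrm L^{(r)}(D,\Sigma)$, the paper takes $I$ to be the \emph{exact} support of $f$, i.e.\ the set of components on which $f$ is not identically zero. With this choice, $f\mid_{X_I}$ cannot vanish identically on any component of $X_I$, so it lands in $\mathrm L^{(*)}(D\mid_{X_I}-D_I,\Sigma_I)\cup\mathrm L^{(n)}(D\mid_{X_I}-D_I,\Sigma_I)$ and never in $\mathrm L^{(r)}$. This yields the inclusion
\[
R_d(X,\Sigma)\subset\bigcup_{\emptyset\neq I\subsetneq J}(i_I^*)^{-1}\bigl((E_{e(I)}(X_I,\Sigma_I)\cup N_{e(I)}(X_I,\Sigma_I))+[D_I]\bigr),
\]
and the two pieces $E_{e(I)}$ and $N_{e(I)}$ are handled directly by the dimension count and Proposition~\ref{zd}, using only $|e(I)|<g(X_I,\Sigma_I)$.

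Your decomposition instead allows $I$ to be any subset on whose complement $f$ vanishes, so $f\mid_{X_I}$ may still lie in $\mathrm L^{(r)}$ and you are forced to control $R_{e(I)}(X_I,\Sigma_I)$ as well. You do this by induction, which in turn requires your combinatorial lemma that $e(I)$ inherits $R$-semistability on $X_I/\Sigma_I$. The lemma is correct (your edge-count $\kappa_{X_I/\Sigma_I}(J)=\kappa(J)-b$ is right), but the whole detour is unnecessary: by sharpening the choice of $I$ you eliminate the $R$-term outright. What your approach buys is the exact identification $R_d^I=(i_I^*)^{-1}(W_{e(I)}+[B_I])$, which is cleaner than the paper's one-sided inclusion; what the paper's approach buys is a non-inductive proof with no auxiliary combinatorics.
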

\begin{proof} Assume that $d$ is $R$-semistable.
 Let $ X_I/\Sigma_I \subset X/\Sigma $ be a subcurve, and let  $X_{I'} / \Sigma_{I'}$ be the complimentary subcurve. We have the following decomposition of the dual graph:
$$
\Gamma(X,\Sigma) = \Gamma(X_I,\Sigma_I) \sqcup  \Gamma(X_{ I'},\Sigma_{I'} ) \sqcup \{e_{j_1}, \dots, e_{j_\kappa}\}
$$
where $\kappa = \kappa(I)$.
Without loss of generality, we assume that $P_{j_1}^+, \dots, P_{j_\kappa}^+ \in X_I$, and $P_{j_1}^-, \dots, P_{j_\kappa}^- \in X_{ I'}$. Let $$D_{I} = P_{j_1}^+ + \dots + P_{j_q}^+ \in \Div(X_I, \Sigma_I),$$
and let $e(I) = d_I - \deg D_I$. We claim that
$$
R_d(X,\Sigma) \subset \bigcup_{\substack{I \subset J, \\ I \neq \emptyset}} \left(i^*_{I}\right)^{-1}\left((E_{e(I)}(X_I, \Sigma_I) \cup N_{e(I)}(X_I, \Sigma_I))+  [D_{I}]\right)
$$
where $J =  \{ 1, \dots, c(X)\} $.
Indeed, let $[D] \in R_d(X,\Sigma)$. Then there exists $f \in \mathrm L^{(r)}(D,\Sigma)$, $f \neq 0$. Assume that $f \equiv 0$ on irreducible components $X_{i_1}, \dots, X_{i_p}$. Let $I'= \{i_1, \dots i_p\}$, and let $I = J \setminus  I'$. Let also $f_I = f\mid_{X_I}$. Then $$f_I \in L^{(*)}(D\mid_{X_I} - D_{I}, \Sigma_I) \cup L^{(n)}(D\mid_{X_I} - D_{I}, \Sigma_I) ,$$ therefore  $[D] \in \left(i^*_{I}\right)^{-1}\left((E_{e(I)}(X_I, \Sigma_I) \cup Z_{e(I)}(N_I, \Sigma_I))+  [D_{I}]\right),$ q.e.d.\par \smallskip
Further, we have
$$
|e(I)| =  |d_I| - |\deg D_I| = |d_I| - \kappa(I) < g(X_I,\Sigma_I),
$$
so $E_{e(I)}(X_I, \Sigma_I)$ and $N_{e(I)}(X_I, \Sigma_I)$ have positive codimension in $\Pic_{e(I)}(X_I, \Lambda_I)$, which implies that $R_d(X,\Sigma) $ has positive codimension in $\Pic_d(X,\Sigma)$.\par\smallskip
Now, assume that $d$ is not $R$-semistable. Then there exists a subcurve such that $$\left\lvert d_{I}\right\rvert \geq g(X_I,\Sigma_I) -  c(X_I,\Sigma_I)  +  \kappa(I) + 1. $$ Let $[D] \in \Pic_d(X,\Sigma)$. We shall prove that $\mathrm L^{(r)}(D, \Sigma) \neq 0$.
We have $$|\deg (D\mid_{X_I} -  D_I)| \geq g(X_I,\Sigma_I) -  c(X_I,\Sigma_I) + 1,$$ so by Riemann's inequality there exists $f \neq 0$ such that $f \in \mathrm L(D\mid_{X_I} -  D_I, \Sigma_I)$. Take $\wave f = f$ for $P \in X_I$, and $\wave f = 0$ for $P \notin X_I$. Then $f \in \mathrm L^{(r)}(D, \Sigma)$, q.e.d.
\end{proof}
\begin{proof}[Proof of Proposition \ref{semistabMain}]
Using the obvious formula
\begin{align}\label{genusOfUnion}
g(X, \Sigma) =  g(X_I,\Sigma_I) + g(X_{I'}, \Sigma_{I'}) +\kappa(I) + c(X,\Sigma) - c(X_I,\Sigma_I) - c(X_{I'},\Sigma_{I'}),
\end{align}
we show that if $|d| = g(X,\Sigma) - c(X,\Sigma)$, then inequality \eqref{LINEQ} for a subcurve $X_I/\Sigma_I$ is equivalent to  inequality \eqref{RINEQ} for the complimentary subcurve $X_{I'}/\Sigma_{I'}$, and vice versa. So, for  $|d|=g(X,\Sigma) - c(X,\Sigma)$, the notions ``semistable'' and ``R-semistable'' coincide (cf. \cite{Caporaso}, Remark 1.3.3). Therefore, if the multidegree $d$ is semistable, then it is R-semistable, and the set $R_d(X,\Sigma) $ has positive codimension in $\Pic_d(X,\Sigma)$. At the same time, since $|d| < g(X,\Sigma)$, the sets $E_d(X,\Sigma) $ and $N_d(X,\Sigma) $ also have positive codimension in $\Pic_d(X,\Sigma)$, thus the same is true for $W_d(X,\Sigma)$.\par
Vice versa, if $d$ is not semistable, then it is not R-semistable, therefore $R_d(X,\Sigma)  = \Pic_d(X,\Sigma)$, and $W_d(X,\Sigma)  = \Pic_d(X,\Sigma)$.
\end{proof} \pagebreak[4] 
\begin{proof}[Proof of Proposition \ref{density}]
Assume $d$ uniform. 
If the curve $X / \Sigma$ can be represented as the disjoint union of two proper subcurves $X_1 / \Sigma_1$ and $X_{I'} / \Sigma_{I'}$, then the multidegrees $d_I = d\mid_{X_I}$ and  $d_{I'} = d\mid_{X_{I'}}$ are also uniform. Furthermore, we have $$\Pic_d(X,\Sigma) = \Pic_{d_I}(X_I,\Sigma_I) \times \Pic_{d_{I'}}(X_{I'},\Sigma_{I'}), \quad E_d(X,\Sigma) = E_{d_I}(X_I,\Sigma_I) \times E_{d_{I'}} (X_{I'},\Sigma_{I'}).$$
Therefore, without loss of generality, we may assume that $X / \Sigma$ is connected. 
Let us show that uniform multidegrees on connected curves are $R$-semistable. Using \eqref{genusOfUnion}, we get
$$
g(X, \Sigma) \leq  g(X_I,\Sigma_I) + g(X_{I'}, \Sigma_{I'}) +\kappa(I)  - c(X_{I},\Sigma_{I}).
$$
Using uniformity condition, we have $$|d_I| = |d| - |d_{I'}| = g(X,\Sigma) -  |d_{I'}| \leq g(X,\Sigma) -  g(X_{I'},\Sigma_{I'}) \leq g(X_I,\Sigma_I) +\kappa(I)  - c(X_{I},\Sigma_{I})
.$$
We conclude that $R_d(X,\Sigma) $ has positive codimension in  $\Pic_d(X,\Sigma) $. Since $N_d(X,\Sigma)$ also has positive codimension, and 
$$
 \Pic_d(X,\Sigma) = W_d(X,\Sigma) = E_d(X,\Sigma) \cup R_d(X,\Sigma) \cup N_d(X,\Sigma),
$$
we conclude that $ E_d(X,\Sigma)$ is dense in $ \Pic_d(X,\Sigma) $, q.e.d.\par
Now, assume that $d$ is not uniform. Then there exists a subcurve $X_I / \Sigma_I $ such that $|d_I| < g(X_I, \Sigma_I)$, so that the set $E_{d_I}(X_I,\Sigma_I)$ has positive codimension in $\Pic_{d_I}(X_I, \Sigma_I)$. At the same time, we have an inclusion $i^*_{I}(E_d(X,\Sigma)) \subset E_{d_I}(X_I,\Sigma_I)$, which proves that $E_d(X,\Sigma)$ has positive codimension in $ \Pic_d(X,\Sigma) $, and thus is not dense.
\end{proof}
\subsubsection{On more general curves}
We can generalize the discussion of this section to a slightly more general class of curves which we call \textit{generalized nodal curves}. Let $X = X_1 \sqcup \ldots \sqcup X_k$ be a disjoint union of connected Riemann surfaces, and let $\Sigma = \{ \pazocal P_1, \dots, \pazocal P_\sigma\}$ be a set of pairwise disjoint finite subsets of $X$. A generalized nodal curve $X / \Sigma$ is obtained from $X$ by gluing points within each $\pazocal P_i$ to a single point. The ring of meromorphic functions is defined as
$$
\pazocal M(X,\Sigma) = \{ f \in \pazocal M(X) \mid  P,Q \in \pazocal P_i \Rightarrow f(P)=f(Q) \neq \infty \,\,\forall\,\,  i = 1, \dots, \sigma\},
$$
and $\Sigma$-regular differentials are those which are holomorphic outside $\mathrm{supp}(\Sigma) =  \pazocal P_1 \cup \ldots \cup \pazocal P_\sigma$ and satisfy
$$
\sum_{P \in \pazocal P_i}\Res_P\,\omega = 0 
$$
for each $i$.
The study of such curves can be easily reduced to nodal curves. Assume that $\pazocal P_i = \{P_1, \dots, P_s\}$. Consider a Riemann sphere $Y \simeq \CP^1$ with $s$ marked points $Q_1, \dots, Q_s$. Let $ X' = X \sqcup Y$, and let $$ \Sigma' = \Sigma \cup \{ \{P_1, Q_1\}, \dots, \{P_s,Q_s\}\} \setminus \{\pazocal P_i\}.$$
The curve $X'/\Sigma'$ is ``equivalent'' to $X/\Sigma$ in the following sense: there are natural identifications
$$
\pazocal M(X,\Sigma) \simeq \{ f \in \pazocal M(X',\Sigma')\} \mid f\mid_Y \equiv \const \}, \quad \Omega^1(X,\Sigma) \simeq \Omega^1(X',\Sigma').
$$
These identifications allow to reduce the study of $X/\Sigma$ to $X'/\Sigma'$.
Repeating this operation for each $i$ such that $|\pazocal P_i| > 2$, we obtain a nodal curve, which shows that all results of this section are true for generalized nodal curves as well.
\begin{statement}\label{aggnc}
The arithmetic genus of a generalized nodal curve is given by $$ g(X,\Sigma) = g(X)+|\mathrm{supp}(\Sigma)| - |\Sigma| + c(X,\Sigma) - c(X) .$$
\end{statement}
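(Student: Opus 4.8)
The plan is to reduce to the nodal case using the sphere-splitting construction described just above, and then to verify that both sides of the claimed identity are unchanged by a single such reduction step. Recall that for a \emph{nodal} curve the formula in question is already known: then every $\pazocal P_i$ has exactly two elements, so $|\mathrm{supp}(\Sigma)| = 2|\Sigma|$, and the claimed equality becomes $g(X,\Sigma) = g(X) + |\Sigma| + c(X,\Sigma) - c(X)$, which is Proposition~\ref{regDiff}(2). This will be the base of an induction on $\sum_i \max(|\pazocal P_i| - 2, 0)$.

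First I would dispose of trivial gluings: if some $\pazocal P_i$ is a singleton, deleting it from $\Sigma$ changes neither $X/\Sigma$ as a ringed space (hence not $g(X,\Sigma)$) nor the right-hand side, since it decreases $|\mathrm{supp}(\Sigma)|$ and $|\Sigma|$ by the same amount and leaves $g(X)$, $c(X)$, $c(X,\Sigma)$ alone; so I may assume $|\pazocal P_i|\geq 2$ for all $i$. Next, as long as some $\pazocal P_i$ has $s := |\pazocal P_i| \geq 3$, I would apply the reduction: replace $X$ by $X' = X \sqcup Y$ with $Y \simeq \CP^1$ carrying marked points $Q_1,\dots,Q_s$, and $\Sigma$ by $\Sigma' = (\Sigma \setminus \{\pazocal P_i\}) \cup \{\{P_1,Q_1\},\dots,\{P_s,Q_s\}\}$. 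By the identification $\Omega^1(X,\Sigma) \simeq \Omega^1(X',\Sigma')$ stated above we have $g(X,\Sigma) = g(X',\Sigma')$, while $X'/\Sigma'$ has one fewer non-binary gluing set, so finitely many such steps land us on a nodal curve and the base case applies.

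It then remains to check that the quantity $g(X) + |\mathrm{supp}(\Sigma)| - |\Sigma| + c(X,\Sigma) - c(X)$ is invariant under one reduction step. This is the bookkeeping: $g(X') = g(X) + g(\CP^1) = g(X)$; $c(X') = c(X)+1$; $|\Sigma'| = |\Sigma| + s - 1$; $|\mathrm{supp}(\Sigma')| = |\mathrm{supp}(\Sigma)| - s + 2s = |\mathrm{supp}(\Sigma)| + s$; and $c(X',\Sigma') = c(X,\Sigma)$. Substituting,
$$
g(X') + |\mathrm{supp}(\Sigma')| - |\Sigma'| + c(X',\Sigma') - c(X') = g(X) + \big(|\mathrm{supp}(\Sigma)| + s\big) - \big(|\Sigma| + s - 1\big) + c(X,\Sigma) - \big(c(X) + 1\big),
$$
and the terms $s$ and $\pm 1$ cancel, returning the original expression; this proves the inductive step.

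The only point that is not completely immediate is the equality $c(X',\Sigma') = c(X,\Sigma)$, and that is where I would be a little careful: in $X/\Sigma$ the points $P_1,\dots,P_s$ are already identified to one point, so attaching the connected curve $Y$ along $Q_1 \sim P_1,\dots,Q_s \sim P_s$ cannot merge two distinct connected components of $X/\Sigma$ — it merely enlarges the single component containing that point. Everything else in the argument is a direct substitution, so I expect no further obstacles.
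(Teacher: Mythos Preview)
Your proof is correct and follows precisely the route the paper indicates: the paragraph immediately preceding Proposition~\ref{aggnc} describes the sphere-splitting reduction and asserts that it carries all results of the section, in particular Proposition~\ref{regDiff}(2), over to generalized nodal curves, and you have simply spelled out the bookkeeping that this entails. The paper gives no further proof beyond that remark, so there is nothing to compare.
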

\subsection{Proof of Theorem \ref{thm1}}\label{proof1}
\subsubsection{Preliminaries}
 Let  $X$ be the non-singular compact model of $C$. Then $ \lambda$ and $\mu$ are meromorphic functions on $X$. 
 Using the simplicity of the spectrum of $J$, we conclude that \nolinebreak $\lambda$ has exactly $n$ simple poles on $X$. We denote these poles by $\infty_1, \dots, \infty_n$.  Without loss of generality, we may assume that $J = \mathrm{diag}(j_1, \dots, j_n)$, and that  $\mu\lambda^{-m}$ takes value $j_i$ at the point $\infty_i$. We also define $$D_\infty = \sum_{i=1}^n \infty_i \in \Div_n(X),$$ and $X_\infty = \mathrm{supp}(D_\infty).$ Let us also consider the projection $\pi \colon X \setminus X_\infty \to C$ given by $P \mapsto (\lambda(P), \mu(P))$, and let $X_S = \pi^{-1}(\Sing C)$. Let also 
 $$
 D_{s} = \sum_{P \in X_s} P \in \Div(X).
 $$
 
 Let us find the genus of $X$. Let $(\lambda)_R = \sum_{P \in X}(\mathrm{mult}_P \,\lambda - 1)$ be the ramification divisor of $\lambda \colon X \to \CP^1$. The following is clear.
 \begin{statement}
 We have
 $(\lambda)_R = (\diffFXi{\chi}{\mu})_0 - D_s$
 where $\chi$ is the defining polynomial of the curve $C$, and $(\,\dots)_0$ denotes the divisor of zeros.
 \end{statement}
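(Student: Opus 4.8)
The plan is to read off the left-hand side from the behaviour of the Poincar\'e (Leray) residue differential of $C$ on the normalization $X$. First I would introduce the meromorphic differential
\[
\eta \;=\; \diff\lambda \,\big/\, \diffFXi{\chi}{\mu} \;=\; -\,\diff\mu\,\big/\,\diffFXi{\chi}{\lambda},
\]
which on the smooth locus of $C$ is, up to sign, the residue of $\diff\lambda\wedge\diff\mu/\chi$. The key point is to compute $(\eta)$ on $X\setminus X_\infty$ directly from local models. At a point over a smooth point of $C$ one chooses local holomorphic coordinates in $\Complex^2$ in which $\chi$ is a coordinate function; the residue formula then presents $\eta$, pulled back to $X$, as the complementary coordinate differential times a non-vanishing holomorphic factor, so $\eta$ is holomorphic and nowhere zero on $X\setminus(X_\infty\cup X_S)$. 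At a point over a node $q\in\Sing C$ the holomorphic Morse lemma gives local coordinates $(u,v)$ near $q$ with $\chi=uv$; writing $\diff\lambda\wedge\diff\mu=h\,\diff u\wedge\diff v$ with $h(q)\neq0$, the residue of $h\,\diff u\wedge\diff v/(uv)$ along the branches $\{v=0\}$ and $\{u=0\}$ is, up to sign, $h(u,0)\,\diff u/u$ and $h(0,v)\,\diff v/v$ respectively --- each with a pole of order \emph{exactly} one at the corresponding point of $X$. Hence $(\eta)=-D_s$ on $X\setminus X_\infty$.

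On the other hand $(\eta)=(\diff\lambda)-(\diffFXi{\chi}{\mu})$ as divisors on $X$. Since $\lambda$ has only simple poles, all located on $X_\infty$, the restriction of $(\diff\lambda)$ to $X\setminus X_\infty$ is the ramification divisor $(\lambda)_R$ (which is supported there anyway); and $\diffFXi{\chi}{\mu}$ being a polynomial in $\lambda,\mu$, its divisor on $X\setminus X_\infty$ is exactly its zero divisor $(\diffFXi{\chi}{\mu})_0$. Equating the two descriptions of $(\eta)$ over $X\setminus X_\infty$ yields $(\lambda)_R-(\diffFXi{\chi}{\mu})_0=-D_s$, which is the claim --- once one checks that all three divisors in the statement are supported on $X\setminus X_\infty$. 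This is where the assumption $C\in\mathcal C_{spec}$ enters: the $\infty_i$ are simple poles of $\lambda$, so $\lambda$ is unramified there and $(\lambda)_R$ avoids $X_\infty$; $D_s$ is supported over $\Sing C\subset\Complex^2$; and $\diffFXi{\chi}{\mu}$ does not vanish at any $\infty_i$, since along the branch over $\infty_i$ it behaves like $\prod_{l\neq i}(j_i-j_l)\,\lambda^{m(n-1)}$, which is non-zero because the eigenvalues of $J$ are distinct, so $(\diffFXi{\chi}{\mu})_0$ avoids $X_\infty$ as well.

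The only step requiring genuine care --- the main obstacle --- is the node computation, namely that $\eta$ has a pole of order exactly one at each of the two points of $X$ over a node. The Morse normal form $\chi=uv$ makes this transparent, but the same conclusion can also be reached by the elementary route of plugging a branch parametrization $v(t)=t+O(t^2)$, $u(t)=\gamma t^k+O(t^{k+1})$ (with $k=\mathrm{mult}_P\,\lambda$ and $\gamma\neq0$) into $\chi\equiv0$ and into $\diffFXi{\chi}{\mu}$ and matching lowest-order terms, which gives $\mathrm{ord}_P\diffFXi{\chi}{\mu}=\mathrm{mult}_P\,\lambda$. Everything else is the chain rule, Riemann--Hurwitz for $\diff\lambda$, and the defining property of $\mathcal C_{spec}$.
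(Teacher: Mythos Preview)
Your argument is correct. The paper itself omits the proof entirely, stating only ``The following is clear,'' so there is no authorial proof to compare against; you have supplied a full justification where the paper gives none.

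The route you take---identifying $\eta=\diff\lambda/\chi_\mu$ as the Poincar\'e residue of $\diff\lambda\wedge\diff\mu/\chi$, analysing it at smooth points and at nodes via the Morse normal form $\chi=uv$, and then reading off the divisor identity on $X\setminus X_\infty$---is the standard adjunction argument and is exactly what the authors presumably had in mind as ``clear.'' Your check that all three divisors avoid $X_\infty$ (using the simplicity of the poles of $\lambda$ and the asymptotic $\chi_\mu\sim\prod_{l\neq i}(j_i-j_l)\lambda^{m(n-1)}$ at $\infty_i$) is the right way to promote the affine identity to one on $X$, and it is good that you make explicit where the hypothesis $C\in\mathcal C_{spec}$ (distinct eigenvalues of $J$) is used.

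One small remark: in your closing ``elementary route'' paragraph, the symbols $u,v$ are recycled from the Morse coordinates to mean something like $\lambda-\lambda_0$ and $\mu-\mu_0$, which is momentarily confusing. The conclusion $\mathrm{ord}_P\chi_\mu=\mathrm{mult}_P\lambda$ at a node preimage is correct (equivalently, $\eta$ has a pole of order exactly one there), but if you keep that alternative computation you should relabel to avoid the clash.
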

 We conclude that
 $$
 \deg (\lambda)_R = \deg \left(\diffFX{\chi}{\mu}\right)_0 - \deg D_s = \deg \left(\diffFX{\chi}{\mu}\right)_\infty - \deg D_s
 $$
 where $(\,\dots)_\infty$ is the pole divisor. Further, from the condition $C \in \mathcal C_{spec}$, we easily conclude that
 $
\left \lvert  \deg \left(\diffFXi{\chi}{\mu}\right)_\infty \right\rvert= mn(n-1),
 $
 so, by Riemann-Hurwitz formula, we have
\begin{align}\label{geomGenus}
 g(X) = c(X) - n + \frac{1}{2}|\deg (\lambda)_R| =  c(X) - n + \frac{{mn(n-1)}}{2} - |\Sing C|.
\end{align}
 \par
We shall work with two singularizations of $X$, namely the curves $X_K$ and $X'_K$. In terms of Section \ref{nc}, they are described as follows.\par Let $\hat K = \Sing C \setminus K$, and assume that $\hat K = \{ P_1, \dots, P_\sigma\}$. Then $\pi^{-1}(P_i)$ consists of two points $ P_i^+, P_i^-$. Let us define  $$\Sigma = \{ \{P_1^+, P_1^-\}, \dots, \{P_\sigma^+, P_\sigma^-\}\}. $$ Then $ X / \Sigma = X_{K}$.
Similarly, we define $\Sigma' = \Sigma \cup \{X_\infty \}$, so that $X / \Sigma' = X'_{K} $. 
\begin{statement} We have
$$
g(X, \Sigma) =  \frac{mn(n-1)}{2}  - |K| + c(X,\Sigma) - n, \quad
g(X, \Sigma') =  \frac{mn(n-1)}{2}  - |K|.
$$
\end{statement}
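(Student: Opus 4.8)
The plan is to obtain both equalities by substituting the genus computation \eqref{geomGenus} and a few combinatorial counts into the arithmetic-genus formulas already established: Proposition \ref{regDiff}(2) for the nodal curve $X/\Sigma$ and Proposition \ref{aggnc} for the generalized nodal curve $X/\Sigma'$. Recall that $\Sigma$ consists of the $\sigma = |\hat K| = |\Sing C| - |K|$ pairs $\{P_i^+, P_i^-\}$, so $|\mathrm{supp}(\Sigma)| = 2\sigma$ and $|\Sigma| = \sigma$, while $\Sigma' = \Sigma \cup \{X_\infty\}$ adjoins one block of cardinality $n$, so $|\mathrm{supp}(\Sigma')| = 2\sigma + n$ and $|\Sigma'| = \sigma + 1$.

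First I would handle $X/\Sigma = X_K$. By Proposition \ref{regDiff}(2),
\[
g(X,\Sigma) \;=\; g(X) + |\Sigma| + c(X,\Sigma) - c(X).
\]
Inserting $g(X) = c(X) - n + \frac{mn(n-1)}{2} - |\Sing C|$ from \eqref{geomGenus} together with $|\Sigma| = |\Sing C| - |K|$, the terms $c(X)$ and $|\Sing C|$ cancel and one is left exactly with $g(X,\Sigma) = \frac{mn(n-1)}{2} - |K| + c(X,\Sigma) - n$.

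For $X/\Sigma' = X'_K$ I would use Proposition \ref{aggnc} in the form
\[
g(X,\Sigma') \;=\; g(X) + |\mathrm{supp}(\Sigma')| - |\Sigma'| + c(X,\Sigma') - c(X) \;=\; g(X) + \sigma + n - 1 + c(X,\Sigma') - c(X).
\]
The one substantive point is that $c(X,\Sigma') = 1$, i.e. $X'_K$ is connected. This holds because $\lambda$, being a coordinate function on the plane curve $C$, restricts to a non-constant meromorphic function on each connected component of $X$ (no component of $C$ lies in a vertical line, since its defining polynomial is monic in $\mu$); hence each component of $X$ carries a pole of $\lambda$, that is at least one of the points $\infty_1, \dots, \infty_n$. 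Since $\Sigma'$ identifies all of $\infty_1, \dots, \infty_n$ to a single point and the remaining identifications (those in $\Sigma$) can only merge components, $X/\Sigma'$ is connected. Substituting $c(X,\Sigma') = 1$, \eqref{geomGenus}, and $\sigma = |\Sing C| - |K|$ into the last display again cancels $c(X)$ and $|\Sing C|$ and yields $g(X,\Sigma') = \frac{mn(n-1)}{2} - |K|$.

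No real obstacle is expected: once connectedness of $X'_K$ is noted, the whole argument is a one-line substitution. Should one wish to avoid invoking Proposition \ref{aggnc} for the length-$n$ block $X_\infty$, one may instead replace that block, as in the discussion of generalized nodal curves, by an auxiliary $\CP^1$ with $n$ marked points glued to $\infty_1, \dots, \infty_n$, and rerun the count using Proposition \ref{regDiff}(2); the arithmetic is the same.
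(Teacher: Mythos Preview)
Your proposal is correct and follows essentially the same route as the paper, which simply cites \eqref{geomGenus} and Proposition~\ref{aggnc}; you have merely written out the substitutions and supplied the explicit reason why $c(X,\Sigma')=1$, which the paper leaves implicit.
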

\begin{proof}
Use  formula \eqref{geomGenus} and Proposition \ref{aggnc}.
\end{proof}
\subsubsection{Construction of the map $\Phi$}\label{cons}
Let us construct a map $\Phi \colon \pazocal S_{C}^K\to\Pic(X, \Sigma')$. Let $L \in \pazocal S_{C}^K$, and let $X_0 = X \setminus (\pi^{-1}(K) \cup X_\infty )$.
Take $P \in X_0$. Then the matrix $L(\lambda(P)) - \mu(P) \E$ is finite and has one-dimensional kernel. In this way, we obtain a holomorphic mapping
$
\psi \colon X_0 \to \CP^{n-1}$
which maps $P$ to  $\Ker L(\lambda(P)) - \mu(P) \E$. As it is easy to see, the mapping $\psi$ can be uniquely extended to a holomorphic mapping defined on the whole $X$ (see e.g. \cite{LaxStab}, Proposition 8.2). Obviously, we have
$\psi(P) \in \Ker( L(\lambda(P)) - \mu(P) \E)$ for each $P \in X\setminus X_\infty$, and $\psi(\infty_i)^j = \delta_{i}^{j}$ where $ \delta_{i}^{j}$ is the Kronecker delta.\par
Let $\alpha = (\alpha_1, \dots, \alpha_n) \in (\Complex^*)^n$, and let
$$
D_\alpha = \left({\sum_{i=1}^n \alpha_i \psi^i}\right)_0.
$$
In other words, we define $$h_\alpha = \psi \left({\sum \alpha_i \psi^i}\right)^{-1},$$ and set $D_\alpha = (h_\alpha)_\infty$.\par
We would like to fix $\alpha$, and set $\Phi(L) = [D_\alpha]_{\Sigma'}$ for each $L \in \pazocal S_{C}^K$. However, this is not possible, since the divisor $D_\alpha$ is not necessarily $\Sigma'$-regular. Nevertheless, for each $L \in \pazocal S_{C}^K$ we can find $\alpha$ such that $D_\alpha$ is $\Sigma'$-regular. The problem is that if we take distinct $\alpha, \beta  \in (\Complex^*)^n$, then $D_\alpha$ and $D_\beta$ are $\Sigma$-linearly equivalent, but not $\Sigma'$-linearly equivalent. Let us show how to overcome this difficulty.\par
Let $\alpha \in  (\Complex^*)^n$, and choose $f_\alpha \in \pazocal M^*(X,\Sigma)$ such that $f_\alpha(\infty_i) = \alpha_i$. Then $(f_\alpha)$ is a $\Sigma'$-regular divisor, and its $\Sigma'$-linear equivalence class does not depend on the choice of $f_\alpha$.
\begin{statement}
Assume that $D_\alpha$ and $D_\beta$ are $\Sigma'$-regular. Then $D_\alpha - (f_\alpha)\stackrel{\Sigma'}{\sim} D_\beta - (f_\beta)$.
\end{statement}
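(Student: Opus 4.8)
The plan is to compare the two divisors $D_\alpha - (f_\alpha)$ and $D_\beta - (f_\beta)$ by exhibiting an explicit $\Sigma'$-principal divisor realizing their difference. The natural candidate is the divisor of a ratio of the functions $\sum\alpha_i\psi^i$ and $\sum\beta_i\psi^i$, suitably corrected by $f_\alpha$ and $f_\beta$ so that the correction becomes $\Sigma'$-regular rather than merely $\Sigma$-regular.

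First I would recall from the construction in Section \ref{cons} that $h_\alpha = \psi(\sum\alpha_i\psi^i)^{-1}$ is a vector-valued function on $X$ whose pole divisor is $D_\alpha$, and that $\psi(\infty_i)^j = \delta_i^j$, so that $\sum\alpha_i\psi^i$ takes the value $\alpha_i$ at $\infty_i$. Set $g_{\alpha\beta} = (\sum\alpha_i\psi^i)/(\sum\beta_i\psi^i)$. This is a $\Sigma$-regular function on $X$: it is meromorphic, and at each node $P_i^\pm \in \mathrm{supp}(\Sigma)$ the two vectors $\psi(P_i^+)$ and $\psi(P_i^-)$ are proportional (both span the one-dimensional kernel at the image node, since $(\lambda,\mu) \in \hat K$ means $\dim\Ker = 1$ there), so the ratio $g_{\alpha\beta}$ takes equal finite values at $P_i^+$ and $P_i^-$. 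Its divisor is $(g_{\alpha\beta}) = D_\beta - D_\alpha$, using that the numerator and denominator have the same zero–pole behaviour apart from the explicit zero divisors $D_\alpha$, $D_\beta$, and in particular both are nonvanishing and finite at $X_\infty$, where $g_{\alpha\beta}(\infty_i) = \alpha_i/\beta_i$.

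Next I would compute the value of $g_{\alpha\beta}$ at the points $\infty_i$ and observe that it equals $\alpha_i/\beta_i = f_\alpha(\infty_i)/f_\beta(\infty_i)$. Therefore the function $g_{\alpha\beta}\cdot f_\beta / f_\alpha$ lies in $\pazocal M^*(X,\Sigma)$ and takes the value $1$ at every point of $X_\infty$, hence it is actually an element of $\pazocal M^*(X,\Sigma')$. Its divisor is $(g_{\alpha\beta}) + (f_\beta) - (f_\alpha) = (D_\beta - D_\alpha) + (f_\beta) - (f_\alpha) = (D_\beta - (f_\beta)) - (D_\alpha - (f_\alpha))$. Since this difference is the divisor of a function in $\pazocal M^*(X,\Sigma')$, it is $\Sigma'$-principal, which is exactly the claim $D_\alpha - (f_\alpha) \stackrel{\Sigma'}{\sim} D_\beta - (f_\beta)$.

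The main obstacle I anticipate is the careful bookkeeping of zeros and poles of $g_{\alpha\beta}$: one must verify that $(\sum\alpha_i\psi^i)$ and $(\sum\beta_i\psi^i)$, viewed as sections/functions built from the line $\psi$, genuinely have divisors differing only by $D_\alpha - D_\beta$, with no spurious contributions at $\mathrm{supp}(\Sigma)$ or $X_\infty$. This rests on the fact that $\psi$ extends holomorphically over all of $X$ with $\psi(\infty_i)$ the $i$-th coordinate vector, and that $D_\alpha$, $D_\beta$ being $\Sigma'$-regular means by hypothesis their supports avoid $\mathrm{supp}(\Sigma')$, so $\sum\alpha_i\psi^i$ and $\sum\beta_i\psi^i$ are nonzero there. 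Once these local checks are in place the argument is a one-line divisor computation, so I would state the divisor identities as the crux and relegate the pointwise verifications to ``as is easily seen.''
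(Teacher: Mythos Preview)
Your approach is the same as the paper's: the function you call $g_{\alpha\beta}\cdot f_\beta/f_\alpha$ is exactly the paper's
\[
f=\left(\sum\alpha_i\psi^i\right)f_\alpha^{-1}\left(\sum\beta_i\psi^i\right)^{-1}f_\beta,
\]
and the paper simply notes $(f)=D_\alpha-(f_\alpha)-D_\beta+(f_\beta)$ and $f\in\pazocal M^*(X,\Sigma')$. Your argument is correct in substance, but watch the signs: since $D_\alpha$ is the \emph{zero} divisor of $\sum\alpha_i\psi^i$, one has $(g_{\alpha\beta})=D_\alpha-D_\beta$, not $D_\beta-D_\alpha$; and in your final line the rearrangement $(D_\beta-D_\alpha)+(f_\beta)-(f_\alpha)=(D_\beta-(f_\beta))-(D_\alpha-(f_\alpha))$ is false as written --- the two slips happen to cancel, but you should fix both.
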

\begin{proof} Let
$$
f = \left({\sum \alpha_i \psi^i}\right)f_\alpha^{-1}\left(\sum \beta_i \psi^i\right)^{-1}{f_\beta}.$$ Then $D_\alpha - (f_\alpha) - D_\beta + (f_{\beta}) = (f)$, and
 $f \in \pazocal M^*(X,\Sigma')$.
\end{proof}
Let $U_\alpha = \{ L \in \pazocal S_{C}^K : D_\alpha \in \Div(X,\Sigma')\}$. Define $\Phi_\alpha \colon U_\alpha \to \Pic(X,\Sigma')$ by setting $$\Phi_\alpha(L) = [D_\alpha - (f_\alpha)]_{\Sigma'}.$$ Then $\pazocal S_{C}^K = \bigcup_\alpha U_\alpha$, and for each $L \in U_\alpha \cap U_\beta$, we have $\Phi_\alpha(L) = \Phi_\beta(L)$. In this way, we obtain a mapping $\Phi \colon \pazocal S_{C}^K \to  \Pic(X,\Sigma')$ given by $\Phi(L) =\Phi_\alpha( L)$ for each $L \in U_\alpha$. 
\subsubsection{Multidegree count}\label{degCount}
We have $\deg \Phi(L) = \deg D_\alpha$. To find the total degree of $ D_\alpha$, we use a standard trick (see \cite{babelon}, Chapter 5.2). Let $a \in \CPP$ be a regular value of the function $\lambda$, and let $\lambda^{-1}(a) = \{ P_1, \dots, P_n \}$. Set $$r(a) ={ \det}^2(h_\alpha(P_1), \dots, h_\alpha(P_n)).$$ Then it is easy to see that $r$ can be extended to a meromorphic function on the whole $\CPP$. Clearly,  we have $$\deg (r)_\infty = 2|\deg (h_\alpha)_\infty| = 2|\deg D_\alpha|.$$ To count the poles of $r$, we count its zeros. 
As it is easy to see, it is possible to choose such \nolinebreak $\alpha$ that $L(\lambda) \in U_\alpha$, and the divisor $\lambda_*(D_\alpha)$ does not intersect the divisor $\lambda_*((\diffFXi{\chi}{\mu})_0)$ where $\chi$ is the defining polynomial of $C$.
Further, for the sake of simplicity, we shall assume that the curve $C$ satisfies the following genericity assumption:
for each $a \in \Complex$, we have \begin{align}\label{genas}|  C \cap \{\lambda = a\}| \geq n-1.\end{align}This means that each line $\lambda = a$ can contain either at most one node, or at most one simple ramification point. Furthermore, for all nodes of $C$, both tangents are non-vertical. It follows that the divisor $(\lambda)_a = (\lambda - a)_0$ can be of one of the following types:
\begin{enumerate}
\begin{multicols}{2}
\item $(\lambda)_a = \sum_{i=1}^n P_i$. \item  $(\lambda)_a = 2P_{n-1} + \sum_{i=1}^{n-2} P_i$. \item  $(\lambda)_a = P^+ + P^- + \sum_{i=1}^{n-2} P_i$, $\pi(P^\pm) \in K$.
 \item  $(\lambda)_a = P^+ + P^- + \sum_{i=1}^{n-2} P_i$, $\pi(P^\pm) \in \hat K$.
 \end{multicols}
\end{enumerate}
In all cases, the points $P_1, \dots, P_{n-1}, P^+, P^-$ are pairwise distinct, and $\pi(P_i) \notin \Sing(C)$.\par
\begin{statement}\label{loc0}Let $(\lambda)_a = \sum_{i=1}^n P_i$. Then $r(a) \neq 0$. \end{statement}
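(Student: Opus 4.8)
The plan is to reduce the claim to the elementary fact that eigenvectors of a matrix corresponding to distinct eigenvalues are linearly independent. First I would note that in case 1 the value $a$ is a regular value of $\lambda\colon X\to\CP^1$, so $\lambda^{-1}(a)=\{P_1,\dots,P_n\}$ consists of $n$ distinct points, and since $\pi(P_i)\notin\Sing C$ their images $(a,\mu(P_1)),\dots,(a,\mu(P_n))$ on $C$ are pairwise distinct — two points of $X\setminus X_\infty$ with a common image would lie over a node. The polynomial $\chi(a,\cdot)=\det(L(a)-\mu\E)$ has degree $n$ in $\mu$ and vanishes exactly at the $\mu$-coordinates of $C\cap\{\lambda=a\}$; since the latter set consists of the $n$ distinct values $\mu(P_1),\dots,\mu(P_n)$, these are all of its roots and each is simple. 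Hence $L(a)$ has simple spectrum, and for each $i$ the line $\Ker(L(a)-\mu(P_i)\E)$ is spanned by $\psi(P_i)$ (or, if the fixed vector lift of $\psi$ happens to have a pole at $P_i$, by the leading coefficient of its Laurent expansion there, which is again a nonzero vector in that eigenline because $(L(\lambda(P))-\mu(P)\E)\,\psi(P)\equiv 0$).

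Next I would pass to $h_\alpha=\psi\,(\sum_i\alpha_i\psi^i)^{-1}$ and argue according to whether $a\in\lambda_*(D_\alpha)$. If $a\notin\lambda_*(D_\alpha)$, then $\sum_i\alpha_i\psi^i$ is finite and nonzero at each $P_j$, so $h_\alpha(P_j)$ is a nonzero scalar multiple of the eigenvector just described; as $\mu(P_1),\dots,\mu(P_n)$ are pairwise distinct, the vectors $h_\alpha(P_1),\dots,h_\alpha(P_n)$ are linearly independent, whence $r(a)={\det}^2(h_\alpha(P_1),\dots,h_\alpha(P_n))\neq 0$. If instead $a\in\lambda_*(D_\alpha)$, then some $P_j$ is a pole of $h_\alpha$; since $a$ is a regular value of $\lambda$, the $n$ preimages of $\lambda$ near $a$ vary holomorphically, so $r$ — being ${\det}^2$ of $h_\alpha$ evaluated on these local sheets — has a pole at $a$, and in particular $r(a)=\infty\neq 0$ in $\CPP$. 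In either case $r(a)\neq 0$, as claimed.

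I do not expect a genuine obstacle here: the mathematical content is just the linear independence of eigenvectors for distinct eigenvalues, and the only care required is the bookkeeping already flagged — making sure $h_\alpha(P_i)$ really is a nonzero eigenvector even when the coprime lift of $\psi$ degenerates at $P_i$, and disposing of the trivial dichotomy about $\lambda_*(D_\alpha)$. The substantive work is postponed to the companion statements for cases 2--4, where $r$ can actually vanish over a simple ramification point or over a node, and where one must compute the precise order of this zero in order to carry out the degree count of Section \ref{degCount}.
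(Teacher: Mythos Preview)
Your argument is correct and is precisely the content the paper hides behind the word ``Obvious'': in case~1 the values $\mu(P_1),\dots,\mu(P_n)$ are pairwise distinct, so the corresponding eigenvectors (or leading Laurent coefficients, in the pole case you flag) are linearly independent and the determinant cannot vanish. Your treatment of the dichotomy $a\in\lambda_*(D_\alpha)$ versus $a\notin\lambda_*(D_\alpha)$ is more careful than the paper bothers to be, but it is exactly the right bookkeeping.
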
\begin{proof}Obvious. \end{proof}\begin{statement}\label{loc1}
Let $(\lambda)_a = 2P_{n-1} + \sum_{i=1}^{n-2} P_i$. Then \begin{longenum} \item the matrix $L(a)$ has a $2\times 2$ Jordan block with eigenvalue $\mu(P_{n-1})$, eigenvector $h_\alpha(P_{n-1})$, and generalized eigenvector $( h_\alpha)'_z (P_{n-1})$ where $z$ is any local coordinate near $P_{n-1}$;
\item $r$ has a simple zero at $a$. \end{longenum}
\end{statement}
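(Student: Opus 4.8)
The plan is to deduce (a) by differentiating the eigenvector identity along $X$, to extract from (a) a basis of $\Complex^n$ at $L(a)$, and then to compute the order of vanishing of $r$ at $a$ by a local analysis of the two sheets colliding over $a$.

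First I would set up part (a). Fix a local coordinate $z$ on $X$ centred at $P_{n-1}$. Since $\pi(P_{n-1})\notin\Sing C$ and $P_{n-1}$ lies in the ramification divisor of $\lambda$, the Proposition preceding Proposition \ref{loc0} shows $P_{n-1}$ is a zero of $\diffFXi{\chi}{\mu}$ but not of $\diffFXi{\chi}{\lambda}$; hence $\pi(P_{n-1})$ is a smooth point of $C$ with vertical tangent, so $\mu'_z(P_{n-1})\neq 0$, while $\lambda-a$ has a double zero at $P_{n-1}$, giving $\lambda'_z(P_{n-1})=0$. As $a$ is a critical value at which $\diffFXi{\chi}{\mu}$ vanishes, the choice of $\alpha$ made in the multidegree count ensures $a\notin\lambda_*(D_\alpha)$, so $h_\alpha$ is holomorphic and nowhere vanishing near $\lambda^{-1}(a)$; in particular $h_\alpha(P_{n-1})$ is a nonzero vector. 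Now differentiate the identity $(L(\lambda(P))-\mu(P)\E)\,h_\alpha(P)=0$ in $z$ and evaluate at $P_{n-1}$, using $\lambda'_z(P_{n-1})=0$:
\[
(L(a)-\mu(P_{n-1})\E)\,(h_\alpha)'_z(P_{n-1}) \;=\; \mu'_z(P_{n-1})\,h_\alpha(P_{n-1}) \;\neq\; 0 .
\]
This is precisely the statement that $h_\alpha(P_{n-1})$ and $(h_\alpha)'_z(P_{n-1})$ span an $L(a)$-invariant plane on which $L(a)$ acts by a $2\times2$ Jordan block with eigenvalue $\mu(P_{n-1})$; in particular $(h_\alpha)'_z(P_{n-1})$ is a genuine generalized eigenvector, not proportional to $h_\alpha(P_{n-1})$. (Replacing $z$ by another local coordinate rescales $(h_\alpha)'_z(P_{n-1})$ by a nonzero constant, so the statement is coordinate-independent.)

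Next I would record that $h_\alpha(P_1),\dots,h_\alpha(P_{n-2}),h_\alpha(P_{n-1}),(h_\alpha)'_z(P_{n-1})$ is a basis of $\Complex^n$. By the standing genericity assumption \eqref{genas}, the set $C\cap\{\lambda=a\}$ consists of the $n-1$ distinct points $(a,\mu(P_1)),\dots,(a,\mu(P_{n-1}))$, so $L(a)$ has the $n-2$ simple eigenvalues $\mu(P_1),\dots,\mu(P_{n-2})$ and, by part (a), the eigenvalue $\mu(P_{n-1})$ with a single $2\times2$ Jordan block. Hence $\Complex^n$ is the direct sum of the eigenlines $\langle h_\alpha(P_i)\rangle$ for $i\le n-2$ and the plane $\langle h_\alpha(P_{n-1}),(h_\alpha)'_z(P_{n-1})\rangle$, so the listed vectors are independent; write $\Delta_0\neq0$ for their determinant. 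For part (b) I would then write $\lambda-a=c_2z^2+O(z^3)$ with $c_2\neq0$, parametrize by $s$ via $\lambda=a+c_2s^2$, and note that the two preimages of $\lambda$ near $P_{n-1}$ are $z_\pm(s)=\pm s+O(s^2)$, interchanged by $s\mapsto-s$, whereas the remaining $n-2$ preimages $P_i(\lambda)$ depend holomorphically on $\lambda=a+c_2s^2$, hence evenly on $s$, and tend to $P_i$ as $s\to0$. Therefore
\[
\det\!\big(h_\alpha(P_1(\lambda)),\dots,h_\alpha(P_{n-2}(\lambda)),h_\alpha(z_+(s)),h_\alpha(z_-(s))\big)
\]
is an odd function of $s$ (swapping the last two columns changes the sign, the rest is even); subtracting the penultimate column from the last and using $h_\alpha(z_-(s))-h_\alpha(z_+(s))=-2s\,(h_\alpha)'_z(P_{n-1})+O(s^2)$ together with $h_\alpha(z_+(s))\to h_\alpha(P_{n-1})$ identifies the $s$-linear coefficient as $-2\Delta_0\neq0$. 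Squaring, $r(\lambda)=4\Delta_0^2 s^2+O(s^4)=\tfrac{4\Delta_0^2}{c_2}(\lambda-a)+O((\lambda-a)^2)$ is a genuine holomorphic function of $\lambda$ near $a$ with a simple zero at $a$, which is (b).

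The only delicate point is the local analysis over the branch point $a$: one must keep careful track of the two coalescing sheets and verify that the leading contribution to the determinant is linear in the local branch parameter $s$ with a coefficient proportional to the nonzero determinant $\Delta_0$ built from the Jordan block of part (a); the oddness in $s$ is what guarantees that $r$, which is a priori only a function of $\lambda$, extends holomorphically and vanishes to order exactly one. Everything else — the identification of $\pi(P_{n-1})$ as a smooth vertical-tangent point, the finiteness and non-vanishing of $h_\alpha$ on $\lambda^{-1}(a)$, and the distinctness of the eigenvalues of $L(a)$ — is routine given the hypotheses and the genericity assumption \eqref{genas}.
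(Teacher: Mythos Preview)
Your proof is correct and follows essentially the same route as the paper: differentiate the eigenvector identity in a local coordinate at the branch point to get the Jordan block for (a), then expand the determinant defining $r$ in the local branch parameter to read off the simple zero for (b). The paper simply fixes the coordinate so that $\lambda=a+z^2$ and states the expansion $r(b)=4(\lambda-a)(\Delta_0^2+o(1))$ without further comment; you supply the details it omits, in particular the reason $\mu'_z(P_{n-1})\neq 0$, the explicit linear-independence argument giving $\Delta_0\neq 0$, and the oddness-in-$s$ observation that makes the holomorphy of $r$ in $\lambda$ transparent.
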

\begin{proof}
Take a local coordinate $z$ such that $\lambda = a + z^2$. Differentiating the equation $$(L(\lambda(z)) - \mu(z)E)h_\alpha(z) = 0$$ with respect to $z$, we prove item a). To prove item b), let $b \to a$, then $$r(b) =4(\lambda - a)\left({ \det}^2\left(h_\alpha(P_1), \dots, h_\alpha(P_{n-2}), h_\alpha(P_{n-1}), \diffFXYp{ h_\alpha }{z}(P_{n-1})\right) + o(1) \right).$$
\end{proof}
\begin{statement}\label{loc2}
Let $(\lambda)_a = P^+ + P^- + \sum_{i=1}^{n-2} P_i$ where $\pi(P^\pm) \in K$. Then \begin{longenum} \item the vectors $h_\alpha(P^+)$ and $h_\alpha(P^-)$ are linearly independent; \item $r(a) \neq 0$. \end{longenum}
\end{statement}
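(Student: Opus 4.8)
The plan is to analyze what happens at a vertical line $\lambda = a$ passing through a node of $C$ at which the corank of $L(a) - \mu \E$ is maximal, i.e. $\pi(P^\pm) \in K$ and hence $\dim \Ker(L(a) - \mu(P^\pm)\E) = 2$. The key point is that the two points $P^+, P^-$ of $X$ lying over the node carry two genuinely different lines in this common two-dimensional kernel, and this will give both items a) and b) at once. So the strategy for b) is the same as in Proposition \ref{loc0}: once we know that $h_\alpha(P^+), h_\alpha(P^-)$ together with $h_\alpha(P_1), \dots, h_\alpha(P_{n-2})$ form a basis of $\Complex^n$, the Wronskian-type determinant $r(a) = \det^2(h_\alpha(P_1), \dots, h_\alpha(P_{n-2}), h_\alpha(P^+), h_\alpha(P^-))$ is nonzero, so $r$ has neither a zero nor (by the genericity of $\alpha$ chosen so that $\lambda_*(D_\alpha)$ misses the relevant divisors) a pole at $a$.

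The main work is therefore item a): proving $h_\alpha(P^+) \not\parallel h_\alpha(P^-)$. Suppose to the contrary that $h_\alpha(P^+)$ and $h_\alpha(P^-)$ are proportional. Since $\mu(P^+) = \mu(P^-) =: \nu$ (both sit over the same node) and each lies in $\Ker(L(a) - \nu\E)$, the common kernel is two-dimensional by hypothesis; but if the two sections $\psi$ evaluated at $P^\pm$ spanned only a line, I claim one can produce a contradiction with the fact that $P^+$ and $P^-$ are distinct points of the normalization $X$. Concretely, the local structure of the kernel line bundle near a node of the spectral curve (see the treatment in \cite{babelon}, Chapter 5.2, or \cite{LaxStab}, Proposition 8.2, which is already invoked to extend $\psi$) shows that at a node the eigenline varies along each of the two branches, and the limiting lines along the two branches coincide precisely when the corank drops — i.e., exactly in the situation of Proposition \ref{loc1}, not here. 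When $\dim\Ker(L(a)-\nu\E) = 2$, the two branches of $C$ pick out the two distinct one-dimensional subspaces of this plane (this is the standard picture: the projectivized kernel map $X \to \CP^{n-1}$ separates the two preimages of a node whenever the kernel is already two-dimensional, because the blow-up $X_K$ resolves the node and the kernel map factors through a map that is an embedding near these points). Hence $h_\alpha(P^+)$ and $h_\alpha(P^-)$ span a two-dimensional space, proving a).

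I expect the delicate point to be making the assertion ``the two branches select two distinct lines'' fully rigorous rather than citing it. The cleanest route is probably local: choose local coordinates $z^\pm$ on $X$ near $P^\pm$, write $\lambda = a + z^\pm$ and $\mu = \nu + c^\pm z^\pm + O((z^\pm)^2)$ where $c^+ \neq c^-$ are the two distinct (finite, by assumption \eqref{genas}) tangent slopes of the node, and differentiate $(L(\lambda(z^\pm)) - \mu(z^\pm)\E)\,\psi(z^\pm) = 0$ at $z^\pm = 0$. This gives $(L(a) - \nu\E)\,\psi'(0) = (L'(a) - c^\pm\E)\,\psi(0)$. If $\psi(P^+) = \psi(P^-) =: v$ up to scale, then $(L'(a) - c^+\E)v$ and $(L'(a) - c^-\E)v$ both lie in $\mathrm{Im}(L(a)-\nu\E)$; subtracting, $(c^- - c^+)v \in \mathrm{Im}(L(a)-\nu\E)$, hence $v \in \mathrm{Im}(L(a)-\nu\E)$. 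But $v \in \Ker(L(a)-\nu\E)$, and for a node (Hessian nondegenerate) the two-dimensional kernel meets the image transversally in the relevant sense — more precisely, one checks $\Ker(L(a)-\nu\E)^2 = \Ker(L(a)-\nu\E)$ has dimension $2$ too, so no nonzero kernel vector lies in the image, giving the contradiction. Then b) follows immediately as above, and I will record $r(a) \neq 0$ to conclude.
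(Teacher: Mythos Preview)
Your proposal is correct and follows essentially the same route as the paper: assume $h_\alpha(P^+)$ and $h_\alpha(P^-)$ are parallel, differentiate $(L(\lambda)-\mu\E)h_\alpha=0$ along each branch, subtract, and use $c^+\neq c^-$ (nodal) to produce a contradiction with the hypothesis $\pi(P^\pm)\in K$. The paper phrases the contradiction as ``$L(a)$ has a $2\times 2$ Jordan block at $\nu$, contradicting the definition of $K$,'' while you phrase it as ``$0\neq v\in\Ker(L(a)-\nu\E)\cap\mathrm{Im}(L(a)-\nu\E)$''; these are equivalent. One small remark: your parenthetical ``for a node (Hessian nondegenerate) the two-dimensional kernel meets the image transversally'' misattributes the reason --- the Hessian condition is what gives $c^+\neq c^-$, whereas $\Ker\cap\mathrm{Im}=0$ at $\nu$ follows from the $K$ hypothesis (geometric multiplicity $=2=$ algebraic multiplicity), which is exactly what you state correctly in the next clause. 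Item b) then follows immediately for both of you.
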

\begin{proof}
If we assume that $h_\alpha(P^+)$ and $h_\alpha(P^-)$ are linearly dependent, then we necessarily have $h_\alpha(P^+) = h_\alpha(P^-)$. Take \nolinebreak$\lambda$ as a local coordinate near $P^+$ and $P^-$. Differentiating $$(L(\lambda) - \mu E)h_\alpha(\lambda) = 0$$ with respect to $\lambda$ at $P^+$ and $P^-$ and subtracting the obtained equations, we get
\begin{align}\label{nodeEq}
(L(a) - \mu(P^\pm) E)\left(\diffFXYp{ h_\alpha}{ \lambda}(P_{+})  -\diffFXYp{ h_\alpha}{ \lambda}(P_{-})\right) = \left(\diffFXYp{ \mu}{ \lambda}(P_{+})  -\diffFXYp{ \mu}{ \lambda}(P_{-})\right) h_\alpha(P^\pm).
\end{align}
Since the singular point $\pi(P^\pm)$ is nodal, we have $\diff \mu / \diff \lambda(P_{+})  \neq \diff \mu / \diff \lambda(P_{-})$, so $L_a$ has a Jordan block, which contradicts the definition of the set $K$. This proves item a). Item b) obviously follows.
\end{proof}
\begin{statement}\label{loc3}
Let $(\lambda)_a = P^+ + P^- + \sum_{i=1}^{n-2} P_i$ where $\pi(P^\pm) \in \hat K$. Then  \begin{longenum}  \item the matrix $L(a)$ has a $2\times 2$ Jordan block with eigenvalue $\mu(P^{+}) = \mu(P^-)$, eigenvector $h_\alpha(P^{+}) = h_\alpha(P^{-} )$, and generalized eigenvector $ (h_\alpha)'_\lambda(P_{+})  -  (h_\alpha)'_\lambda(P_{-})$;
\item $r$ has a double zero at $a$.  \end{longenum}
\end{statement}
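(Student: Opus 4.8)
The plan is to treat this case as a hybrid of Propositions \ref{loc1} and \ref{loc2}: as in \ref{loc2} the line $\lambda = a$ passes through a node, but since $\pi(P^\pm)\in\hat K$ the kernel of $L(a)-\mu(P^\pm)\E$ is one-dimensional, and this forces $h_\alpha(P^+)=h_\alpha(P^-)$, which now produces a Jordan block rather than a contradiction. The preliminary observation is that $\psi(P^+)=\psi(P^-)$: passing to the limit in $(L(\lambda(P))-\mu(P)\E)\psi(P)=0$ gives $\psi(P^\pm)\in\Ker(L(a)-\mu(P^\pm)\E)$, which is a line because $\pi(P^\pm)\in\hat K$, so the two points of $\CP^{n-1}$ coincide; normalizing both local lifts by $\sum_i\alpha_i\psi^i$ (nonzero at $P^\pm$ since $D_\alpha$ is $\Sigma'$-regular) yields $h_\alpha(P^+)=h_\alpha(P^-)$ as vectors.

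For item a), take $\lambda$ as a local coordinate near $P^+$ and near $P^-$, differentiate $(L(\lambda)-\mu(\lambda)\E)h_\alpha(\lambda)=0$ at each point, and subtract. Because $h_\alpha(P^+)=h_\alpha(P^-)$ the terms involving the derivative of $L$ cancel, and one obtains exactly equation \eqref{nodeEq}:
$$
(L(a)-\mu(P^\pm)\E)\bigl((h_\alpha)'_\lambda(P^+)-(h_\alpha)'_\lambda(P^-)\bigr)=\bigl((\mu)'_\lambda(P^+)-(\mu)'_\lambda(P^-)\bigr)\,h_\alpha(P^+).
$$
Since the node has two distinct non-vertical tangents, $(\mu)'_\lambda(P^+)\neq(\mu)'_\lambda(P^-)$, so the right-hand side is a nonzero multiple of the eigenvector $h_\alpha(P^+)$; hence $w=(h_\alpha)'_\lambda(P^+)-(h_\alpha)'_\lambda(P^-)$ is a genuine generalized eigenvector and $\{h_\alpha(P^+),w\}$ is a Jordan chain of length $\geq 2$. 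As $\lambda=a$ is transverse to each of the two branches at the node, $\mu(P^\pm)$ is a root of $\chi(a,\cdot)$ of multiplicity exactly $2$, so the block is exactly $2\times 2$. Together with the facts that $\mu(P_1),\dots,\mu(P_{n-2})$ are pairwise distinct and different from $\mu(P^\pm)$ (immediate from $\pi(P_i)\notin\Sing C$ and the $P$'s being pairwise distinct) and that $h_\alpha(P_i)$ is an eigenvector for $\mu(P_i)$, we conclude that $\{h_\alpha(P_1),\dots,h_\alpha(P_{n-2}),h_\alpha(P^+),w\}$ is a basis of $\Complex^n$.

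For item b), write $r(\lambda)={\det}^2(h_\alpha(Q_1(\lambda)),\dots,h_\alpha(Q_{n-2}(\lambda)),h_\alpha(Q^+(\lambda)),h_\alpha(Q^-(\lambda)))$ where $Q_i(\lambda)\to P_i$ and $Q^\pm(\lambda)\to P^\pm$ as $\lambda\to a$. Subtracting the $Q^+$-column from the $Q^-$-column and using $h_\alpha(Q^\pm(\lambda))=h_\alpha(P^\pm)+(\lambda-a)(h_\alpha)'_\lambda(P^\pm)+O((\lambda-a)^2)$ together with $h_\alpha(P^+)=h_\alpha(P^-)$, this column becomes $-(\lambda-a)w+O((\lambda-a)^2)$. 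Factoring out $\lambda-a$ and letting $\lambda\to a$ gives
$$
r(\lambda)=(\lambda-a)^2\,{\det}^2\bigl(h_\alpha(P_1),\dots,h_\alpha(P_{n-2}),h_\alpha(P^+),w\bigr)+O\bigl((\lambda-a)^3\bigr),
$$
and the determinant on the right is nonzero by the basis statement from item a); hence $r$ vanishes to order exactly $2$ at $a$.

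The only mildly delicate point is the multiplicity bookkeeping in item a) — confirming that $\mu(P^\pm)$ is a double and not a higher root of $\chi(a,\cdot)$ and that the remaining $\mu(P_i)$ are simple and $\neq\mu(P^\pm)$ — which is exactly where the genericity assumption \eqref{genas} and the condition $\pi(P_i)\notin\Sing C$ enter; everything else is a direct local computation entirely parallel to the proofs of Propositions \ref{loc1} and \ref{loc2}.
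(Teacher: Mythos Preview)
Your proof is correct and follows essentially the same approach as the paper: use the one-dimensionality of $\Ker(L(a)-\mu(P^\pm)\E)$ to force $h_\alpha(P^+)=h_\alpha(P^-)$, derive equation \eqref{nodeEq} by differentiating and subtracting, and then expand $r$ near $\lambda=a$ by column subtraction. You are more explicit than the paper about why the leading determinant is nonzero (the basis statement from your item~a)), which the paper's proof leaves implicit in its $o(1)$ formula.
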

\begin{proof}
Since the geometric multiplicity of $\mu(P^{\pm}) $ is equal to $1$, we have $h_\alpha(P^+) = h_\alpha(P^-)$, which implies equation \eqref{nodeEq} and hence item a).  To prove item b), let $b \to a$, then $$r(b) =(\lambda - a)^2\left({ \det}^2\left(h_\alpha(P_1), \dots, h_\alpha(P_{n-2}), h_\alpha(P^+), \diffFXYp{ h_\alpha}{ \lambda}(P_{+})  -\diffFXYp{ h_\alpha}{ \lambda}(P_{-})\right) + o(1) \right).$$
\end{proof}

Let $$D_\Sigma = \sum\nolimits_{i=1}^{|\Sigma|} (P_i^+ + P_i^-).$$ Considering Propositions \ref{loc1} - \ref{loc3}, we conclude with the following:
\begin{statement}
We have $(r)_0 = \lambda_*((\lambda)_R + D_\Sigma)$.
\end{statement}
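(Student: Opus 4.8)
The plan is to compute the multiplicity of $r$ at every point of $\CPP$ and match it term by term with the coefficients of $\lambda_*((\lambda)_R + D_\Sigma)$, using the genericity assumption \eqref{genas} to decompose $\Complex$ into the four fiber types (1)--(4). The key preliminary remark is that, for the $\alpha$ fixed above, the $\lambda$-values of the poles of $h_\alpha$ constitute the divisor $\lambda_*(D_\alpha)$, which was arranged to miss $\lambda_*((\diffFXi{\chi}{\mu})_0) = \lambda_*((\lambda)_R + D_s)$; since $D_s$ contains the full preimage of $\Sing C$, neither a ramification point of $\lambda$ nor a preimage of a node of $C$ can be a pole of $h_\alpha$. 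Hence the hypotheses of Propositions \ref{loc0}--\ref{loc3} are met over every type-(1)--(4) value, and they give, for $a \in \Complex$: $\mathrm{ord}_a\, r = 0$ in cases (1) and (3), $\mathrm{ord}_a\, r = 1$ in case (2), and $\mathrm{ord}_a\, r = 2$ in case (4).

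It remains to treat the point $a = \infty$, which is not covered by \eqref{genas}. The fiber of $\lambda$ over $\infty$ is $\{\infty_1, \dots, \infty_n\}$, all unramified, and none of the $\infty_i$ is a pole of $h_\alpha$ because $\sum_k \alpha_k\psi(\infty_i)^k = \alpha_i \neq 0$; in fact $h_\alpha(\infty_i) = \alpha_i^{-1}\psi(\infty_i)$, so $r$ extends across $\infty$ with value $r(\infty) = {\det}^2(\alpha_1^{-1}\psi(\infty_1), \dots, \alpha_n^{-1}\psi(\infty_n)) = (\alpha_1\cdots\alpha_n)^{-2} \neq 0$, using $\psi(\infty_i)^j = \delta_i^j$. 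Thus $\infty \notin \mathrm{supp}\,(r)_0$, and since $\lambda_*(D_\alpha) \subset \Complex$ the point $\infty$ is not a pole of $r$ either. Combining this with the previous paragraph, $(r)_0 = \sum_{a} a + 2\sum_{a'} a'$, where $a$ ranges over the values with $(\lambda)_a$ of type (2) and $a'$ over the values with $(\lambda)_{a'}$ of type (4).

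Finally I would identify the right-hand side with this expression. By \eqref{genas}, the ramification divisor $(\lambda)_R$ is supported exactly on the simple ramification points $P_{n-1}$ occurring in fibers of type (2), each with coefficient $1$, and these points lie over pairwise distinct values, so $\lambda_*((\lambda)_R) = \sum_a a$ over the type-(2) values $a$. Likewise $D_\Sigma = \sum_i(P_i^+ + P_i^-)$ is supported precisely on the preimages of the nodes of $\hat K = \Sing C \setminus K$, i.e. on the pairs $P^\pm$ appearing in fibers of type (4), and since $\lambda(P_i^+) = \lambda(P_i^-)$ equals the common node value, $\lambda_*(D_\Sigma) = 2\sum_{a'} a'$ over the type-(4) values $a'$. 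Adding the two gives $\lambda_*((\lambda)_R + D_\Sigma) = (r)_0$.

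The step requiring the most care is the bookkeeping in the last paragraph: one must invoke \eqref{genas} to know that the four types form a genuine partition of $\Complex$ — that no $a$ is simultaneously a ramification value and a node value, and that distinct ramification points or distinct nodes never share a $\lambda$-value — so that each ramification point and each node of $\hat K$ is counted exactly once and with the correct multiplicity, while the nodes of $K$ (type (3)) contribute to neither side. Everything else reduces to a direct appeal to the local computations of Propositions \ref{loc0}--\ref{loc3} and a pushforward count that involves no local degrees.
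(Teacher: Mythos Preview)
Your argument is correct and is precisely the expanded version of what the paper does: the paper simply writes ``Considering Propositions \ref{loc1}--\ref{loc3}, we conclude with the following'' and states the identity without further proof, so your case-by-case verification together with the explicit check at $a=\infty$ fills in exactly what is left implicit. One tiny wording issue: for type~(1) values lying in $\lambda_*(D_\alpha)$ you have $\mathrm{ord}_a r < 0$ rather than $=0$, but since such points are poles and not zeros of $r$ this does not affect $(r)_0$, and your conclusion stands.
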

Using \eqref{geomGenus}, we conclude that
$$\deg D_\alpha = \frac{1}{2}\deg (r)_\infty =  \frac{1}{2}\deg (r)_0 = \frac{mn(n-1)}{2} - |K|= g(X,\Sigma').$$\par
Now, let us prove that $\deg D_\alpha$ is uniform. Let $X_I /\Sigma'_I$ be a subcurve of $X / \Sigma'$, and let $q = \deg \lambda\mid_{X_I}$. Then $|X_I \cap X_\infty| = q$. Without loss of generality, we may assume that $\infty_1, \dots, \infty_q \in X_I$. Let $pr \colon \Complex^n \to \Complex^q$ be a map given by $pr(x^1, \dots, x^n) = (x^1, \dots, x^q)$. Let also $a \in \CPP$ be a regular value of the function $\lambda$. Then $\lambda^{-1}(a) \cap X_I$ consists of $q$ points $ P_1, \dots, P_q$. Set $$r_I(a) = {\det}^2(pr(h_\alpha(P_1)), \dots, pr(h_\alpha(P_q))).$$ We have $r_I(\infty) \neq 0$, so $r_I \not\equiv 0$. Repeating the above arguments, we get  $$  |\deg D_\alpha\mid_{X_I}| \geq \frac{1}{2}\deg (r_I)_\infty =  \frac{1}{2}\deg (r_I)_0 \geq g(X_I, \Sigma'_I),$$ q.e.d.

\subsubsection{Injectivity}
\begin{statement}\label{nonSpec1}
Let $L \in U_\alpha$. Then $\dim \mathrm{L}(D_\alpha -D_\infty, \Sigma) = 0$.
\end{statement}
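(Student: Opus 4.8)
The plan is to argue by contradiction: assume there is a nonzero $f \in \mathrm{L}(D_\alpha - D_\infty, \Sigma)$ and produce a contradiction. First I would record the shape of such an $f$. By the genericity of $\alpha$ fixed in Section \ref{degCount}, the divisor $D_\alpha$ is reduced and supported on $N = g(X,\Sigma')$ distinct points $Q_1,\dots,Q_N$ lying over pairwise distinct regular values of $\lambda$, none of them in $\lambda_*(X_S)$ nor in the ramification locus of $\lambda$. Hence $f$ has at most simple poles, located among the $Q_j$, is holomorphic elsewhere, is $\Sigma$-regular, and vanishes at every $\infty_i$; on every component of $X$ on which $f$ does not vanish identically it is therefore non-constant and has a genuine pole at one of the $Q_j$.

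Next I would introduce the vector-valued function $v := f\,h_\alpha \colon X \to \Complex^n$. It is meromorphic, satisfies $\bigl(L(\lambda(P)) - \mu(P)\E\bigr)v(P) = 0$ for all $P \in X \setminus X_\infty$, vanishes at every $\infty_i$ (there $f$ vanishes while $h_\alpha(\infty_i)$ is finite and nonzero, with $j$-th coordinate $\delta_i^j/\alpha_i$), is holomorphic over $\Sing C$, and takes equal values at the two points over each node of $\hat K$ (there $\psi$ takes a single value in $\CP^{n-1}$ and $f$ is $\Sigma$-regular). Its poles lie over $\lambda_*\{Q_1,\dots,Q_N\}$.

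Then I would imitate the determinant computation of Section \ref{degCount} with $v$ in place of $h_\alpha$. For a regular value $a$ of $\lambda$ with $\lambda^{-1}(a) = \{P_1,\dots,P_n\}$, set
\[
\rho(a) \;:=\; {\det}^2\bigl(v(P_1),\dots,v(P_n)\bigr) \;=\; \Bigl(\prod_{l=1}^{n} f(P_l)\Bigr)^{2} r(a),
\]
which extends to a meromorphic function on $\CPP$. Because $v$ vanishes at every $\infty_i$, the factor $\prod_l f(P_l)$ vanishes at $a = \infty$ to order at least $n$, so $\rho$ vanishes there to order at least $2n$; meanwhile the local pictures of Propositions \ref{loc0}--\ref{loc3} apply verbatim (the $Q_j$ avoid $X_S$ and the ramification of $\lambda$) and control the zeros and poles of $\rho$ over $\Sing C$, over the ramification of $\lambda$, and over the $\lambda(Q_j)$. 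Running this count both for $X$ and for the restriction of $v$ to every subcurve $X_I \subset X'_K$ --- the same two-step scheme used in Section \ref{degCount} to prove that $\deg D_\alpha$ is uniform --- one sees that the extra vanishing of $\rho$ at infinity has nowhere to go: it would force $\bigl|\deg D_\alpha|_{X_I}\bigr|$ strictly below $g(X_I,\Sigma'_I)$ for some subcurve, contradicting uniformity, unless $\prod_l f(P_l) \equiv 0$ on each component, i.e.\ $f \equiv 0$ on every component of $X$. A function constant on a component and vanishing at an $\infty_i$ on it is identically zero there, so $f \equiv 0$, a contradiction.

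The step I expect to be the real obstacle is this last one: converting the vanishing of $v = f\,h_\alpha$ at infinity into the vanishing of $f$ through the degree bookkeeping. It is the only point at which both the normalization $h_\alpha(\infty_i)^j = \delta_i^j/\alpha_i$ --- which is exactly what the term $-D_\infty$ in the statement records --- and the uniformity of $\deg D_\alpha$ on $X'_K$ and all of its subcurves are genuinely used; everything before it is formal. An essentially equivalent route is to work on the dual side: one shows there is no nonzero $\Sigma$-regular differential $\omega$ that vanishes along $D_\alpha$ and has at most simple poles on $X_\infty$, by attaching to such an $\omega$ the vector of differentials $h_\alpha\,\omega$, which is again annihilated by $L(\lambda) - \mu\E$ and regular over $\Sing C$, and applying the same count.
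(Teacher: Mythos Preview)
Your argument has a genuine gap at exactly the step you flagged as the obstacle. The degree bookkeeping for $\rho = \bigl(\prod_l f(P_l)\bigr)^{2} r$ does not produce a contradiction: for any meromorphic function on $\CPP$ the divisor has degree zero, and here the zeros and poles balance term by term. The norm $\mathrm{Nm}(f)(a)=\prod_l f(P_l)$ has divisor $\lambda_*(f)$, so the ``extra'' zero of order $\geq 2n$ at $a=\infty$ coming from $f(\infty_i)=0$ is exactly compensated by the extra poles that $\mathrm{Nm}(f)^2$ contributes over $\lambda_*((f)_\infty)$. Running the same count on a subcurve $X_I$ you get
\[
\deg(\rho_I)_\infty \;=\; 2\,\bigl|\deg D_\alpha|_{X_I}\bigr| + 2m_I,\qquad
\deg(\rho_I)_0 \;\geq\; 2\,g(X_I,\Sigma'_I) + 2m_I,
\]
where $m_I$ is the number of poles of $f$ on $X_I$; subtracting $2m_I$ from both sides just recovers the uniformity inequality $\bigl|\deg D_\alpha|_{X_I}\bigr|\geq g(X_I,\Sigma'_I)$ that you already know. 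There is no ``strictly below'' forced anywhere. The subcurve argument in Section~\ref{degCount} works because $r_I(\infty)\neq 0$, which gives an anchor; your $\rho_I$ vanishes at infinity, so that anchor is gone and you are only re-deriving known information.

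The paper's proof takes an entirely different route. It does not use a global degree count at all; instead it observes that if $\mathrm{L}(D_\alpha-D_\infty,\Sigma)\neq 0$ then the endomorphism $A\hat\lambda$ of this space, where $\hat\lambda f=\lambda f$ and $A(f)=f-\sum_i\alpha_i f(\infty_i)h_\alpha^i$, has an eigenvector $g_\alpha$ with eigenvalue $a$. This yields the pointwise identity
\[
(\lambda - a)\,g_\alpha \;=\; \sum_{i=1}^n c_i\, h_\alpha^i
\]
for some constants $c_i$. Evaluating at the fibre $\lambda^{-1}(a)$ (and, in the ramified and nodal cases, differentiating once) produces an explicit linear dependence among the vectors that Propositions~\ref{loc0}--\ref{loc3} assert are independent. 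The crucial new ingredient you are missing is the extraction of a \emph{specific} value $a$ via an eigenvalue, which localizes the problem to a single fibre; your determinant $\rho$ averages over all fibres and the information washes out.
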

\begin{proof}
Let $ \hat \lambda \colon \mathrm{L}(D_\alpha -D_\infty, \Sigma) \to \mathrm{L}(D_\alpha, \Sigma) $ and $ A \colon \mathrm{L}(D_\alpha, \Sigma) \to \mathrm{L}(D_\alpha, \Sigma -D_\infty) $ be linear maps given by $$ \hat \lambda f = \lambda f, \quad A(f) = f- \sum_{i=1}^n \alpha_if(\infty_i)h_\alpha^i.$$ Assume that $\dim \mathrm{L}(D_\alpha -D_\infty, \Sigma) > 0$. Then the operator $$A\hat \lambda \colon  \mathrm{L}(D_\alpha -D_\infty, \Sigma) \to  \mathrm{L}(D_\alpha -D_\infty, \Sigma) $$ must have an eigenvector $g_\alpha \in  \mathrm{L}(D_\alpha -D_\infty, \Sigma) $. Denote the corresponding eigenvalue by $a$. We have \begin{align}\label{badEquation}(\lambda - a) g_\alpha = \sum_{i=1}^n c_i h_\alpha^i\end{align} where $c_1, \dots, c_n \in \Complex$. As it is easy to see, the spectrum of $p\hat \lambda$ does not depend on the choice of $\alpha$, so we can assume that $\mathrm{supp}(D_\alpha) \cap \lambda^{-1}(a)$ is empty. Then \eqref{badEquation} implies that
\begin{align}\label{bad2} \sum_{i=1}^n c_i h_\alpha^i(P) = 0 \,\, \forall \,\, P \in \lambda^{-1}(a).\end{align}
Let us assume that genericity assumption \eqref{genas} is satisfied, and hence $(\lambda)_a$ belongs to one of the four aforementioned types. Let us consider each of these types and show that \eqref{badEquation} can not hold. The proof in the general case is analogous.
\begin{enumerate}\item Let $(\lambda)_a = \sum_{i=1}^n P_i$. Then \eqref{bad2} implies that $h_\alpha(P_1), \,\dots$, $h_\alpha(P_n)$ are linearly dependent, which is not possible.
\item Let $(\lambda)_a = 2P_{n-1} + \sum_{i=1}^{n-2} P_i$. Let $z$ be a local coordinate near $P_{n-1}$. We have $\lambda' _z(P_{n-1}) = 0$, so by \eqref{badEquation} we have $$ \sum_{i=1}^n c_i  \diffFXYp{h_\alpha^i}{z}(P_{n-1}) = 0$$ and, using \eqref{bad2}, we conclude that $h_\alpha(P_1), \,\dots$, $h_\alpha(P_{n-2})$, $ h_\alpha(P_{n-1})$, $(h_\alpha)'_{z}(P_{n-1})$ are linearly dependent. By item a) of Proposition \ref{loc1}, this is not possible.
\item Let $(\lambda)_a = P^+ + P^- + \sum_{i=1}^{n-2} P_i$ where $\pi(P^\pm) \in K$. In view of Proposition \ref{loc2}, this case is analogous to Case 1.
\item Let $(\lambda)_a = P^+ + P^- + \sum_{i=1}^{n-2} P_i$ where $\pi(P^\pm) \in \hat K$.  By \eqref{badEquation}, we have
\begin{align}\label{bad3}
 g_\alpha(P^+) - g_\alpha(P^-) =  \sum_{i=1}^n c_i \left(\diffFXYp{ h_\alpha^i }{ \lambda }(P^{+}) -  \diffFXYp{ h_\alpha^i }{ \lambda }(P^{-}) \right).
\end{align}
Since $\g_\alpha \in \pazocal M(X, \Sigma)$, we should have $ g_\alpha(P^+) = g_\alpha(P^-) $, so \eqref{bad2} and \eqref{bad3} imply that 
$
h_\alpha(P_1),\, \dots$, $h_\alpha(P_{n-2})$, $h_\alpha(P^+)$, $(h_\alpha)'_\lambda(P_{+})  - (h_\alpha)'_\lambda(P_{-})$
are linearly dependent. This is impossible by item a) of Proposition \ref{loc3}.
\end{enumerate}

\end{proof}
\begin{statement}\label{nonSpec2}
Let $L \in U_\alpha$. Then $\dim \mathrm{L}(D_\alpha -D_\infty + \infty_i, \Sigma) = 1$.
\end{statement}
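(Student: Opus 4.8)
The statement follows quickly from Proposition \ref{nonSpec1}, and the plan is to bound $\dim \mathrm{L}(D_\alpha - D_\infty + \infty_i, \Sigma)$ from both sides.

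For the lower bound, I would exhibit an explicit nonzero element, namely the $i$-th coordinate $h_\alpha^i$ of the vector function $h_\alpha = \psi(\sum_j \alpha_j \psi^j)^{-1}$ introduced in Section \ref{cons}. Since $L \in U_\alpha$, the function $\sum_j \alpha_j \psi^j$ is nonvanishing on $\mathrm{supp}(\Sigma')$ and has zero divisor exactly $D_\alpha$, so each $h_\alpha^j$ is meromorphic on $X$ with pole divisor $\le D_\alpha$; moreover, using $\psi(\infty_j)^k = \delta_j^k$, one gets $h_\alpha^i(\infty_j) = \delta_j^i / \alpha_j$, so $h_\alpha^i$ vanishes at $\infty_j$ for $j \ne i$ and equals $1/\alpha_i \ne 0$ at $\infty_i$. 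Hence $h_\alpha^i$ lies in $\mathrm{L}(D_\alpha - D_\infty + \infty_i, \emptyset)$ and is nonzero. It remains to check that $h_\alpha^i$ is $\Sigma$-regular: for a node $P_k^\pm \in \hat K$ one has $\dim \Ker(L(\lambda)-\mu\E) = 1$, so $\psi(P_k^+) = \psi(P_k^-)$ in $\CP^{n-1}$, and since $L \in U_\alpha$ both normalized representatives $h_\alpha(P_k^\pm)$ are finite, hence equal; therefore $h_\alpha^i(P_k^+) = h_\alpha^i(P_k^-) \neq \infty$ and $h_\alpha^i \in \pazocal M(X,\Sigma)$. This gives $\dim \mathrm{L}(D_\alpha - D_\infty + \infty_i, \Sigma) \ge 1$.

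For the upper bound I would use evaluation at $\infty_i$. Since $(D_\alpha - D_\infty + \infty_i)(\infty_i) = 0$, every $f \in \mathrm{L}(D_\alpha - D_\infty + \infty_i, \Sigma)$ is finite at $\infty_i$, so $f \mapsto f(\infty_i)$ defines a linear functional on this space. Its kernel consists of those $f$ with $\mathrm{ord}_{\infty_i} f \ge 1$, and since the divisors $D_\alpha - D_\infty + \infty_i$ and $D_\alpha - D_\infty$ differ only at $\infty_i$, this kernel equals $\mathrm{L}(D_\alpha - D_\infty, \Sigma)$, which is $0$ by Proposition \ref{nonSpec1}. Hence the functional is injective and $\dim \mathrm{L}(D_\alpha - D_\infty + \infty_i, \Sigma) \le 1$. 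Combining the two bounds yields equality.

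I do not foresee a genuine obstacle; the only step requiring care is the verification that $h_\alpha^i$ is $\Sigma$-regular, as opposed to merely meromorphic on $X$, which is precisely where the hypothesis $L \in U_\alpha$ and the defining property of $\hat K$ (nodes with one-dimensional kernel) enter.
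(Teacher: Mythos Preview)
Your proof is correct and follows essentially the same approach as the paper: exhibit $h_\alpha^i$ for the lower bound and use Proposition~\ref{nonSpec1} for the upper bound. The paper phrases the upper bound as ``$f(\infty_i)g - g(\infty_i)f \in \mathrm{L}(D_\alpha - D_\infty,\Sigma) = 0$'' rather than via the evaluation functional, and simply asserts $h_\alpha^i \in \mathrm{L}(D_\alpha - D_\infty + \infty_i,\Sigma)$ without the detailed verification you supply (that verification is implicit in the construction in Section~\ref{cons} and in Proposition~\ref{loc3}), but the substance is identical.
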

\begin{proof}
Let $f,g \in  \mathrm{L}(D_\alpha -D_\infty + \infty_i, \Sigma) $. Then $f(\infty_i)g - g(\infty_i)f \in \mathrm{L}(D_\alpha -D_\infty, \Sigma) $. So, by Proposition \ref{nonSpec1}, we have $f(\infty_i)g - g(\infty_i)f = 0$, and $\dim \mathrm{L}(D_\alpha -D_\infty + \infty_i, \Sigma) \leq 1$. On the other hand, $h_\alpha^i \in \mathrm{L}(D_\alpha -D_\infty + \infty_i, \Sigma)$, so $\dim \mathrm{L}(D_\alpha -D_\infty + \infty_i, \Sigma) = 1$.
\end{proof}
\begin{statement}\label{nonSpec3}
Let $L \in U_\alpha$. Then $\dim \mathrm{L}(D_\alpha, \Sigma') = 1$.
\end{statement}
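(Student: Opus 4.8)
The plan is to deduce $\dim \mathrm{L}(D_\alpha, \Sigma') = 1$ from the vanishing established in Proposition~\ref{nonSpec1}, exploiting the fact that $\Sigma' = \Sigma \cup \{X_\infty\}$ differs from $\Sigma$ exactly by the extra identification of the points at infinity. First I would record the lower bound: $D_\alpha = (h_\alpha)_\infty$ is effective and, since $L \in U_\alpha$, it is $\Sigma'$-regular, so every constant function lies in $\mathrm{L}(D_\alpha, \Sigma')$ (a constant has non-negative order everywhere and takes equal finite values along each gluing set of $\Sigma'$); hence $\dim \mathrm{L}(D_\alpha, \Sigma') \geq 1$.

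For the upper bound, the key point is that $h_\alpha$ is finite at each $\infty_i$ --- indeed $h_\alpha^j(\infty_i) = \delta^j_i/\alpha_i$ --- so $\mathrm{supp}(D_\alpha)$ is disjoint from $X_\infty$. Consequently every $f \in \mathrm{L}(D_\alpha, \Sigma')$ is holomorphic at each $\infty_i$ and, by $\Sigma'$-regularity, takes one common value $c = f(\infty_1) = \dots = f(\infty_n)$. I would then consider the evaluation map $\mathrm{ev}\colon \mathrm{L}(D_\alpha, \Sigma') \to \Complex$, $f \mapsto f(\infty_1)$. If $f \in \Ker \mathrm{ev}$, then $c = 0$, so $f$ vanishes at every $\infty_i$, i.e. $\mathrm{ord}_{\infty_i} f \geq 1$; since the poles of $f$ off $X_\infty$ are bounded by $D_\alpha$ and $D_\alpha$ has no support on $X_\infty$, this says precisely that $f \in \mathrm{L}(D_\alpha - D_\infty, \Sigma)$. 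Conversely, every element of $\mathrm{L}(D_\alpha - D_\infty, \Sigma)$ vanishes at all $\infty_i$ and hence lies in $\Ker \mathrm{ev} \subset \mathrm{L}(D_\alpha, \Sigma')$. Therefore $\Ker \mathrm{ev} = \mathrm{L}(D_\alpha - D_\infty, \Sigma)$, which is zero by Proposition~\ref{nonSpec1}, so $\mathrm{ev}$ is injective and $\dim \mathrm{L}(D_\alpha, \Sigma') \leq 1$. Combined with the lower bound, this yields the claim; equivalently, $\mathrm{L}(D_\alpha, \Sigma')$ consists exactly of the constants.

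I do not expect a genuine obstacle here: all the analytic difficulty resides in Proposition~\ref{nonSpec1}, which itself rests on the local pictures of Propositions~\ref{loc0}--\ref{loc3} at ramification points and nodes and on the genericity assumption~\eqref{genas}. The only thing to handle carefully is the bookkeeping at infinity --- checking that $\mathrm{supp}(D_\alpha) \cap X_\infty = \emptyset$, that the common-value condition imposed by $\Sigma'$ translates exactly into the extra vanishing $\mathrm{ord}_{\infty_i} f \geq 1$ needed for membership in $\mathrm{L}(D_\alpha - D_\infty, \Sigma)$, and that switching between $\Sigma'$- and $\Sigma$-regularity introduces no extra constraints. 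Note that Proposition~\ref{nonSpec2} is not needed for this step; it presumably feeds into the surjectivity and degree computations that follow.
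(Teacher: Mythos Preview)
Your proof is correct and follows essentially the same route as the paper's: both reduce to Proposition~\ref{nonSpec1} via a one-dimensional linear map, the paper using $f \mapsto f - f(\infty_1)$ (with kernel the constants and image in $\mathrm{L}(D_\alpha - D_\infty,\Sigma)$) while you use the evaluation $f \mapsto f(\infty_1)$ (with kernel $\mathrm{L}(D_\alpha - D_\infty,\Sigma)$ and image in $\Complex$). Your remark that Proposition~\ref{nonSpec2} is not used here is also accurate.
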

\begin{proof}
 Consider the linear map
$
A \colon \mathrm{L}(D_\alpha, \Sigma') \to \mathrm{L}(D_\alpha - D_\infty, \Sigma)
$
given by $A(f) = f - f(\infty_1)$. We have $$\dim \mathrm{L}(D_\alpha , \Sigma') \leq \dim \mathrm{L}(D_\alpha - D_\infty, \Sigma)  + \dim \Ker A = 1.$$ On the other hand, we have $1 \in \mathrm{L}(D_\alpha, \Sigma')$, so  $\dim \mathrm{L}(D_\alpha, \Sigma') = 1$.
\end{proof}
Now, let us proof that $\Phi$ is injective. Assume that $L^{(1)} \neq L^{(2)}$, and that $\Phi(L^{(1)}) =\Phi( L^{(2)})$. As it is easy to see, there exists $\alpha$ such that $L^{(1)}, L^{(2)} \in U_\alpha$. We have $$[D^{(1)}_\alpha]_{\Sigma'} =[D^{(2)}_\alpha]_{\Sigma'} , $$ so $D^{(2)}_\alpha - D^{(1)}_\alpha = (f)$ where $f \in \mathrm{L}(D^{(1)}_\alpha, \Sigma')$. By Proposition \ref{nonSpec3}, we have $f = \const$, therefore $D^{(2)}_\alpha = D^{(1)}_\alpha$.\par
Further, let us show that  $L^{(1)} = L^{(2)}$. We have $$(h^{(1)}_\alpha)^i, (h^{(2)}_\alpha)^i \in \mathrm{L}(D^{(1)}_\alpha -D_\infty + \infty_i, \Sigma),$$ and using Proposition \ref{nonSpec2}, we conclude that $(h^{(1)}_\alpha)^i$ and $(h^{(2)}_\alpha)^i$ are proportional. At the same time,  we have $(h^{(1)}_\alpha)^i(\infty_i)=(h^{(2)}_\alpha)^i(\infty_i) = (\alpha_i)^{-1}$, so $(h^{(1)}_\alpha)^i=(h^{(2)}_\alpha)^i$, and $h^{(1)}_\alpha=h^{(2)}_\alpha$. Consequently, for each $a \in \Complex$, the matrices $L^{(1)}(a)$ and $L^{(2)}(a)$ have same eigenvalues and eigenvectors, and must coincide.

\subsubsection{Denseness of the image}
\begin{statement}\label{uss} Let $d$ be a multidegree of total degree $g(X,\Sigma')$. Then $d$ is uniform on $X/\Sigma'$ if and only if $d - \deg D_\infty$ is semistable on $X/\Sigma$.
\end{statement}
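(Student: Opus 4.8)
The plan is to rewrite both sides of the asserted equivalence as systems of numerical inequalities on the multidegree $d$ and then match them term by term; apart from routine bookkeeping with the genus formulas already obtained, the one real ingredient is a genus identity for subcurves of $X/\Sigma'$ together with the elementary fact that $\lambda$ is nonconstant on every irreducible component of $C$. To dispose first of the total degree: from $g(X,\Sigma') = \tfrac{mn(n-1)}{2}-|K|$ and $g(X,\Sigma) = \tfrac{mn(n-1)}{2}-|K|+c(X,\Sigma)-n$ we get $g(X,\Sigma')-n = g(X,\Sigma)-c(X,\Sigma)$, and since $|\deg D_\infty| = n$ this means precisely that $|d| = g(X,\Sigma')$ holds if and only if $|d-\deg D_\infty| = g(X,\Sigma)-c(X,\Sigma)$; thus the hypothesis ``total degree $g(X,\Sigma')$'' in the notion of uniformity coincides with the degree requirement in the notion of semistability.

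For the per-subcurve comparison, fix a nonempty set $I$ of components of $X$, let $q(I)$ be the number of the points at infinity lying on $X_I$, and let $X_I/\Sigma'_I$ be the corresponding subcurve of $X/\Sigma'$ (passing, if one prefers, to the honest nodal model of $X/\Sigma'$ obtained by gluing a $\CP^1$-component to $X$ at $\infty_1,\dots,\infty_n$ as in Section \ref{nc}; the subcurve then inherits from the point $X_\infty$ the node identifying the $q(I)$ branches present on $X_I$). The claim is that
\[
g(X_I,\Sigma'_I) \;=\; g(X_I,\Sigma_I) + q(I) - c(X_I,\Sigma_I).
\]
This is a one-line consequence of Proposition \ref{aggnc}: going from $X_I/\Sigma_I$ to $X_I/\Sigma'_I$ glues $q(I)$ points to a single point, raising the arithmetic genus by $q(I)-1$ and lowering the number of connected components by $b(I)-1$, where $b(I)$ is the number of connected components of $X_I/\Sigma_I$ that meet $X_\infty$; the net change is $q(I)-b(I)$. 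It remains to see that $b(I)=c(X_I,\Sigma_I)$, i.e. that every connected component of $X_I/\Sigma_I$ contains one of $\infty_1,\dots,\infty_n$. This is the point where the structure of the spectral curve enters: if $\lambda$ were constant on some component of $C$, a vertical line $\{\lambda=\const\}$ would be a component of $\{\det(L(\lambda)-\mu\E)=0\}$, which is impossible since $\det(L(\lambda)-\mu\E)$ has $\mu$-degree $n$ for every fixed $\lambda$; hence $\lambda$ has a pole on every component of $C$, so every connected component of every $X_I/\Sigma_I$ carries at least one $\infty_j$, which gives $b(I)=c(X_I,\Sigma_I)$ (and in particular $q(I)\ge c(X_I,\Sigma_I)\ge 1$), and the displayed identity follows.

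Finally, writing $e = d-\deg D_\infty$, so that $|e_I| = |d_I| - q(I)$, the genus identity converts the uniformity inequality $|d_I|\ge g(X_I,\Sigma'_I)$ into $|e_I|\ge g(X_I,\Sigma_I)-c(X_I,\Sigma_I)$, which is exactly the semistability inequality for the subcurve $X_I/\Sigma_I$. Quantifying over all nonempty $I$ and combining with the total-degree equivalence of the first step yields ``$d$ uniform on $X/\Sigma'$ $\iff$ $d-\deg D_\infty$ semistable on $X/\Sigma$'', as claimed. The step I expect to be the main obstacle is the genus identity itself — specifically pinning down the correct reading of $g(X_I,\Sigma'_I)$ for subcurves of the generalized nodal curve $X/\Sigma'$ and verifying $b(I)=c(X_I,\Sigma_I)$; once that is in place, the remainder is a mechanical application of Proposition \ref{aggnc} and of the genus identities already established.
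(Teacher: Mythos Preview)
Your proof is correct and follows exactly the paper's approach: the paper's entire proof is to state the identity $g(X_I,\Sigma'_I) - \lvert\deg D_\infty\!\mid_{X_I}\rvert = g(X_I,\Sigma_I) - c(X_I,\Sigma_I)$ and call it ``obvious'', whereas you actually derive it via Proposition~\ref{aggnc} and the observation that every irreducible component of $C$ carries a pole of $\lambda$. The only difference is level of detail; your handling of the subcurve $\Sigma'_I$ (including the $b(I)=c(X_I,\Sigma_I)$ step) is the content the paper suppresses.
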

\begin{proof}
This follows from the obvious formula
$$
g(X_I, \Sigma'_I) -  \left\lvert\deg D_\infty \mid_{X_I}\right\rvert= g(X_I, \Sigma_I) - c(X_I,\Sigma_I).
$$
satisfied for any $I \subset \left\{1,\dots, c(X)\right\}$.
\end{proof}
Let $d$ be a uniform degree on  $X/\Sigma'$. By Proposition \ref{density}, the set $E_d(X,\Sigma')$ is dense in $\Pic_d(X,\Sigma')$. Further, let $d_r = d - \deg D_\infty$. Then, by Proposition \ref{semistabMain}, the set
$W_{d_r}(X,\Sigma)$ has positive codimension in $\Pic_{d_r}(X,\Sigma)$. Let
$$
\Pic^{reg}_d(X,\Sigma') = E_d(X,\Sigma) \cap (i^*_{\Sigma})^{-1}\left(\Pic_{d_r}(X,\Sigma) \setminus W_{d_r}(X,\Sigma) + [D_\infty]\right).
$$
The set $\Pic^{reg}_d(X,\Sigma')$ is dense in $\Pic_d(X,\Sigma')$. Let us show that $\Imm \Phi \supset \Pic^{reg}_d(X,\Sigma')$, so that $\Imm \Phi$ is also dense. Let $\xi \in \Pic^{reg}_d(X,\Sigma')$. Then we can find a $\Sigma'$-regular effective divisor $D$ such that $[D] = \xi$. 
By Riemann's inequality, we have $$\dim \mathrm L(D - D_\infty + \infty_i, \Sigma) \geq 1.$$ Let $h^i \in L(D - D_\infty + \infty_i, \Sigma) \setminus \{0\}$. By the construction of the set $\Pic^{reg}_d(X,\Sigma')$, we have $$\dim \mathrm L(D-D_\infty, \Sigma) = 0,$$ so $h^i(\infty_i) \neq 0,$ and we can normalize $h^i$ by $h^i(\infty_i) = 1$. Define $h = (h^1, \dots, h^n)$. \pagebreak[3] We need to show that there exists $L \in \pazocal S_{C}^K$ such that 
\begin{align}\label{whatweneed}
(L(\lambda(P)) -\mu(P)\E)h(P) = 0.
\end{align}
\par
Let $a \in \CPP$ be a regular value of $\lambda$, and let $\lambda^{-1}(a) = \{ P_1, \dots, P_n \}$. Let
$$r(a) ={ \det}^2(h(P_1), \dots, h(P_n)).$$
\begin{statement}\label{loc4}
Proposition \ref{loc0}, item b) of Proposition \ref{loc1}, item b) of Proposition \ref{loc2}, and item b) of Proposition \ref{loc3} hold for $r(a)$.
\end{statement}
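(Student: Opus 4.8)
The plan is to rerun the proofs of Propositions~\ref{loc0}--\ref{loc3}, with one structural change: there is now no Lax matrix available, so the role played there by item~a) of each proposition (the Jordan block and eigenvector structure of $L$) must be replaced by something intrinsic to the divisor $D$. Two facts are at hand: $\dim\mathrm L(D-D_\infty,\Sigma)=0$ (this is precisely how $\Pic^{reg}_d(X,\Sigma')$ was defined), and $h^i(\infty_k)=\delta_{ik}$ (because the coefficient of $\infty_k$ in $D-D_\infty+\infty_i$ is $0$ when $k=i$ and $-1$ when $k\neq i$, so $h^i$ vanishes at every $\infty_k$ with $k\neq i$ and was normalized to $1$ at $\infty_i$). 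In particular $h^1,\dots,h^n$ are linearly independent as meromorphic functions on $X$, so any relation $\sum_ib_ih^i=0$ forces $b=0$; note also that each $h^i$, hence every such combination $g=\sum_ib_ih^i$, lies in $\mathrm L(D,\Sigma)$.

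The key mechanism I would isolate first is this. Suppose $a\in\Complex$ is not the $\lambda$-coordinate of any node of $C$ lying in $\hat K$; this covers the settings of Propositions~\ref{loc0}, \ref{loc1} and~\ref{loc2}. Then $\lambda-a\in\pazocal M^*(X,\Sigma)$ and $(\lambda-a)_0$ is supported away from $\mathrm{supp}(\Sigma)$. Form the matrix $M$ whose columns are $h(Q_1),\dots,h(Q_n)$, where $Q_1,\dots,Q_n$ run over $(\lambda-a)_0$ counted with multiplicity, with the single modification that in the ramified case of Proposition~\ref{loc1} the second copy of $P_{n-1}$ is replaced by the column $(h)'_z(P_{n-1})$, $z$ a local coordinate with $\lambda=a+z^2$. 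A linear dependence among the rows of $M$ produces a nonzero $g=\sum_ib_ih^i\in\mathrm L(D,\Sigma)$ whose divisor dominates $(\lambda-a)_0$; since $D$ is effective and $\Sigma'$-regular, a divisor check shows $g/(\lambda-a)\in\mathrm L(D-D_\infty,\Sigma)=0$, so $g=0$ and $b=0$, a contradiction. Hence $\det M\neq0$. For Proposition~\ref{loc0} this is the statement $r(a)\neq0$; for Propositions~\ref{loc1} and~\ref{loc2} it is exactly the non-vanishing of the leading coefficient of the Taylor expansion of $r$ at $a$ written down in those proofs, and the order of vanishing of $r$ then follows from the identical expansion, which uses nothing about $L$.

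The one case that genuinely differs — and the step I expect to be \emph{the} main obstacle — is Proposition~\ref{loc3}, where $a=\lambda(P^\pm)$ with $\{P^+,P^-\}\in\Sigma$, so that $\lambda-a$ vanishes on $\mathrm{supp}(\Sigma)$ and is no longer a unit of $\pazocal M(X,\Sigma)$. Here $h(P^+)=h(P^-)$ automatically, since $h^i\in\pazocal M(X,\Sigma)$, and the relevant matrix has columns $h(P_1),\dots,h(P_{n-2}),h(P^+)$ and $h'_\lambda(P^+)-h'_\lambda(P^-)$; a row dependence now gives a nonzero $g=\sum_ib_ih^i$ with $g(P_j)=0$ for $j\leq n-2$, $g(P^+)=g(P^-)=0$ and, crucially, $g'_\lambda(P^+)=g'_\lambda(P^-)$. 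This last equality is precisely what makes $g/(\lambda-a)$ — a priori regular only for the smaller gluing $\Sigma\setminus\{\{P^+,P^-\}\}$ — take equal finite values at $P^+$ and $P^-$, hence be genuinely $\Sigma$-regular; the same divisor check as above then puts $g/(\lambda-a)$ in $\mathrm L(D-D_\infty,\Sigma)=0$, so $g=0$, $b=0$, and the leading coefficient of the Taylor expansion of $r$ at $a$ is non-zero, giving the double zero. Finally, the meromorphic extension of $r$ across all these special values of $a$, and the removal of the genericity assumption~\eqref{genas}, are handled word for word as in Propositions~\ref{loc0}--\ref{loc3}.
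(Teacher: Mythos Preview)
Your argument is correct but takes a different route from the paper's. The paper argues globally: the Taylor expansions from the proofs of Propositions~\ref{loc1} and~\ref{loc3} (which use nothing about $L$, only that $h\in\pazocal M(X,\Sigma)$) already give the inequality $(r)_0\geq\lambda_*((\lambda)_R+D_\Sigma)$, and this inequality is strict exactly when one of the four statements fails. Since $(h^i)_\infty\leq D$ forces $\deg(r)_\infty\leq 2\lvert\deg D\rvert=2g(X,\Sigma')=\deg\lambda_*((\lambda)_R+D_\Sigma)$, the degree balance $\deg(r)_0=\deg(r)_\infty$ rules out strict inequality, and all four statements follow at once.

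Your local argument instead uses the hypothesis $\dim\mathrm L(D-D_\infty,\Sigma)=0$ directly, by turning a hypothetical linear dependence among the columns into a nonzero element of that space. This is precisely the converse of the paper's proof of Proposition~\ref{nonSpec1}, where the vanishing of $\mathrm L(D_\alpha-D_\infty,\Sigma)$ is deduced \emph{from} Propositions~\ref{loc0}--\ref{loc3}. So the two directions of the correspondence ``$\mathrm L(D-D_\infty,\Sigma)=0\Leftrightarrow$ the four local determinants do not vanish'' are both available, and you and the paper each pick the one suited to the situation at hand. The paper's degree count is shorter and uses only $\lvert\deg D\rvert=g(X,\Sigma')$; your argument is more explicit about why the defining condition of $\Pic_d^{reg}(X,\Sigma')$ is the right one, and your treatment of the node in $\hat K$ --- observing that $g'_\lambda(P^+)=g'_\lambda(P^-)$ is exactly what restores $\Sigma$-regularity of $g/(\lambda-a)$ --- is the genuinely new ingredient. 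One small correction: the paper does not remove the genericity assumption~\eqref{genas} in Propositions~\ref{loc0}--\ref{loc3}; it is assumed throughout Section~\ref{degCount}, so your final sentence overstates what is available there.
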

\begin{proof}
Arguments similar to that of Section \ref{degCount} show that $(r)_0 \geq \lambda_*((\lambda)_R + D_\Sigma)$, and that if at least one of Propositions \ref{loc0} -- \ref{loc3} does not hold, then this inequality must be strict. Comparing degrees we conclude that 
$(r)_0 = \lambda_*((\lambda)_R + D_\Sigma)$, which proves the proposition. 
\end{proof}
Define matrices $H(a) = (h^j(P_i))$ and $M(a) = \mathrm{diag}(\mu(P_1), \dots , \mu(P_n))$. Let
$$
L(a) = H(a)M(a)H(a)^{-1}.
$$
Then $L(a)$ is meromorphic in $a$ and satisfies \eqref{whatweneed}. 
Local analysis using Proposition \ref{loc4} shows that $L(a)$ does not have poles except for the pole at infinity, and that $L(a) -b\E$ has two-dimensional kernel if and only if $(a,b) \in K$.  Finally,  the condition $C \in \mathcal C_{spec}$ implies that the pole at infinity is of order $m$ and that the leading term is equal to $J$. Therefore, we have $L \in \pazocal S_{C}^K$, and $\Phi(L) = \xi$, q.e.d.

\subsubsection{Linearization of flows}
Let us consider the solution curve of \eqref{loopI} and show that its image under the mapping $\Phi$ is given by \eqref{velocity}. The proof is similar to the non-singular case. Denote
$
A(\lambda) = \phi(L(\lambda),\lambda^{-1})_+.
$
In a standard way, we show that $h_\alpha$ satisfies
\begin{align}\label{heq}
\diffXp{t} h_\alpha = (\nu\E - A(\lambda) )h_\alpha
\end{align}
where 
$$
\nu(P) = \sum_{i=1}^n \alpha_i (A(\lambda(P))h_\alpha(P))^i \in \pazocal M(X,\Sigma).
$$
Let $P_1(t), \dots, P_g(t)$ be the poles of $h_\alpha$, and let $D_\alpha(t) = \sum P_i(t)$. Then \eqref{heq} implies that
\begin{align*}
\diffXp{t} \int_{D_\alpha(0)}^{D_\alpha(t)}\omega=-\sum_{i=1}^g \Res_{P_i}\,\nu\omega
\end{align*}
where $\omega$ is any meromorphic differential. \par \pagebreak[3]
Now, let $\omega \in \Omega^1(X,\Sigma')$. Then 
\begin{align*}
-\sum_{i=1}^g \Res_{P_i}\,\nu\omega = \sum_{i=1}^n \Res_{\infty_i}\, \nu\omega + \sum_{i=1}^{|\Sigma|}\left(\Res_{P_i^+}\, \nu\omega + \Res_{P_i^-}\, \nu\omega\right). 
\end{align*}
Since  $\omega \in \Omega^1(X,\Sigma')$, and $\nu \in \pazocal M(X,\Sigma)$, the latter sum vanishes. At the same time, we have
\begin{align*}
 \Res_{\infty_i}\, \nu\omega = \sum_{j=1}^n \alpha_j \, \Res_{\infty_i}\, \left( \phi(L(\lambda),\lambda^{-1})_+\,h_\alpha \right)^j \omega.
\end{align*}
Note that $\mathrm{ord}_{\infty_i} \left(  (\phi(L(\lambda),\lambda^{-1}) - \phi(L(\lambda),\lambda^{-1})_{+})h_\alpha \right)^j \geq 1$, and $\mathrm{ord}_{\infty_i} \omega \geq -1$, so
$$
\Res_{\infty_i}\, \left( \phi(L(\lambda),\lambda^{-1})_+\,h_\alpha \right)^j \omega = \Res_{\infty_i}\, \left( \phi(L(\lambda),\lambda^{-1})\,h_\alpha \right)^j \omega = \Res_{\infty_i}\,  \phi(\mu,\lambda^{-1})\,h_\alpha^j \,\omega,
$$
and
$$
\diffXp{t} \int_{D_\alpha(0)}^{D_\alpha(t)}\omega = \sum_{i=1}^n \Res_{\infty_i}\, \nu\omega =  \sum_{i=1}^n \sum_{j=1}^n \alpha_j  \,\Res_{\infty_i}\, \left( \phi(\mu,\lambda^{-1})\,h_\alpha \right)^j \omega = \sum_{i=1}^n  \,\Res_{\infty_i}\,  \phi(\mu,\lambda^{-1}) \omega, 
$$
q.e.d.
%
%
%
\begin{statement}\label{completeness}
Flows \eqref{velocity} span the tangent space to $\Pic(X,\Sigma')$.
\end{statement}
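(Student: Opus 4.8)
The plan is to reduce the claim to a statement about $\Sigma'$-regular differentials and then dispatch it by a Vandermonde argument, exactly as in the non-singular case. Recall that the tangent space to $\Pic_d(X,\Sigma')$ at every point is canonically $\Omega^1(X,\Sigma')^\ast$, and that the computation just carried out in the ``Linearization of flows'' paragraph identifies the velocity of the flow \eqref{velocity} attached to $\phi\in\Complex[\mu,\lambda^{-1}]$ with the constant (base-point independent) vector $v_\phi\in\Omega^1(X,\Sigma')^\ast$ given by $\langle v_\phi,\omega\rangle=\sum_{i=1}^n\Res_{\infty_i}(\phi\,\omega)$. Hence ``the flows span the tangent space at every point'' is equivalent to the assertion that the functionals $v_\phi$ span $\Omega^1(X,\Sigma')^\ast$, i.e. that no nonzero $\omega\in\Omega^1(X,\Sigma')$ annihilates all $v_\phi$. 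Since the monomials $\phi=\mu^k\lambda^{-l}$ with $k,l\geq 0$ span $\Complex[\mu,\lambda^{-1}]$, it is enough to test against these.

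First I would localize at the points at infinity. As $\lambda$ has $n$ simple poles $\infty_1,\dots,\infty_n$, the function $s=\lambda^{-1}$ is a local coordinate at each $\infty_i$; write $\omega=W_i(s)\,ds$ with $W_i(s)=c_is^{-1}+(\text{holomorphic})$, where $c_i=\Res_{\infty_i}\omega$ and $\sum_i c_i=0$ by $\Sigma'$-regularity, and write $\mu=s^{-m}P_i(s)$ with $P_i$ holomorphic and $P_i(0)=j_i$. A one-line local computation then gives
\[
\langle v_{\mu^k\lambda^{-l}},\,\omega\rangle=\sum_{i=1}^{n}\bigl[\,\text{coefficient of }s^{\,mk-l-1}\text{ in }P_i(s)^k W_i(s)\,\bigr],
\]
so, letting $l$ range over all nonnegative integers, the assumption ``$\langle v_\phi,\omega\rangle=0$ for all $\phi$'' becomes the family of relations $\sum_{i=1}^{n}P_i(s)^kW_i(s)=O(s^{mk})$, one for each $k\geq 0$.

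The rest is extracting $\omega=0$ from these relations in two stages. Reading off the coefficient of $s^{-1}$ (the only possible negative power, since the $P_i$ are holomorphic and the $W_i$ have at worst simple poles) gives $\sum_{i=1}^n j_i^k c_i=0$ for all $k\geq 0$; as the $j_i$ are pairwise distinct, the Vandermonde matrix $(j_i^k)_{0\leq k\leq n-1}$ is invertible, whence $c_i=0$ for all $i$ and every $W_i$ becomes holomorphic. For the second stage, suppose some $W_i$ were nonzero; put $v=\min_i\mathrm{ord}_s W_i\geq 0$ and let $b_i$ be the $s^v$-coefficient of $W_i$, so that $I_0=\{i:b_i\neq 0\}$ is nonempty. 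Comparing the $s^v$-coefficients in $\sum_i P_i^kW_i=O(s^{mk})$ for all $k$ with $mk>v$ yields $\sum_{i\in I_0}j_i^kb_i=0$ for all large $k$; distinctness (and nonvanishing) of the $j_i$ again forces $b_i=0$ for $i\in I_0$, a contradiction. Therefore all $W_i$ vanish, so $\omega$ vanishes in a neighbourhood of every $\infty_i$, and since each irreducible component of $C$ meets the line at infinity (there are no vertical components, $\mu$ being finite wherever $\lambda$ is), $\omega\equiv 0$ on $X$; as $\dim\Omega^1(X,\Sigma')^\ast<\infty$ this yields the spanning.

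The crux — and the only place the hypotheses really enter — is the Vandermonde step: it is exactly the distinctness of the eigenvalues $j_1,\dots,j_n$ of the leading term $J$ that makes the ``residue at infinity'' data on the $n$ sheets, subject only to $\Sigma'$-regularity, rich enough to separate $\Omega^1(X,\Sigma')^\ast$; for a degenerate $J$ the $v_\phi$ would fail to span. Two small degeneracies must be handled in passing, both routine: lines $\lambda=a$ carrying an extra ramification point or node, treated exactly like the corresponding cases in the multidegree count of Section~\ref{degCount}; and the case $0\in\mathrm{spec}\,J$, in which $\mu$ has a pole of order $<m$ at the corresponding $\infty_i$, where one runs the two-stage argument on the remaining sheets first and then uses $\sum_i\Res_{\infty_i}\omega=0$ to finish off the last sheet. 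I would relegate these to a remark.
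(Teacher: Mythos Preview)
Your approach is essentially the paper's: reduce to showing that the right radical of the pairing $\langle\phi,\omega\rangle_\infty=\sum_i\Res_{\infty_i}\phi\omega$ is trivial, then use the distinctness of the $j_i$ in a Vandermonde argument with the local expansions at~$\infty_i$. The paper does it in one pass rather than your two stages: it sets $s=\min_i\mathrm{ord}_{\infty_i}\omega$ (the printed ``$\max$'' is a slip), picks $(j,k)$ with $k-mj+s=-1$, and reads off $\sum_i j_i^{\,j}\,\Res_{\infty_i}\lambda^{s+1}\omega=0$; this single relation already covers both your stage~1 ($s=-1$) and your stage~2 ($s\geq 0$).

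Two corrections. First, your proposed handling of $0\in\mathrm{spec}\,J$ is not quite right: the identity $\sum_i\Res_{\infty_i}\omega=0$ is the $k=0$ relation you already spent in stage~1 and gives nothing further. What the paper uses instead is the global residue theorem applied to $\lambda^{s+1}\omega$: since $\lambda(P_j^+)=\lambda(P_j^-)$ at every node, the paired residues of $\lambda^{s+1}\omega$ at $\mathrm{supp}(\Sigma)$ still cancel, whence $\sum_i\Res_{\infty_i}\lambda^{s+1}\omega=0$; together with the Vandermonde conclusion for $i\geq 2$ this forces the remaining residue to vanish and contradicts the choice of $s$. Your alternative plan (show $W_i\equiv 0$ for $i\neq 1$ first, then treat the last sheet) can be made to work, but it needs an extra observation when $\mu$ is bounded on that sheet, not the identity you cite. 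Second, your aside about ``lines $\lambda=a$ carrying an extra ramification point or node'' is irrelevant here: the argument is purely local at $\infty_1,\dots,\infty_n$ and never touches the genericity assumption from Section~\ref{degCount}.
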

\begin{proof}
Let us consider a bilinear pairing
$$
\langle \,,  \rangle_\infty \colon \Complex[\mu,\lambda^{-1}] \times  \Omega^1(X,\Sigma') \to \Complex
$$
given by
\begin{align}\label{HMAP}
\langle \phi, \omega \rangle_\infty = \sum_{i=1}^n  \,\Res_{\infty_i}\,  \phi \omega.
\end{align}
\pagebreak[4]
We need to show that the mapping $ \Complex[\mu,\lambda^{-1}]  \to  \Omega^1(X,\Sigma')^*$ given by $\phi \mapsto \langle \phi, \,\rangle_\infty$ 
is surjective, or, which is the same, that the right radical of the form $\langle \,,  \rangle_\infty$ is trivial.
 Let
$
s = \max_{i} \mathrm{ord}_{\infty_i}\,\omega.
$
Then
$$
 \mathrm{ord}_{\infty_i}\,\mu^j\lambda^{-k}\omega \geq k - mj + s,
$$
and if $k - mj + s = -1$, then
$$
\Res_{\infty_i}\,\mu^j\lambda^{-k}\omega = j_i^j \,\Res_{\infty_i}\,\lambda^{s+1}\omega, 
$$
and
$$
\langle \mu^j\lambda^{-k}, \omega \rangle_\infty =  \sum_{i=1}^n  j_i^j \,\Res_{\infty_i}\,\lambda^{s+1}\omega.
$$
Assume that
\begin{align}\label{resFormula}
 \sum_{i=1}^n  j_i^j \,\Res_{\infty_i}\,\lambda^{s+1}\omega = 0 \,\, \forall \,\, j,k \geq 0 : k - mj + s = -1.
\end{align}
Consider two cases.
\begin{enumerate}
\item
If $j_i \neq 0$ for each value of $i$, then \eqref{resFormula} implies that $\Res_{\infty_i}\,\lambda^{s+1}\omega = 0$ for each $i$, which contradicts the choice of $s$.
\item If, say, $j_1 = 0$, then \eqref{resFormula} implies that $\Res_{\infty_i}\,\lambda^{s+1}\omega = 0$ for $2 \leq i \leq n$. Therefore, according to our choice of $s$, we have $\Res_{\infty_1}\,\lambda^{s+1}\omega \neq 0$. At the same time, since $s+1 \geq 0$,  the differential $\lambda^{s+1}\omega$ may have poles only at $\infty_1, \dots, \infty_n$ and points of $\mathrm{supp}(\Sigma)$, and
$$
\mathrm {res}_{P_i^+} \,\lambda^{s+1}\omega + \mathrm {res}_{P_i^-} \,\lambda^{s+1}\omega = 0,
$$
therefore
$$
  \sum_{i=1}^n \Res_{\infty_i}\,\lambda^{s+1}\omega = 0.
$$

\end{enumerate}
So we have a contradiction in both cases, which proves that $\langle \mu^j\lambda^{-k}, \omega \rangle_\infty \neq 0$ for some non-negative $j,k $, q.e.d.
\end{proof}

\subsubsection{Smoothness of $\pazocal S_{C}^K$}
Among the flows \eqref{loopI}, there is a finite number of linearly independent, say, $N$. These flows generate a local $\Complex^N$ action on $\pazocal S_{C}^K$. 
Let $L \in \pazocal S_{C}^K$, and let $O(L)$ be its local orbit under the  $\Complex^N$ action. By Proposition \ref{completeness}, there exists a neighborhood of $\Phi(L)$ which is completely contained in $\Phi(O(L))$. Since the map $\Phi$ is continuous and injective, this implies that there exists a neighborhood $U(L) $ such that $$U(L) \cap \pazocal S_{C}^K = U(L) \cap O(L),$$
therefore $\pazocal S_{C}^K$ is a complex analytic manifold. The map $\Phi $ is bijective and linear in a coordinate chart induced by the $\Complex^N$ action, so it is biholomorphic. Further, Proposition \ref{completeness} implies that flows \eqref{loopI} span the tangent space to $\pazocal S_{C}^K$, q.e.d.
\subsection{Argument shift method and integer points in permutohedra}\label{ex1}
When $m=1$, the space $\mathcal L_m^J(\gl(n)) = \{ X+\lambda J \mid X \in \gl(n)\}$ can be naturally identified with $\gl(n)$. In this case, the integrable system $\pazocal F$ coincides with the system constructed by the so-called \textit{argument shift method}\footnote{Note that if we restrict this system to the subspace $L(\lambda)^t = -L(-\lambda)$, then for a certain choice of $\phi$ in \nolinebreak \eqref{loopI}, we obtain the equation of the free $n$-dimensional rigid body \cite{Manakov}.} \cite{MF}. 
Let us assume that $J = \mathrm{diag}(j_1, \dots, j_n)$ and consider a curve $C$ given by
\begin{align}\label{degCurve}
\prod_{i=1}^n\,( \alpha_i + \lambda j_i - \mu) = 0.
\end{align}
where $\alpha_1, \dots, \alpha_n \in \Complex$. We assume that the curve \eqref{degCurve} is nodal which is equivalent to the condition that the lines $l_1, \dots, l_n$ where $l_i = \{(\lambda, \mu) \in \Complex^2 \mid\alpha_i + \lambda j_i - \mu = 0\}$ are in general position. \par
It is clear that the level set $\pazocal S_C$ contains at least a point $L = \mathrm{diag}(\alpha_1, \dots, \alpha_n)$ which is a common fixed point for all flows \eqref{loopI}, i.e. it is a rank $0$ point for $\F$ (see Section \ref{nsis}). Further, let $\succ$ be any ordering on the set $\{1, \dots, n\}$. Consider the Borel subalgebra 
$$
\mathfrak b_\succ = \{ L \in \gl(n) \mid L_{ij} = 0 \,\,\forall \,\, i \succ j \}
$$
and the corresponding maximal nilpotent subalgebra
$$
\mathfrak n_\succ =[\mathfrak b_\succ, \mathfrak b_\succ]= \{ L \in \mathfrak b_\succ \mid L_{ii} = 0 \}.
$$
We note that subalgebras $b_\succ$ are exactly those Borel subalgebras which contain the centralizer of $J$. There are $n!$ of them, corresponding to the number of elements in the Weil group of $\gl(n)$.\par
Let $\mathfrak q_\succ$ be the coset $$\mathfrak q_\succ =  \mathrm{diag}(\alpha_1, \dots, \alpha_n) + \mathfrak n_\succ \subset \mathfrak b_\succ.$$
Then we have $\mathfrak q_\succ \subset \pazocal S_C$. 
Comparing dimensions, we conclude that $\mathfrak q_\succ $ has an open subset $\mathfrak q_\succ^0 $ completely contained in the regular part  $\pazocal S_C^\emptyset \subset  \pazocal S_C$, so $\pazocal S_C^\emptyset $ has at least $n!$ connected components, and $\pazocal S_C $ has at least $n!$ irreducible components.

\par 
However, in fact, there are much more. By Theorem \ref{thm1}, components of $\pazocal S_C^\emptyset $ are in one-to-one correspondence with uniform multidegrees on the curve obtained from $C$ be identifying points at infinity. The set of uniform multidegrees on this curve coincides with the set of integer points in the polytope
\begin{align*}
P_n = \left\{ x \in \R^n : \sum_{i=1}^n  x_i = \frac{n(n-1)}{2}; \,\, \sum_{i \in I} x_i \geq \frac{|I|(|I|-1)}{2}  \,\, \forall \,\, I \subset \{ 1, \dots, n\} \right\}.
\end{align*}
known as \textit{permutohedron}. This polytope is the convex hull of the set of points $$V_n = \{ v_\sigma = (\sigma(0), \dots, \sigma(n-1)) \in \R^n \mid \sigma \in S_n\}.$$ 
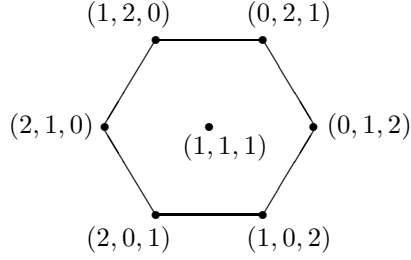
\begin{figure}[t]
\centerline{
\begin{picture}(100,100)
\put(50,50){
\begin{picture}(200,200)
\put(-20, 33){\line(1,0){40}}
\put(20, 33){\line(3,-5){20}}
\put(20, -33){\line(3,5){20}}
\put(-20, -33){\line(1,0){40}}
\put(-20, 33){\line(-3,-5){20}}
\put(-20, -33){\line(-3,5){20}}
\put(39,0){\circle*{3}}
\put(44,-2){$(0,1,2)$}
\put(14,-45){$(1,0,2)$}
\put(-75,-2){$(2,1,0)$}
\put(-46,-45){$(2,0,1)$}
\put(14,40){$(0,2,1)$}
\put(-46,40){$(1,2,0)$}
\put(-10,-10){$(1,1,1)$}
\put(-39,0){\circle*{3}}
\put(20,33){\circle*{3}}
\put(-20,33){\circle*{3}}
\put(20,-33){\circle*{3}}
\put(-20,-33){\circle*{3}}
\put(0,0){\circle*{3}}
\end{picture}
}
\end{picture}
}
\caption{Integer points in permutohedron $P_3$}\label{perm}
\end{figure}
As it is not difficult to see from the construction of the map $\Phi$ (see Section \ref{cons}), the $n!$ vertices $v_\sigma$ of the permutohedron $P_n$ correspond to components $\mathfrak q_\succ^0 $ described above. At the same time, for $n \geq 3$, there are integer points in the interior of $P_n$ as well (see Figure \ref{perm}). If $n$ is large, the number of integer points\footnote{It is also known that the number of integer points in the permutohedron $P_n$ equals the number of forests on $n$ labeled vertices \cite{postnikov2009permutohedra}.}  in $P_n$ is approximately $$\mathrm{Vol}(P_n) = n^{n-2}$$
which is much more than $n!$.\par
It is also not difficult to explicitly write down solutions of \eqref{loopI} corresponding to vertices of the permutohedron, i.e. lying in Borel subalgebras  $\mathfrak b_\succ$. For example, let $n=3$ and let $\phi = \mu^2\lambda^{-1}$. The corresponding vector field \eqref{loopI} reads
\begin{align}\label{mfe}
\dot L = [L^2, J].
\end{align}
The solutions corresponding to the vertex $(0,1,2)$ are
$$
L(t) = \left(\begin{array}{ccc}\alpha_1 & L_{12}(t) & L_{13}(t) \\0 & \alpha_2 & L_{23}(t) \\0 & 0 & \alpha_3\end{array}\right)
$$ 
where 
\begin{align}\label{flas}
\begin{aligned}
\vphantom{L_{23}(t)\sigma^{-1}}&\quad\quad\quad L_{12}(t) = c_{12}e^{\sigma_{12}t},\,\,
L_{23}(t) = c_{23}e^{\sigma_{23}t},\quad \\
&L_{13}(t) = c_{13}e^{\sigma_{13}t} +c_{12}c_{13} {(j_1 - j_3)}{\sigma^{-1}}e^{(\sigma_{12}+\sigma_{23})t},\\
\vphantom{L_{23}(t)\sigma^{-1}}\sigma_{12} = (j_2 - j_1)(\alpha_1 &+ \alpha_2),  \,\,\sigma_{23} = (j_3 - j_2)(\alpha_2 + \alpha_3),  \,\,\sigma_{13} = (j_3 - j_1)(\alpha_1 + \alpha_3),\\
\vphantom{L_{23}(t)\sigma^{-1}}
&\sigma = \alpha_1(j_3 - j_2) + \alpha_2(j_1 - j_3) +  \alpha_3(j_2 - j_1),
\end{aligned}
\end{align}
and $c_{12}, c_{23}, c_{13} \in \Complex^*$ are arbitrary non-zero constants (if they are zero, we obtain solutions not belonging to $\pazocal S_C^\emptyset$). 

\par\pagebreak[4]

In general, all solutions of \eqref{loopI} corresponding to vertices of $P_n$ are linear combinations of exponents. In particular, they are entire functions, which means that the set $\Upsilon_d$ is empty for each $d \in V_n$, and the union $\bigsqcup_{d \in V_n}\Pic_d(X,\Sigma)$ is completely contained in the image of the map $\Phi$ . For points in the interior of $P_n$, this is no longer so. Let us again consider the case $n=3$.  The only integer point in the interior of $P_3$ is $(1,1,1)$ (see Figure \ref{perm}). The corresponding solution of \eqref{mfe} reads:
$$
L(t) = \left(\begin{array}{ccc}\alpha_1 & L_{12}^+(t) & L^-_{13}(t) \\L^-_{21}(t) & \alpha_2 & L^+_{23}(t) \\L^+_{31}(t) & L^-_{32}(t) & \alpha_3\end{array}\right)
$$ 
where
\begin{align*}
L_{ij}^+(t) = \frac{ c_{ij}e^{\sigma_{ij}t}}{1 - \rho e^{-\sigma t}}, \quad L_{ij}^-(t) = \frac{ c_{ij}e^{\sigma_{ij}t}}{1 - \rho^{-1} e^{\sigma t}},
\end{align*}
$\sigma_{12}, \sigma_{13},\sigma_{23},\sigma$ are the same as in \eqref{flas}, $\sigma_{ij} = -\sigma_{ji}$, and the constants $c_{ij}, \rho$ satisfy
$$
\frac{c_{12}c_{21}}{j_2 - j_1} = \frac{c_{23}c_{32}}{j_3 - j_2} = \frac{c_{31}c_{13}}{j_1 - j_3} =  -\frac{c_{12}c_{23}c_{31}}{\rho}
 = \frac{\sigma^2}{(j_2-j_1)(j_3-j_2)(j_1 - j_3)}. $$
 More generally, it can be seen from the constructions of the present paper that solutions of \eqref{loopI} corresponding to all integer points in $P_n$ for arbitrary $n$ are rational functions of exponents. Apparently, there should be some combinatorics relating the permutohedron and these rational functions.
\section{Nodal curves and non-degenerate singularities of integrable systems}\label{sec2}

\subsection{Non-degenerate singularities of integrable systems}\label{nsis}

Let $(M^{2n}, \omega)$ be a real analytic or complex analytic symplectic manifold. Let us denote the space of analytic functions on $M^{2n}$ by $\pazocal O(M)$.
The space $\pazocal O(M)$ is a Lie algebra with respect to the Poisson bracket.

\begin{definition}
	 Let $\pazocal F \subset \pazocal O(M)$ be a Poisson-commutative subalgebra. Then $\pazocal F$ is called \textit{complete} if $\dim \diff \F(x) = n$ almost everywhere, where
	 $\diff \F(x) = \{ \diff f(x), f \in \F\} \subset \T^*_x M$.
\end{definition}
Let $\F  \subset \pazocal O(M)$ be a complete Poisson-commutative subalgebra. Consider an arbitrary $H \in \F$ and the corresponding Hamiltonian vector field
$$
	\mathrm{X}_H = \omega^{-1}\diff H.
$$
Then all functions in $\F$ are pairwise commuting integrals of $\mathrm{X}_H$, and $\mathrm{X}_H$ is completely integrable in the Liouville sense. So, formally, an integrable system is a complete commutative subalgebra $ \F$ with a distinguished Hamiltonian $H \in  \F$. However, the choice of $H \in  \F$ is not important to us, so we do not distinguish between integrable systems and complete commutative subalgebras.


%
%
%

\begin{definition}
	A point $x \in M^{2n}$ is called \textit{singular} for $\pazocal F$ if $\dim \diff \pazocal F(x) < n$. The number $\dim \diff \F(x)$ is called \textit{the rank} of a singular point $x$. The number $n - \dim \diff \F(x)$ is called \textit{the corank} of a singular point $x$.

\end{definition}
%

Let $x \in M^{2n}$ be a singular point of $\F$. Then there exists $H \in \F$ such that $\diff H(x) = 0$ and thus $\mathrm X_H = 0$. For such $H$, we can consider the linearization of the vector field $\mathrm{X}_H$ at the point $x$. This is a linear operator $A_{H}: \T_x M \to \T_x M$.  Let
$$
A_{\F} = \{ A_{H} \mid H \in \F, \diff H(x) = 0\}.
$$
As it is easy to see, $A_{\F}$ is a commutative subalgebra of $\sP(\T_{x}M, \omega)$. \par
Now consider the space $$W = \{\mathrm X_H(x), H \in \F\} \subset \T_{x}M.$$ Since the flows $\mathrm X_H$ where $H \in \F$ pairwise commute, the space $W$ is isotropic with respect to $\omega$. Let $W^\bot$ be the orthogonal complement to $W$ with respect to $\omega$. Then $W^{\bot} / W$ is symplectic. Furthermore, each operator $A_H \in A_{\F}$ vanishes on $W$, so it induces an operator $\pazocal A_H$ on $W^{\bot} / W$. In this way, we can reduce the commutative subalgebra $A_{\F} \subset \sP(\T_{x}M, \omega)$ to a commutative subalgebra  $\pazocal A_{\F} \subset \sP(W^{\bot} / W, \omega)$.

\begin{definition}\label{nd}
	A singular point $x$ is called \textit{non-degenerate} if $\pazocal A_{\F}$ is a Cartan subalgebra in $\sP(W^{\bot} / W, \omega)$.
\end{definition}
In the complex case, all Cartan subalgebras are conjugate to each other. In the real case, Cartan subalgebras were classified in \cite{Williamson}.\par\pagebreak[4]
If $\h\subset \sP(2m, \R)$ is a Cartan subalgebra, then eigenvalues of any $A \in \h$ have the form
\begin{align*}
&\pm \lambda_{1}i, \dots, \pm \lambda_{e}i,\\
 &\pm \mu_{1}, \dots, \pm \mu_{h},\\
  &\pm \alpha_{1} \pm \beta_{1}i, \dots, \pm \alpha_{f} \pm \beta_{f}i,
\end{align*}
where $e + h + 2f = m$. The triple $(e,h,f)$ is the same for any regular $A \in \h$ and is called the \textit{type} of the Cartan subalgebra $\h$. Two Cartan subalgebras of $ \sP(2m, \R)$ are conjugate to each other if and only if they are of the same type.
\begin{definition}
	The \textit{type} of a non-degenerate singular point $x$ is the type of the associated Cartan subalgebra $\pazocal A_{\F} \subset \sP(W^{\bot} / W, \omega)$.
\end{definition}
For every non-degenerate singular point $x$ of rank $r$, the following equality holds:
$$
e + h + 2f  + r = n.
$$
The numbers $e,h,f$ are called the numbers of elliptic, hyperbolic, and focus-focus components respectively. %
%
\begin{theorem}[Vey \cite{vey}]
\label{EliassonThm}
	Let $\F$ be a real analytic\footnote{There also exist $\Cont^\infty$ and equivariant $\Cont^\infty$ versions of Theorem  \ref{EliassonThm}, see \cite{Eliasson, Miranda, MZ}.} integrable system\footnote{Our formulation of Theorem \ref{EliassonThm} is slightly different from the standard one. The latter assumes that $\F$ has dimension $n$ as a vector space. However, it is easy to show that these formulations are equivalent.}  and let $x$ be its non-degenerate singular point of rank $r$ and type $(e,h,f)$. Then there exist a Darboux chart $p_1, q_1,\dots, p_n, q_n$ centered at $x$ such that each $H \in \F$ can be written as
	$$
	H = H(f_1, \dots, f_n)
	$$
where 
\begin{align*}
f_i = \left[\begin{aligned}
&p_i^2 + q_i^2 \quad &\mbox{for } 1 \leq i \leq e,\\
&p_iq_i \quad &\mbox{for }  e+1 \leq i \leq e + h,\\
&p_iq_i + p_{i+1}q_{i+1} \quad &\mbox{for } i = e+h+1, e+h +3, \dots, e+h +2f-1,\\
&p_{i-1}q_i - p_{i}q_{i-1}  \quad &\mbox{for } i = e+h+2, e+h +4, \dots, e+h +2f, \\
&p_i \quad &\mbox{for } i > e+h + 2f.\\
\end{aligned}\right.
\end{align*}
Furthermore, there exist $H_1, \dots H_n \in \F$ such that $\det\left(\diffFXi{H_i}{f_j}(0)\right) \neq 0$.
\end{theorem}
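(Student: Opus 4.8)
The plan is to pass from a linear-algebraic normal form at the Hessian level to a formal normal form valid to all orders, and then to prove that the formal normalization converges; only the last step is substantial. \emph{Step 1: reduction to rank zero.} Since $\F$ is complete and $x$ has rank $r$, one can choose $g_1,\dots,g_r\in\F$ whose Hamiltonian vector fields are independent at $x$; completing them to a Darboux chart and using a Carath\'eodory--Jacobi--Lie argument (the symplectic slice construction near a rank-$r$ point), one splits a neighbourhood of $x$ symplectically into the directions $p_i,q_i$ with $i>e+h+2f$, on which the $g_j$ themselves serve as the coordinates $p_i$ so that the last $f_i$'s automatically have the stated form $p_i$, together with a transverse symplectic slice on which $\F$ induces a complete commutative family with a rank-$0$ non-degenerate singularity. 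Hence we may assume $r=0$, i.e. $\diff H(x)=0$ for every $H\in\F$ and $e+h+2f=n$.

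\emph{Step 2: the Hessians.} For $H\in\F$ write $H=H^{(2)}+O(3)$ with $H^{(2)}$ the Hessian quadratic form; the linearization map sends $H^{(2)}$ to the corresponding element of $\sP(\T_x M,\omega)$, and by non-degeneracy the image of $\{H^{(2)}:H\in\F\}$ is the Cartan subalgebra $\pazocal A_\F$. By Williamson's classification of Cartan subalgebras of $\sP(2n,\R)$, a single linear symplectic change of coordinates conjugates $\pazocal A_\F$ to the standard one generated by the quadratic forms $f_1^{(2)},\dots,f_n^{(2)}$ listed in the theorem, so that afterwards $H^{(2)}$ is a constant-coefficient combination of the $f_i^{(2)}$ for every $H\in\F$. \emph{Step 3: formal normalization of the whole family.} I would then kill higher-order terms degree by degree by a formal symplectomorphism, simultaneously for all $H\in\F$. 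The relations $\{H,H'\}=0$ force the degree-$(k+1)$ remainders to be cocycles for the operators $\ad_{f_i^{(2)}}$, and the homological equation at degree $k+1$ is solvable modulo the joint kernel $\bigcap_i\ker\ad_{f_i^{(2)}}$ on homogeneous polynomials. The key algebraic point is that this joint kernel is exactly the span of the monomials in $f_1^{(2)},\dots,f_n^{(2)}$, being the invariant ring of the maximal torus $(S^1)^e\times\R^h\times(S^1\times\R)^f$ generated by the $f_i^{(2)}$, as one checks from the eigenvalue list; hence the remainders are absorbed into a function of the $f_i$. Commutativity guarantees that one generating function works for the whole family at each step, and composing these steps produces a formal symplectomorphism $\Psi$ with $\Psi^\ast H\in\R[[f_1,\dots,f_n]]$ for all $H\in\F$.

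\emph{Step 4: convergence} is the heart of the matter and the main obstacle. The favourable feature is that the eigenvalues $\pm\lambda_j i,\pm\mu_j,\pm\alpha_j\pm\beta_j i$ of the linearizations form a fixed finite set, so there are no small divisors; the difficulty is rather that the singularity is maximally resonant, so the Poincar\'e convergence theorem does not apply verbatim. Following Vey, I would use the full force of having $n$ functionally independent commuting analytic integrals: pick $H_1,\dots,H_n\in\F$ with Hessians $f_1^{(2)},\dots,f_n^{(2)}$, so that the analytic momentum map $F=(H_1,\dots,H_n)$ is formally equivalent to $(f_1,\dots,f_n)$ up to a formal diffeomorphism of the base $\R^n$, and then establish convergence of $\Psi$ and of the base change by a majorant estimate for the homological equations (which, in the absence of small divisors, sum geometrically) --- or, in the modern formulation, by linearizing the $S^1$-actions coming from the elliptic blocks (Bochner) and the $\R$-actions coming from the hyperbolic blocks and gluing these along the focus-focus blocks, in the spirit of Zung's convergence theorem for Birkhoff normal forms of analytic integrable systems. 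The $C^\infty$ case would require an entirely different, cohomological argument with flat functions, which is why the references in the footnote are separate.

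Finally, the last assertion of the theorem is immediate: completeness provides $n$ functions in $\F$ whose differentials are independent at a generic nearby point, their Hessians at $x$ must span the $n$-dimensional Cartan subalgebra $\pazocal A_\F$, that is the span of $f_1^{(2)},\dots,f_n^{(2)}$, and since $f_i=f_i^{(2)}+O(3)$ the matrix $\left(\partial H_i/\partial f_j(0)\right)$ is precisely the invertible change-of-basis matrix between these two bases of $\pazocal A_\F$.
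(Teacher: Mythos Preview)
The paper does not prove this theorem; it is quoted as a classical result attributed to Vey (with smooth and equivariant variants attributed to Eliasson, Miranda, and Miranda--Zung in the footnote), and is used only as background for the discussion of non-degenerate singularities. There is therefore no ``paper's own proof'' to compare your proposal against.

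That said, your outline is a faithful sketch of the standard proof strategy in the literature. The reduction to rank zero via the symplectic slice, the use of Williamson's classification at the Hessian level, and the degree-by-degree formal normalization with the joint kernel $\bigcap_i\ker\ad_{f_i^{(2)}}$ identified as the ring of invariants are all correct in spirit. Your honesty about Step~4 is appropriate: convergence is indeed where the entire content lies, and what you have written there is a signpost to the two known arguments (Vey's majorant estimates, or the torus-action linearization \`a la Zung) rather than a proof. One point in Step~3 deserves a little more care: the assertion that ``one generating function works for the whole family at each step'' is true, but the mechanism is that you normalize a single regular element $H_1\in\F$ first and then show that commutativity forces every other $H\in\F$ to be already normalized, rather than solving a homological equation for all of $\F$ simultaneously. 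Your final paragraph on the nondegeneracy of $\bigl(\partial H_i/\partial f_j(0)\bigr)$ is correct.
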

The geometric meaning of Theorem \ref{EliassonThm} is the following. Near a non-degenerate singular point $x$, the singular Lagrangian fibration $\{\F = \const\}$ is locally symplectomorphic to a product of the following standard fibrations:
 \begin{enumerate}
 \item elliptic fibration which is given by the function $p^2 + q^2$ in the neighbourhood of the origin in $(\R^2, \diff p \wedge \diff q)$;
  \item hyperbolic fibration which is given by the function $pq$ in the neighbourhood of the origin in $(\R^2, \diff p \wedge \diff q)$;
       \item focus-focus fibration which is given by the commuting functions  $p_1q_1 + p_2q_2, p_1q_2 - q_1p_2$ in the neighbourhood of the origin in  $(\R^4, \diff p_1 \wedge \diff q_1 + \diff p_2 \wedge \diff q_2)$;
        \item non-singular fibration which is given by the function $p$ in the neighbourhood of the origin in $(\R^2, \diff p \wedge \diff q)$.
 \end{enumerate}
The dynamics in the neighborhood of a non-degenerate singular point can also be easily described. In particular, for a generic Hamiltonian $H \in \F$, the qualitative picture of the dynamics of $\mathrm{X}_H$ in the neighborhood of a non-degenerate singular point is  determined by the rank and type of this point. 

\par\smallskip
In the complex case, we have the following.\begin{theorem}
	Let $\F$ be a holomorphic integrable system  and let $x$ be its non-degenerate singular point of rank $r$. Then there exist a Darboux chart $p_1, q_1,\dots, p_n, q_n$ centered at $x$  such that each $H \in \F$ can be written as
	$$
	H = H(f_1, \dots, f_n)
	$$
where 
\begin{align*}
f_i = \left[\begin{aligned}
&p_iq_i \quad &\mbox{for }  i \leq n-r,\\
&p_i \quad &\mbox{for } i > n - r.\\
\end{aligned}\right.
\end{align*}
\end{theorem}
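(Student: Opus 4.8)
The plan is to follow the proof of Theorem~\ref{EliassonThm} in the holomorphic category, after a preliminary reduction to the rank zero case. \emph{Reduction to rank zero.} Pick $H_{n-r+1},\dots,H_n\in\F$ whose Hamiltonian vector fields are linearly independent at $x$; since $\F$ is Poisson-commutative these fields commute, so the holomorphic Darboux--Carath\'{e}odory theorem provides a Darboux chart $p_1,q_1,\dots,p_n,q_n$ near $x$ in which $H_{n-r+i}=p_{n-r+i}$ for $1\le i\le r$. The submanifold $N=\{p_i=q_i=0\ \text{for}\ i>n-r\}$ is holomorphic symplectic of dimension $2(n-r)$, and symplectic reduction along the flows of $p_{n-r+1},\dots,p_n$ (which preserve $\F$) yields an integrable system $\F|_N$ for which $x$ is a \emph{rank zero} non-degenerate singular point, its linearization reproducing exactly the Cartan subalgebra $\pazocal A_\F\subset\sP(W^\bot/W,\omega)$ of Definition~\ref{nd}. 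Since $\F$ is recovered from $\F|_N$ together with the functions $p_{n-r+1},\dots,p_n$, it suffices to treat the case $r=0$, which we assume from now on: $x$ is a rank zero point and we must exhibit $f_1,\dots,f_n$ with $f_i=p_iq_i$.

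\emph{Linear model and what remains to prove.} Over $\Complex$ all Cartan subalgebras of $\sP(2n,\Complex)$ are conjugate; the diagonal one corresponds, under the identification of $\sP(2n,\Complex)$ with the space of quadratic forms, to $\ls{p_1q_1,\dots,p_nq_n}$. Conjugating by a linear symplectomorphism we may assume $\pazocal A_\F=\ls{p_1q_1,\dots,p_nq_n}$, and then choose $H_1,\dots,H_n\in\F$ with $\diff H_i(x)=0$ whose quadratic parts are $p_1q_1,\dots,p_nq_n$ respectively. Granting that some holomorphic symplectomorphism fixing $x$ takes each $H_i$ to the form $g_i(p_1q_1,\dots,p_nq_n)$, one necessarily has $g_i(u)=u_i+O(|u|^2)$, so $(g_1,\dots,g_n)$ is a biholomorphism and we may use $f_i:=p_iq_i$ as coordinate functions. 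Finally, each $H\in\F$ Poisson-commutes with $H_1,\dots,H_n$, whose generic common level sets are Lagrangian, hence $H$ is locally a holomorphic function of $H_1,\dots,H_n$ --- the standard fact that a first integral of a Liouville-integrable system is a function of the integrals, valid here by analytic continuation from the regular locus --- and therefore of $f_1,\dots,f_n$, which is the assertion.

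\emph{The normalization, and the main obstacle.} It remains to bring the $H_i$ simultaneously to the form $g_i(p_1q_1,\dots,p_nq_n)$. A \emph{formal} normalizing symplectomorphism is obtained by a Birkhoff-type induction on degree: the centralizer of $\ls{p_1q_1,\dots,p_nq_n}$ in the Lie algebra of formal Hamiltonians consists exactly of the formal series in $p_1q_1,\dots,p_nq_n$ (a monomial $p^aq^b$ commutes with every $p_jq_j$ iff $a_j=b_j$ for all $j$), so one can delete all non-normalized terms of $H_1$, then of $H_2$, and so on, the identities $\{H_i,H_j\}=0$ ensuring that later corrections do not spoil earlier ones. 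The genuine difficulty --- and the step I expect to be the crux --- is the \emph{convergence} of this transformation. Here one can adapt Vey's convergence argument behind Theorem~\ref{EliassonThm} essentially verbatim; if anything it simplifies, since no real forms intervene and the single complex block $pq$ plays the role that the focus--focus block plays in the real case. Alternatively, one may realify: regarding $(M,\omega)$ as a $4n$-dimensional real analytic symplectic manifold with form $\Ree{\omega}$ and passing to the real integrable system generated by $\{\Ree{H},\Imm{H}:H\in\F\}$, the point $x$ becomes a real non-degenerate singularity of type $(0,0,n)$ (the complex rank zero model on $\Complex^{2n}$, whose leaves are common level sets of $p_1q_1,\dots,p_nq_n$, realifies blockwise to $n$ focus--focus fibrations), so Theorem~\ref{EliassonThm} applies directly; the delicate residual point is then to arrange that the real Darboux coordinates it produces are the real and imaginary parts of \emph{holomorphic} Darboux coordinates, which should follow from the near-uniqueness of the normal form together with rigidity of the complex structure. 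I would attempt the realification route first for brevity, but expect the direct adaptation of Vey's argument to be the more robust option.
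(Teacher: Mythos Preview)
The paper does not prove this theorem. It is stated in Section~\ref{nsis} as background, immediately after Theorem~\ref{EliassonThm} (Vey's real-analytic normal form, cited to \cite{vey}), with no proof or reference beyond the implicit one to Vey; the paper then moves on to the Poisson setting and Lemma~\ref{NDC}. So there is no ``paper's own proof'' to compare your attempt against: the result is simply quoted as the complex counterpart of Vey's theorem.

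As for your sketch itself: the reduction to rank zero via the holomorphic Darboux--Carath\'eodory theorem is standard and fine, as is the use of conjugacy of Cartan subalgebras over~$\Complex$ to set up the linear model. You correctly identify convergence of the Birkhoff normalization as the only substantive issue. Of your two suggested routes, the direct adaptation of Vey's argument is the sound one and is indeed how this result is usually obtained. The realification route is more problematic than you indicate: Theorem~\ref{EliassonThm} applied to the realified system gives real-analytic Darboux coordinates with no a~priori compatibility with the original complex structure, and ``near-uniqueness of the normal form together with rigidity of the complex structure'' is not a statement one can simply invoke --- recovering holomorphic Darboux coordinates from a real focus--focus normal form would itself require essentially the same analytic work you are trying to avoid. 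If you were writing this up, you should commit to the direct holomorphic version of Vey's proof rather than the realification detour.
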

Now, if $M$ is a Poisson manifold, and $x \in M$, then there exists a unique symplectic leaf $O \subset M$ passing through $x$. This allows to transfer all definitions and statements of this section to Poisson manifolds. \par
The following lemma is useful for proving non-degeneracy in the Poisson setting.
\begin{lemma}\label{NDC}
Let $M$ be a Poisson manifold, and let $O \subset M$ be a generic symplectic leaf. Further, assume that $\F$ is a subspace of $\pazocal O(M)$ such that $\F\mid_O$ is an integrable system. Let $x \in O$ be a point of rank $k$ for $\F\mid_O$, and let
$$
V_x = \{ H \in \F \mid \mathrm{X}_H(x) = 0\}.
$$
Assume that there exist linearly independent $\phi_1, \dots, \phi_k \in V_x^*$ and non-zero $\eps_1^\pm, \dots, \eps_{k}^\pm \in \T^*_x M$ such that
$$
A_H^* \eps_i^\pm = \pm \phi_i(H)\eps_i^\pm
$$
for each $H \in V_x$.
Then:
\begin{longenum}
\item The space $W^\bot / W$ is spanned by $w_1^\pm, \dots, w_k^\pm$ such that
$$
\pazocal A_H w_i^\pm = \pm \phi_i(H)w_i^\pm
$$
for each $H \in V_x$.
\item The singular point $x$ is non-degenerate.
\item  In the real case, the type of $x$ is $(e,h,f)$ where $e$ is the number of pure imaginary $\phi_i$'s, $h$ is the number of real $\phi_i$'s, and $f$ is the number of pairs of complex conjugate $\phi_i$'s.
\end{longenum}

\end{lemma}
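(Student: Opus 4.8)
The plan is to transport the spectral data from $\T_x^{*}M$ down to the reduced symplectic space $W^{\bot}/W$, where all three assertions amount to the single statement that $\pazocal A_{\F}$ is simultaneously diagonalised by a Darboux basis whose eigenvalue functionals are $\pm\phi_1,\dots,\pm\phi_k$. First I would pass from $M$ to the symplectic leaf $O$: since each $\mathrm X_H$ with $H\in V_x$ is tangent to $O$ and vanishes at $x$, the operator $A_H$ preserves $\T_x O$ and $A_H^{*}$ preserves the conormal $(\T_x O)^{\circ}$; as the $\eps_i^{\pm}$ are eigenvectors for the nonzero functionals $\pm\phi_i$ (and hence are not conormal to $O$), their images in $(\T_x O)^{*}=\T_x^{*}M/(\T_x O)^{\circ}$ are eigencovectors of the transpose of $A_H|_{\T_x O}$ with the same eigenvalues. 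So there is no loss in assuming $\eps_i^{\pm}\in\T_x^{*}O$ and $A_H\in\sP(\T_x O,\omega)$.

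\textbf{Reduction to $W^{\bot}/W$.} Poisson-commutativity of $\F$ gives $[\mathrm X_H,\mathrm X_G]=\mathrm X_{\{H,G\}}=0$ for $G\in\F$, so $A_H$ annihilates $W=\{\mathrm X_G(x)\mid G\in\F\}$; being infinitesimally symplectic it then preserves $W^{\bot}$ and descends to $\pazocal A_H\in\sP(W^{\bot}/W,\omega)$, with $\pazocal A_{\F}=\ls{\pazocal A_H\mid H\in V_x}$. Identifying $\T_x O$ with $\T_x^{*}O$ via $v\mapsto\iota_v\omega$ carries $A_H$ to $-A_H^{*}$, so the vectors $w_i^{+}:=\omega^{-1}\eps_i^{-}$ and $w_i^{-}:=\omega^{-1}\eps_i^{+}$ are nonzero and satisfy $A_H w_i^{\pm}=\pm\phi_i(H)\,w_i^{\pm}$ for all $H\in V_x$. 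Next fix $H_0\in V_x$ for which the $2k$ scalars $\pm\phi_1(H_0),\dots,\pm\phi_k(H_0)$ are pairwise distinct and nonzero; this is possible because these are finitely many linear conditions on $H_0$, none of them identically satisfied on $V_x$ (here $\phi_i\neq 0$ and $\phi_i\neq\pm\phi_j$ for $i\neq j$, which is exactly where linear independence of the $\phi_i$ enters). Then $A_{H_0}$ has $2k$ distinct nonzero eigenvalues on $\ls{w_i^{\pm}}$; since $W\subset\Ker A_{H_0}$ this forces $\ls{w_i^{\pm}}\cap W=0$, and since eigenspaces of a symplectic operator attached to eigenvalues $a,b$ with $a+b\neq 0$ are $\omega$-orthogonal, we obtain $w_i^{\pm}\in W^{\bot}$, $\omega(w_i^{\pm},w_j^{\pm})=0$, and $\omega(w_i^{+},w_j^{-})=0$ for $i\neq j$. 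Hence the classes $[w_i^{\pm}]$ are $2k$ linearly independent vectors of $W^{\bot}/W$ with (after rescaling) standard symplectic pairing, and as $\dim(W^{\bot}/W)=2k$ they form a Darboux basis; this is assertion (a).

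\textbf{Conclusion.} In this basis $\pazocal A_H=\mathrm{diag}(\phi_1(H),\dots,\phi_k(H),-\phi_1(H),\dots,-\phi_k(H))$ for $H\in V_x$, and linear independence of $\phi_1,\dots,\phi_k$ makes $H\mapsto(\phi_1(H),\dots,\phi_k(H))$ surjective; therefore $\pazocal A_{\F}$ is the full algebra of diagonal elements in this basis, i.e. the standard Cartan subalgebra of $\sP(W^{\bot}/W,\omega)$, so $x$ is non-degenerate in the sense of Definition \ref{nd}, proving (b). For (c) one reads the Williamson type off the eigenvalues $\pm\phi_i(A)$ of a regular $A\in\pazocal A_{\F}$: a purely imaginary $\phi_i$ yields an elliptic block $\pm\lambda i$, a real $\phi_i$ a hyperbolic block $\pm\mu$, and a conjugate pair $\phi_i,\overline{\phi_i}$ a focus-focus block $\pm\alpha\pm\beta i$ (in the real setting the $\eps_i^{\pm}$, hence the $\phi_i$, are stable under conjugation, so non-real $\phi_i$'s occur in conjugate pairs), giving type $(e,h,f)$ as claimed.

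\textbf{Main obstacle.} The step I expect to be the delicate one is establishing that $\dim(W^{\bot}/W)=2k$ and that the $w_i^{\pm}$ genuinely lie in $W^{\bot}$ and remain independent modulo $W$: this is what upgrades ``$\pazocal A_{\F}$ is a toral subalgebra'' to ``$\pazocal A_{\F}$ is a Cartan subalgebra'', and it rests on controlling $\Ker A_{H_0}$ through the semisimplicity of $A_{H_0}$ on $\ls{w_i^{\pm}}$ and the orthogonality of distinct eigenspaces. The passage $M\rightsquigarrow O$ is routine but must be spelled out, since a priori the $A_H$ and the $\eps_i^{\pm}$ are given on all of $M$ rather than on the leaf.
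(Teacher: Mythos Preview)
Your argument is correct and follows essentially the same route as the paper's proof. The paper phrases the passage from $\T_x^*M$ to $\T_xO$ via the Poisson map $P\colon \T_x^*M\to\T_xO$ and the commutative square
\[
\begin{CD}
\T_x^*M @>A_H^*>> \T_x^*M\\
@VVPV @VVPV\\
\T_xO @>A_H>> \T_xO
\end{CD}
\]
(citing \cite{SBSn}), then sets $e_i^\pm=P\eps_i^\pm$; you unpack this same map as ``quotient by the conormal, then $\omega^{-1}$''. The subsequent steps---$e_i^\pm\in W^\bot$ because $A_H$ kills $W$ and is infinitesimally symplectic, $e_i^\pm\notin W$ because they carry nonzero eigenvalues, dimension count, and surjectivity of $H\mapsto(\phi_1(H),\dots,\phi_k(H))$---are the same in both, though you spell out the linear-independence and Cartan-ness more explicitly than the paper's terse ``by dimension argument''.

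One point to tighten: your sentence ``$A_H^*$ preserves the conormal $(\T_xO)^\circ$; as the $\eps_i^\pm$ are eigenvectors for the nonzero functionals $\pm\phi_i$ (and hence are not conormal to $O$)'' does not quite follow as written. Preserving the conormal is compatible with $\eps_i^\pm$ lying in it; what you need is that $A_H^*$ \emph{annihilates} the conormal. This is exactly where genericity of the leaf enters: at a regular point the conormal is spanned by differentials of Casimirs, and $A_H^*\,\diff C=\diff\{H,C\}=0$. The paper invokes \cite{SBSn} for precisely this fact (phrased as $A_H^*|_{\Ker P}=0$). You flag the $M\rightsquigarrow O$ passage as needing to be spelled out, and this is the one place where it actually does.
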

\begin{proof}
Assume that  $H \in V_x$.
Let $P \colon \T^*_x M \to \T_x O$ be the mapping defined by the Poisson tensor. Following \cite{SBSn}, we claim that the following diagram commutes:
\begin{align*}
       				 \begin{CD}
           			 	\T^*_x M @> A_H^* >> \T^*_x M\\
            				@VV P V  @VV P V \\
          				 \T_x O @> A_{H} >>\T_xO
       			 \end{CD}
   			 \end{align*}
Therefore, if we take $e_i^\pm = P\eps_i^\pm$, then
$$
A_H e_i^\pm = \pm \phi_i(H)e_i^\pm.
$$
Let us show that $e_i^\pm \neq 0$. Indeed, if  $e_i^\pm = 0$, then $\eps_i^\pm \in \Ker P$. However, from regularity of the symplectic leaf $O$, we conclude that $A_H^* \mid_{\Ker P} = 0$ (see \cite{SBSn}), so $\eps_i^\pm \notin \Ker P$.\par
Now, note that since all operators $A_H$ vanish on the space $W$, we have $A_H(\T_xO) \subset W^\bot$, so $e_i^\pm \in W^\bot$ and $e_i^\pm \notin W$. Let $\pi$ be the projection $W^\bot \to W^\bot / W$. If we set $w_i^\pm = \pi(e_i^\pm)$, then $w_i^\pm \neq 0$, and
$$
\pazocal A_H w_i^\pm = \pm \phi_i(H)w_i^\pm.
$$
By dimension argument, $w_i^\pm$ span $W^\bot/W$, and operators $\pazocal A_H$ span a Cartan subalgebra in $\sP(W^{\bot} / W, \omega)$, q.e.d.
\end{proof}
\nopagebreak
\subsection{Nodal curves and non-degenerate singularities}\label{ncns}
The space $\mathcal L_m^J (\gl(n,\Complex))$ carries an $m+1$-dimensional family of compatible Poisson structures, and the flows \eqref{loopI} are Hamiltonian with respect to each of these structures \cite{ReimanRev}. Each of these Poisson structures has rank $mn(n-1)$ almost everywhere. At some points the rank drops, however it is not difficult to show that for each point $L \in \mathcal L_m^J (\gl(n,\Complex))$, there exists a Poisson structure which has a maximal rank at this point. Therefore, for each point $L \in \mathcal L_m^J (\gl(n,\Complex))$, we can find a symplectic leaf of dimension $mn(n-1)$ passing through the point $L$. In what follows, we consider only such symplectic leaves.
\par\smallskip 
Let $\pazocal F = \{H_\psi\} $ be the integrable system constructed in the introduction. The following statement follows from Theorem \ref{thm1}.
\begin{theorem}\label{rkFormula}
Assume that $C$ is a nodal curve, and let $L \in \pazocal S_C$. Let also $O$ be a maximal dimension symplectic leaf passing through the point $L$. 
Then the rank of the point $L$ for the system $\F\mid_{O}$ is equal to $$\rank L =  \frac{mn(n-1)}{2}  - |K(L)|,$$ so that
$$
\corank L = |K(L)|.
$$
In particular, $L$ is singular for the system $\F\mid_{O}$ if and only if $K(L) \neq \emptyset$.

\end{theorem}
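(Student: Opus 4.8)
The plan is to deduce the rank formula directly from Theorem \ref{thm1}, so that the whole statement reduces to the dimension count of the stratum $\pazocal S_C^{K(L)}$ already carried out there.

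First I would record the elementary reduction to Hamiltonian vector fields. On the symplectic leaf $(O,\omega_O)$, the rank of $L$ for $\F\mid_O$, by definition $\dim \diff(\F\mid_O)(L)$, equals the dimension of the subspace $\{\mathrm{X}_H(L) : H \in \F\} \subseteq \T_L O$: indeed $\omega_O$ is non-degenerate, every Hamiltonian flow $\mathrm{X}_H$ with $H \in \F$ is tangent to the leaf $O$ and there restricts to the Hamiltonian field of $H\mid_O$ (Casimirs contribute the zero field and may be ignored), so $\dim\{\mathrm{X}_{H\mid_O}(L)\} = \dim \diff(\F\mid_O)(L)$. Since we only work with leaves of maximal dimension $mn(n-1)$, the relevant half-dimension (the "$n$" of Section \ref{nsis}) is $\tfrac{1}{2}mn(n-1)$, and accordingly $\corank L = \tfrac{1}{2}mn(n-1) - \rank L$.

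Next I would identify the span of these Hamiltonian fields with the tangent space to the stratum through $L$. The fields $\mathrm{X}_{H_\psi}$ with $\psi \in \Complex[\mu,\lambda^{-1}]$ are precisely the flows \eqref{loopI}, with $\phi = \partial\psi/\partial\mu$; hence $\{\mathrm{X}_H(L) : H \in \F\}$ is exactly the span of the flows \eqref{loopI} at $L$. By part (4) of Theorem \ref{thm1}, applied with $K = K(L)$, this span is all of $\T_L \pazocal S_{C}^{K(L)}$ (in particular this tangent space sits inside $\T_L O$, since the flows preserve symplectic leaves). By part (1) of Theorem \ref{thm1},
$$
\dim \pazocal S_{C}^{K(L)} = g(X'_{K(L)}) = \tfrac{1}{2}mn(n-1) - |K(L)|.
$$
Combining the two displays gives $\rank L = \tfrac{1}{2}mn(n-1) - |K(L)|$, whence $\corank L = |K(L)|$, and $L$ is singular for $\F\mid_O$ precisely when $|K(L)| > 0$, i.e. $K(L) \neq \emptyset$.

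There is no genuine obstacle here: all the substance is in Theorem \ref{thm1}, and what remains is the routine, but worth-stating, identification of the rank on $O$ with the dimension of the span of the Hamiltonian vector fields of $\F$, together with the observation that the prescribed choice of $O$ as a leaf of maximal dimension $mn(n-1)$ is what makes $\tfrac{1}{2}mn(n-1)$ the correct normalization for the corank.
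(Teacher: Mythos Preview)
Your argument is correct and is exactly the intended one: the paper does not spell out a proof of Theorem~\ref{rkFormula} but simply states that it follows from Theorem~\ref{thm1}, and you have supplied precisely the routine reduction---identifying $\dim \diff(\F\mid_O)(L)$ with the span of the Hamiltonian vector fields \eqref{loopI} at $L$ via the symplectic form on $O$, and then invoking parts (1) and (4) of Theorem~\ref{thm1}. There is nothing to add.
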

\begin{consequence}\label{singsing1}
If $C$ is a nodal curve, then 
$$
\rank  \pazocal S_C = \min\nolimits_{L \in  \pazocal S_C} \rank L =  \frac{mn(n-1)}{2}  - |\Sing (C)|,
$$
i.e. the corank is equal to the number of nodes.
\end{consequence}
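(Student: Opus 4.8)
The plan is to read the corollary off Theorems \ref{rkFormula} and \ref{thm1}; no new idea is required. Since $\rank \pazocal S_C$ is by definition the minimum of $\rank L$ over $L \in \pazocal S_C$, I must locate where the rank function on the fibre attains its minimum. Theorem \ref{rkFormula} gives $\rank L = \tfrac{1}{2}mn(n-1) - |K(L)|$ with $K(L) \subseteq \Sing C$, hence $\rank L \geq \tfrac{1}{2}mn(n-1) - |\Sing C|$ for every $L \in \pazocal S_C$, and equality holds exactly for those $L$ with $K(L) = \Sing C$, that is, for $L$ lying in the deepest stratum $\pazocal S_C^{\Sing C}$. So the corollary reduces to the single assertion that $\pazocal S_C^{\Sing C}$ is non-empty.

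To establish this I would apply Theorem \ref{thm1} with $K = \Sing C$: the map $\Phi$ there identifies $\pazocal S_C^{\Sing C}$ with $\bigsqcup_{d \in \Delta_{\Sing C}}\bigl(\Pic_d(X'_{\Sing C}) \setminus \Upsilon_d\bigr)$. Each $\Pic_d(X'_{\Sing C})$ is a torsor over the generalized Jacobian of $X'_{\Sing C}$ and hence non-empty, and $\Upsilon_d$ has positive codimension, so each difference $\Pic_d(X'_{\Sing C}) \setminus \Upsilon_d$ is non-empty; thus $\pazocal S_C^{\Sing C} \neq \emptyset$ as soon as the index set $\Delta_{\Sing C}$ is non-empty. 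But $\Delta_{\Sing C}$ is the set of uniform multidegrees on $X'_{\Sing C}$, i.e. the set of integer points of the polytope \eqref{polytope} associated to that curve, and this set is always non-empty, as recorded in the discussion just before \eqref{polytope}. Combining the lower bound above with the fact that it is attained then yields $\rank \pazocal S_C = \tfrac{1}{2}mn(n-1) - |\Sing C|$, equivalently $\corank \pazocal S_C = |\Sing C|$.

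I do not expect a real obstacle: the statement is bookkeeping on top of Theorems \ref{rkFormula} and \ref{thm1}. The only point worth isolating is the non-emptiness of $\Delta_{\Sing C}$ --- the existence of a uniform multidegree on the (possibly reducible) curve $X'_{\Sing C}$ --- which is a combinatorial property of the polytope \eqref{polytope}; since the paper already asserts this property in the paragraph preceding the definition of uniform multidegree, it may simply be quoted here rather than reproved.
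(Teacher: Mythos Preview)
Your proposal is correct and matches the paper's intended argument: the paper states Corollary \ref{singsing1} without proof, treating it as an immediate consequence of Theorem \ref{rkFormula}, and you have correctly filled in the one point that deserves mention --- the non-emptiness of the deepest stratum $\pazocal S_C^{\Sing C}$ --- by invoking Theorem \ref{thm1} together with the stated non-emptiness of the set of uniform multidegrees.
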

\begin{consequence}\label{singsing}
If $C$ is a nodal curve, then $\pazocal S_C$ is singular, i.e. it contains at least one singular point.
\end{consequence}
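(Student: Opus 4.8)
The plan is to read the statement off directly from Theorem~\ref{rkFormula}, once the appropriate stratum of $\pazocal S_C$ has been shown to be non-empty. Throughout we take $C \in \mathcal C_{spec}$, so that $\pazocal S_C \neq \emptyset$, and we assume $\Sing C \neq \emptyset$, i.e. $C$ is genuinely singular (for non-singular $C$ the fiber $\pazocal S_C$ is a smooth Lagrangian fiber, and the statement does not apply). Recall from Theorem~\ref{rkFormula} that a point $L \in \pazocal S_C$ is singular for $\F\mid_O$, where $O$ is a maximal-dimensional symplectic leaf through $L$, precisely when $K(L) \neq \emptyset$, and that in that case $\corank L = |K(L)|$. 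Hence it suffices to exhibit a single $L \in \pazocal S_C$ with $K(L) \neq \emptyset$.

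To do this I would take $K = \Sing C$ in the stratification \eqref{scstrat} and apply Theorem~\ref{thm1} to the stratum $\pazocal S_C^K$. Part~2 of that theorem provides a biholomorphism $\Phi \colon \pazocal S_C^K \to \bigsqcup_{d \in \Delta_K}\left(\Pic_d(X'_K) \setminus \Upsilon_d\right)$, where $\Delta_K$ is the set of uniform multidegrees on $X'_K$ and each $\Upsilon_d$ has positive codimension in $\Pic_d(X'_K)$. The set $\Delta_K$ is non-empty — it is precisely the set of lattice points of the polytope \eqref{polytope} — and for each $d \in \Delta_K$ the complement $\Pic_d(X'_K) \setminus \Upsilon_d$ is non-empty, since $\Upsilon_d$ is a proper subvariety. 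Therefore the target of $\Phi$ is non-empty, and so $\pazocal S_C^K \neq \emptyset$.

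Now I would pick any $L \in \pazocal S_C^K$. By the definition of the strata, $K(L) = K = \Sing C \neq \emptyset$, so Theorem~\ref{rkFormula} gives $\corank L = |\Sing C| \geq 1$; thus $L$ is a singular point of $\F\mid_O$. Since $L \in \pazocal S_C$, the fiber $\pazocal S_C$ contains a singular point, i.e. it is a singular fiber, which is the assertion. One may equally well quote Corollary~\ref{singsing1}: once $\pazocal S_C^K \neq \emptyset$ is known, $\rank \pazocal S_C = \frac{mn(n-1)}{2} - |\Sing C| < \frac{mn(n-1)}{2}$, which is half the dimension of the symplectic leaf, so $\pazocal S_C$ must contain a point of rank below the regular value.

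The only step carrying any content is the non-emptiness of $\pazocal S_C^K$ for $K = \Sing C$ — geometrically, the existence of a matrix polynomial $L \in \pazocal S_C$ over whose every node the kernel of $L(\lambda) - \mu\E$ is two-dimensional — and this is supplied in full by Theorem~\ref{thm1}. I therefore expect no genuine obstacle: the corollary is essentially a formal consequence of Theorems~\ref{thm1} and~\ref{rkFormula}.
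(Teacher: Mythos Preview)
Your proposal is correct and matches the paper's intended argument: the paper states Corollaries~\ref{singsing1} and~\ref{singsing} without explicit proof, as immediate consequences of Theorems~\ref{thm1} and~\ref{rkFormula}, and your write-up simply makes explicit the one non-trivial ingredient, namely that Theorem~\ref{thm1} (via non-emptiness of $\Delta_K$ and positive codimension of $\Upsilon_d$) guarantees $\pazocal S_C^{\Sing C}\neq\emptyset$.
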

 
 The \textit{bifurcation diagram} $\mathcal B$ is the set of curves $C \in \mathcal C_{spec}$ such that $\pazocal S_C$ is singular. The \textit{discriminant} of the spectral curve $\mathcal D$ is the set of singular curves $C \in \mathcal C_{spec}$. Since for non-singular $C$ the fiber $\pazocal S_C$ is also non-singular, we have $\mathcal B \subset \mathcal D$. Since nodal curves are dense in $\mathcal D$, Corollary \ref{singsing} implies that we actually have $\overline{\mathcal B} = \mathcal D$ (as it is not difficult to see, $\mathcal D$ is closed). Apparently, $\mathcal B = \mathcal D$, i.e. $\pazocal S_C$ is singular if and only if $C$ is singular. For $m=1$, this is proved in \cite{Brailov}. For ``restricted systems'' discussed at the end of the introduction,  this result is not true \cite{Konyaev}. In particular, if $n$ is odd and we restrict $\F$ to  $\mathcal L_m^J(\so(n))$, then the spectral curve is \textit{always} singular. \par
The following theorem states that if the spectral curve $C$ is nodal, then all singular points on $\pazocal S_C$ are non-degenerate.

\begin{theorem}\label{NDT}
Assume that $C$ is a nodal curve, and that $O$ is a generic symplectic leaf passing through the point $L$.  Assume that $L \in \pazocal S_C$ is singular for the system $\F\mid_{O}$. Then 

\begin{enumerate} 
\item The singular point $L$ is non-degenerate.
\item In the real case, the type of $L$ is $(e,h,f)$ where $e$ is the number of acnodes in $K(L)$, $h$ is the number of crunodes in $K(L)$, and $f$ is one half the number of nodes in $K(L)$ which do not lie in the real part of $C$.
\end{enumerate}

%
\end{theorem}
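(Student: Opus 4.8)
The plan is to verify the hypotheses of Lemma \ref{NDC}. Concretely, for each node $P_0=(\lambda_0,\mu_0)\in K(L)$ I will produce a pair of covectors $\eps_{P_0}^{\pm}\in\T^{*}_{L}\mathcal L_m^J(\gl(n,\Complex))$ and a linear functional $\phi_{P_0}\in V_L^{*}$ with $A_H^{*}\eps_{P_0}^{\pm}=\pm\phi_{P_0}(H)\,\eps_{P_0}^{\pm}$ for every $H\in V_L$, and then show that the $|K(L)|$ functionals $\phi_{P_0}$ are linearly independent. Since $|K(L)|=\corank L$ by Theorem \ref{rkFormula}, Lemma \ref{NDC} will then immediately give non-degeneracy, and its part (c) will reduce the type computation to deciding, for each $\phi_{P_0}$, whether it is real, purely imaginary, or occurs in a complex-conjugate pair.

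To build the covectors, let $P^{+},P^{-}$ be the two points of the normalization $X$ lying over $P_0$, and let $v^{\pm}\in\Complex^n$ be the eigenvector of $L(\lambda_0)$ obtained as the limit, along the branch through $P^{\pm}$, of the eigenvector bundle of $L$ on $X$; by item (a) of Proposition \ref{loc2} these are linearly independent and span $\Ker(L(\lambda_0)-\mu_0\E)$. Since $\mu_0$ is a semisimple eigenvalue of $L(\lambda_0)$ I can choose biorthogonal left eigenvectors $w^{\pm}$ with $w^{a}v^{b}=\delta^{ab}$, and I set $\eps_{P_0}^{\pm}(\dot L)=w^{\mp}\dot L(\lambda_0)v^{\pm}$ for $\dot L\in\T_{L}\mathcal L_m^J(\gl(n,\Complex))$, which are clearly nonzero (note $\lambda_0\neq\infty$, the points at infinity being smooth). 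For $H=H_\psi\in V_L$ I write $A^{0}(\lambda)=\phi(L(\lambda),\lambda^{-1})_{+}$ with $\phi=\partial\psi/\partial\mu$, so $\mathrm X_H(L)=[L,A^{0}]=0$, and the linearization of $\mathrm X_H$ at $L$ is $A_H\dot L=[\dot L,A^{0}]+[L,\delta A]$. Pairing with $w^{\mp}(\,\cdot\,)v^{\pm}$ at $\lambda_0$, the term $[L,\delta A]$ drops out because $v^{\pm},w^{\pm}$ are $\mu_0$-eigenvectors of $L(\lambda_0)$; for the term $[\dot L,A^{0}]$ one uses that near $\lambda_0$ the two nodal branches give simple eigenvalues $\mu^{\pm}(\lambda)$ of $L(\lambda)$ with eigenvectors tending to $v^{\pm}$, so $A^{0}(\lambda)$, commuting with $L(\lambda)$, preserves these lines, and letting $\lambda\to\lambda_0$ shows $A^{0}(\lambda_0)v^{\pm}=a^{\pm}v^{\pm}$ and $w^{\pm}A^{0}(\lambda_0)=a^{\pm}w^{\pm}$. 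This yields $A_H^{*}\eps_{P_0}^{\pm}=\pm(a^{+}-a^{-})\eps_{P_0}^{\pm}$, so I put $\phi_{P_0}(H_\psi)=a^{+}-a^{-}$, which is linear in $H$ because $v^{\pm}$ are independent of $\psi$.

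The scalar $a^{\pm}$ is the value at $P^{\pm}$ of the eigenvalue function $\nu_\psi$ on $X$ (the eigenvalue of $A^{0}(\lambda(P))$ on the eigenvector at $P$), which is meromorphic, holomorphic off $X_\infty$, and regular at the nodes outside $K(L)$ — this is precisely the function $\nu$ occurring in \eqref{heq}; hence $\phi_{P_0}(H_\psi)$ is the jump of $\nu_\psi$ across $P_0$. In the real case $L$ and $\psi$ are real, so $\nu_\psi(\tau P)=\overline{\nu_\psi(P)}$ for the antiholomorphic involution $\tau$ of $X$: if $P_0$ is a crunode then $\tau$ fixes $P^{+},P^{-}$, so $a^{\pm}\in\R$ and $\phi_{P_0}(H)\in\R$; if $P_0$ is an acnode then $\tau$ interchanges $P^{+},P^{-}$, so $a^{-}=\overline{a^{+}}$ and $\phi_{P_0}(H)\in i\R$; and if $P_0$ does not lie on the real part of $C$, then $\tau$ sends it to another node $P_0'\in K(L)$ with $\phi_{P_0'}=\overline{\phi_{P_0}}$. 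Granting the independence step below, Lemma \ref{NDC}(c) then gives exactly $e=\#\{\text{acnodes in }K(L)\}$, $h=\#\{\text{crunodes in }K(L)\}$ and $f=\tfrac12\,\#\{\text{nodes in }K(L)\text{ off the real part}\}$.

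The hard part will be the linear independence of the $\phi_{P_0}$, equivalently surjectivity of $H_\psi\mapsto(\phi_{P_0}(H_\psi))_{P_0\in K(L)}$ onto $\Complex^{|K(L)|}$. Here I would argue as follows: since flow \eqref{loopI} preserves the stratum $\pazocal S_{C}^{K(L)}$, its value at $L$ lies in $\T_L\pazocal S_{C}^{K(L)}$, and Theorem \ref{thm1} (with $\Phi$ biholomorphic) identifies it with the covector $\omega\mapsto\sum_i\Res_{\infty_i}\phi\,\omega$ on the regular differentials of $X'_{K(L)}$; thus $H_\psi\in V_L$ iff $\sum_i\Res_{\infty_i}\phi\,\omega=0$ for all such $\omega$, while at the same time $\sum_i\Res_{\infty_i}\nu_\psi\,\omega=\sum_i\Res_{\infty_i}\phi\,\omega$ because $\nu_\psi-\phi$ vanishes at $X_\infty$. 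So it suffices to realize arbitrary prescribed jumps at the nodes of $K(L)$ by an eigenvalue function which is regular at the remaining nodes, holomorphic off $X_\infty$, and whose principal parts at $X_\infty$ annihilate the regular differentials of $X'_{K(L)}$. This is a Mittag--Leffler type existence problem on $X$ relative to the singularizations $X_{K(L)}$ and $X'_{K(L)}$, which I expect to be the main obstacle; I would resolve it using Riemann's inequality together with the description of regular differentials and effective divisors on nodal curves from Section \ref{nc}, and the constraint $C\in\mathcal C_{spec}$, which — exactly as in the proof of Proposition \ref{completeness} — guarantees that the principal parts at $X_\infty$ coming from polynomials $\phi\in\Complex[\mu,\lambda^{-1}]$ are rich enough. (As in Section \ref{proof1}, one may first impose the genericity assumption \eqref{genas} and pass to the general case by a limiting argument.) Once independence is established, Lemma \ref{NDC} applies and delivers both assertions of the theorem.
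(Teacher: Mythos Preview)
Your approach coincides with the paper's: the covectors $\eps_{P_0}^\pm(\dot L)=w^{\mp}\dot L(\lambda_0)v^{\pm}$ are exactly the $R(h(Q_i^\pm)\otimes\xi(Q_i^\mp),a_i)$ of Section~\ref{ncnsproof}, the eigenvalue computation via $[L,\delta A]$ dropping out and $A^0(\lambda_0)v^\pm=a^\pm v^\pm$ is the same, and the type analysis through the antiholomorphic involution $\tau$ is identical.

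Where you diverge is the independence step, which you overcomplicate. You frame it as a Mittag--Leffler existence problem for the eigenvalue function $\nu_\psi$ and invoke Riemann's inequality and effective divisors; the paper instead converts the jump into a residue at infinity. Pick a differential $\omega_{P_0}$ on $X$ with $\Res_{P^\pm}\omega_{P_0}=\pm1$, simple poles at $\mathrm{supp}(\Sigma')$ with the usual zero-sum conditions, and holomorphic elsewhere. Applying the residue theorem to $\nu_\psi\omega_{P_0}$, using that $\nu_\psi\in\pazocal M(X,\Sigma)$ and that $\nu_\psi-\phi$ vanishes at $X_\infty$ (the observation you already made), gives directly
\[
\phi_{P_0}(H_\psi)=\nu_\psi(P^+)-\nu_\psi(P^-)=-\sum_{j}\Res_{\infty_j}\phi\,\omega_{P_0}=-\langle\phi,\omega_{P_0}\rangle_\infty.
\]
The differentials $\omega_{P_0}$, $P_0\in K(L)$, span $\Omega^1(X,\Sigma'')/\Omega^1(X,\Sigma')$ where $\Sigma''$ adjoins all the node-pairs of $K(L)$ to $\Sigma'$. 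Now the argument of Proposition~\ref{completeness} applies verbatim with $\Sigma''$ in place of $\Sigma'$ to show the right radical of $\langle\,,\,\rangle_\infty$ on $\Omega^1(X,\Sigma'')$ is trivial; hence the induced pairing $\Omega^1(X,\Sigma')^\bot\times\bigl(\Omega^1(X,\Sigma'')/\Omega^1(X,\Sigma')\bigr)\to\Complex$ is non-degenerate, and your functionals are independent. No Mittag--Leffler construction, no Riemann inequality, no divisor analysis is needed --- you had already isolated the right ingredient (Proposition~\ref{completeness}), just not the residue identity that feeds into it.
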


As an example, consider the case $m=1$ and $n=3$ already discusses in Section \ref{ex1}. The corresponding spectral curve is a cubic. Table \ref{table2} lists all possible types of real nodal cubics and corresponding singularities. The column ``rank'' shows the minimal rank of singularities on $\pazocal S_C$. The column ``type'' shows the type of these minimal rank singular points. Note that the case ``degenerate quadric + line'' is only possible if $J$ has two complex eigenvalues, and the case ``three straight lines'' is only possible when all eigenvalues of $J$ are real.
\begin{table}
\centerline{\begin{tabular}{c|c|c|c}
Curve & Example & Rank & Type\\
\Xhline{2\arrayrulewidth}
\begin{tabular}{@{}c@{}}Irreducible cubic  \\  with an acnode \end{tabular}& $\vphantom{\displaystyle\int}\lambda^2(\mu - 3) - (\mu -1 )(\mu -2)^2 = 0$ & 2 & (1,0,0) \\
\hline 
\begin{tabular}{@{}c@{}}Irreducible cubic  \\  with a crunode \end{tabular}& $\vphantom{\displaystyle\int}\lambda^2(\mu - 3) + (\mu -1 )(\mu -2)^2 = 0$ & 2 & (0,1,0) \\
\hline 
\begin{tabular}{@{}c@{}}Quadric + line with two \\  real points in common\end{tabular} & $\vphantom{\displaystyle\int}(\lambda^2 + \mu^2 - 1)(\lambda - \mu) = 0$ & 1 & (0,2,0) \\
\hline 
\begin{tabular}{@{}c@{}}Quadric + line with no \\   real points in common\end{tabular} & $\vphantom{\displaystyle\int}(\lambda^2 + \mu^2 - 1)(\lambda - \mu + 2) = 0$ & 1 & (0,0,1) \\
\hline
\begin{tabular}{@{}c@{}}Degenerate quadric + line \\ in general position\end{tabular}  & $\vphantom{\displaystyle\int}(\lambda^2 + \mu^2)(\lambda - \mu + 2) = 0$ & 0 & (1,0,1) \\
\hline
\begin{tabular}{@{}c@{}}Three straight lines \\ in general position\end{tabular}  & $\vphantom{\displaystyle\int}(\lambda - \mu)(\lambda - 2\mu)(\lambda - 3\mu)  = 0$ & 0 & (0,3,0)
\\
\end{tabular}
 }
 \caption{Real nodal cubics and corresponding singularities of the $\gl(3)$ system}\label{table2}
\end{table}
\par\smallskip
Apparently, the following converse result to Theorem \ref{NDT} is true: if $C$ is not nodal, then there exists at least one degenerate singular point in $\pazocal S_C$. We can prove this for some classes of curves, however this is beyond the scope of the present paper. 
 We note that if the curve $C$ is not nodal, then \textit{some} singular points in $\pazocal S_C$ may still be non-degenerate.\par\pagebreak[3]
Now, let us prove Corollary \ref{corIrCom}. Consider the set
$
\pazocal S_{C}^{(p)}
$
which consists of points of corank at least $r$. By Theorem \nolinebreak \ref{rkFormula}, we have
$$
\pazocal S_{C}^{(p)} = \bigsqcup_{|K| \geq p} \pazocal S_{C}^K.
$$
\begin{consequence}\label{rkcor}
Assume that $C \in \mathcal C_{spec}$ is a nodal curve. Then:

\begin{enumerate}
\item  The dimension of $\pazocal S_{C}^{(p)}$ equals $\frac{1}{2}mn(n-1) - p$.
\item If $p_2 > p_1$, then $\pazocal S_{C}^{(p_2)}$ lies in the closure of $\pazocal S_{C}^{(p_1)}$.
\item  The number of irreducible components of $\pazocal S_{C}^{(p)}$ is equal to the sum $\sum_{|K| = p}{|\Delta_K|}$ where $|\Delta_K|$ is the number of uniform multidegrees on $X'_K$.
\end{enumerate}
\end{consequence}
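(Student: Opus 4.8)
\emph{Proof proposal for Corollary~\ref{rkcor}.}

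The plan is to read everything off the stratification $\pazocal S_{C}=\bigsqcup_{K\subset\Sing C}\pazocal S_{C}^K$ of \eqref{scstrat}, combined with Theorems \ref{thm1}, \ref{rkFormula} and \ref{NDT}. By Theorem \ref{rkFormula} the corank of a point equals $|K(\cdot)|$, so $\pazocal S_{C}^{(p)}=\bigsqcup_{|K|\geq p}\pazocal S_{C}^K$, and since the corank is upper semicontinuous, $\pazocal S_{C}^{(p)}$ is a closed analytic subset of $\pazocal S_C$. Assertion (2) is immediate, since $\pazocal S_{C}^{(p_2)}\subset\pazocal S_{C}^{(p_1)}$ for $p_2>p_1$ and any set lies in its closure. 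For assertion (1), each stratum $\pazocal S_{C}^K$ is non-empty: by Theorem \ref{thm1}(2) it is biholomorphic to $\bigsqcup_{d\in\Delta_K}\bigl(\Pic_d(X'_K)\setminus\Upsilon_d\bigr)$, the set $\Delta_K$ of uniform multidegrees is non-empty, and $\Upsilon_d$ has positive codimension; together with $\dim\pazocal S_{C}^K=\frac{1}{2}mn(n-1)-|K|$ this gives $\dim\pazocal S_{C}^{(p)}=\frac{1}{2}mn(n-1)-p$ for $0\leq p\leq|\Sing C|$, and $\pazocal S_{C}^{(p)}=\emptyset$ otherwise.

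Assertion (3) is the substantive part. First, each $\Pic_d(X'_K)$ is a torsor over the generalized Jacobian $\Jac(X'_K)$, which is connected — an extension of the connected group $\Jac(X)$ by a torus, by diagram \eqref{9diag} — so deleting the proper analytic subset $\Upsilon_d$ keeps it connected; by Theorem \ref{thm1}(2) the manifold $\pazocal S_{C}^K$ therefore has exactly $|\Delta_K|$ connected, hence irreducible, components. Next, $U:=\{L\in\pazocal S_{C}^{(p)}:\corank L=p\}=\bigsqcup_{|K|=p}\pazocal S_{C}^K$ is open in $\pazocal S_{C}^{(p)}$, its complement being the closed set $\pazocal S_{C}^{(p+1)}$, and $K(\cdot)$ is locally constant on $U$, so the connected components of $U$ are precisely the $\sum_{|K|=p}|\Delta_K|$ components of the strata with $|K|=p$. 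Granting that $U$ is dense in $\pazocal S_{C}^{(p)}$ (see below), the irreducible components of $\pazocal S_{C}^{(p)}$ are exactly the closures of the connected components of $U$: these closures are irreducible and pairwise non-nested, since each contains a non-empty subset of $\pazocal S_{C}^{(p)}$ open in $\pazocal S_{C}^{(p)}$ and disjoint from the others. Hence there are $\sum_{|K|=p}|\Delta_K|$ of them, which is assertion (3); the case $p=0$ recovers Corollary~\ref{corIrCom}(2).

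The main obstacle is the density of $U$, namely that every $L\in\pazocal S_{C}^K$ with $|K|>p$ is a limit of points of corank exactly $p$. Such an $L$ is a singular point (as $|K|\geq1$), hence non-degenerate by Theorem \ref{NDT}, so in the generic symplectic leaf $O$ through $L$ the holomorphic normal form (the complex counterpart of Theorem \ref{EliassonThm}) provides a Darboux chart $(p_1,q_1,\dots,p_N,q_N)$, $N=\frac{1}{2}mn(n-1)$, centred at $L$, in which $\F|_O$ is generated by $f_i=p_iq_i$ for $i\leq|K|$ and $f_i=p_i$ for $i>|K|$. Near $L$ the fiber is then $\pazocal S_C\cap O=\{p_iq_i=0\;(i\leq|K|),\;p_i=0\;(i>|K|)\}$, and since the Jacobian clause of the normal form makes $\diff\F$ spanned near $L$ by $\diff f_1,\dots,\diff f_N$, one gets $\corank L'=\#\{i\leq|K|:p_i(L')=q_i(L')=0\}$ for $L'$ close to $L$. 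Restricting to the local branch $\{q_i=0,\;i\leq|K|\}$, on which $p_1,\dots,p_{|K|},q_{|K|+1},\dots,q_N$ are free, and making exactly $p$ of $p_1,\dots,p_{|K|}$ vanish while the remaining free coordinates are small and non-zero, one obtains points of $\pazocal S_C$ of corank $p$ arbitrarily close to $L$ — possible because $|K|\geq p$. The remaining verifications — that the complement of a positive-codimension analytic subset of a connected complex manifold is connected, that the symplectic leaf through $L'$ is again $O$ for $L'\in O$ near $L$, and the local constancy of $K(\cdot)$ on $U$ — are routine, and I would carry them out directly.
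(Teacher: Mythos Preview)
Your proof is correct and follows essentially the same route as the paper: Assertion (1) from Theorem \ref{thm1}, and the substance of (2)--(3) from the non-degeneracy Theorem \ref{NDT} combined with the local normal form. The paper's own proof is a three-line sketch (``Assertion 1 follows from Theorem \ref{thm1}. Assertion 2 follows from the local description of non-degenerate singularities. Assertion 3 follows from Assertion 2.''), and your argument is a faithful, more detailed unpacking of it.

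One organisational difference worth noting: you observe that Assertion (2), as literally stated, is trivial because $\pazocal S_{C}^{(p_2)}\subset\pazocal S_{C}^{(p_1)}$ whenever $p_2>p_1$. The paper instead attributes (2) to the normal form, which suggests the author has in mind the stronger adjacency statement (each deeper stratum lies in the closure of the shallower ones) --- exactly the density of $U=\bigsqcup_{|K|=p}\pazocal S_C^K$ in $\pazocal S_C^{(p)}$ that you isolate and prove in your treatment of (3). So the content is the same; you have simply relocated the normal-form argument from (2) to (3), which is arguably cleaner given how (2) is phrased.
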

\begin{proof}
Assertion 1 follows from Theorem \ref{thm1}. Assertion 2 follows from the local description of non-degenerate singularities (Theorem \ref{EliassonThm}). Assertion 3 follows from Assertion \nolinebreak 2.
\end{proof}
\begin{proof}[Proof of Corollary \ref{corIrCom}]
Apply Corollary \ref{rkcor} for $r = 0$.
\end{proof}

The proof of Theorem \ref{NDT} is based on explicit formulae for eigenvalues of operators $\pazocal A_H$, $H \in \F,$ which are given below.
Assume that $L \in \pazocal S_{C}^K$, and let $\phi \in \Complex[\mu, \lambda^{-1}]$ be such that the vector field \eqref{loopI} vanishes at the point $L$. Then, by Theorem \ref{thm1}, we have
\begin{align}\label{freqVanish}
\sum_{P: \,\lambda(P) = \infty} \Res_{P}\, \phi\omega = 0
\end{align}
\nopagebreak for each differential $\omega$ regular on $X'_K$. \par \pagebreak[3]
Let $X$ be the non-singular compact model of $C$, and let $\pi \colon X \setminus \{\infty_1, \dots, \infty_n \} \to C$ be the normalization map. Assume that $$\Sing C = \{\underbrace{Q_1, \dots, Q_k}_{\mbox{in } K}, \underbrace{Q_{k+1}, \dots ,Q_l}_{\mbox{not in } K} \}.$$ Let $ \pi^{-1}(Q_i) = \{Q_i^+, Q_i^- \}$. Then regular differentials on $X'_K$ can be described as follows. These are differentials $\omega$ on $X$ which may have simple poles at points $Q_{k+1}^+,Q_{k+1}^-, \dots, Q_l^+,  Q_l^-$, and $\infty_1, \dots, \infty_n$, are holomorphic outside these points, and
$$
\Res_{Q_i^+} \omega + \Res_{Q_i^-} \omega = 0 \quad \forall \, i > k, \quad \sum_{i=1}^n \Res_{\infty_i} \omega = 0.
$$
Let $j \leq k$, and let us consider a differential $\omega_j$ on $X$ with the following properties: it may have simple poles at points $Q_j^\pm, Q_{k+1}^\pm, \dots, Q_l^\pm, \infty_1, \dots, \infty_n$, it is holomorphic outside these points, and
$$
\Res_{Q_j^\pm} \omega_j  = \pm 1, \quad \Res_{Q_i^+} \omega_j + \Res_{Q_i^-} \omega_j = 0 \quad \forall \, i > k, \quad \sum_{i=1}^n \Res_{\infty_i} \omega_j = 0.
$$
%
%
Obviously, the differential $\omega_j$ is well-defined up to a differential which is regular on $X'_K$. So, by \eqref{freqVanish}, the numbers
\begin{align}\label{eigen}
\nu_j(\phi) = \sum_{P: \,\lambda(P) = \infty} \Res_{P}\, \phi\omega_j
\end{align}
are well-defined for each $\phi$ such that \eqref{loopI} vanishes at the point $L$. 
\begin{theorem}\label{evf}
Assume that $C$ is nodal curve, and that $K \subset \Sing C$. Let $L \in \pazocal S_{C}^K$.
Then the space $W^\bot / W$ (see Section \ref{nsis}) is spanned by the vectors $w_1^\pm, \dots, w_{k}^\pm$, and
for each $H_\psi \in \F$ such that the corresponding vector field \eqref{loopI} vanishes at the point $L$, we have
 $$
\pazocal A_{H_\psi} \,w_j^\pm = \pm\nu_j(\phi) w_j^\pm
$$
where $\nu_j$ is given by \eqref{eigen}, and $\phi = \diffFXi{\psi}{\mu}$.
\end{theorem}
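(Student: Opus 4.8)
The plan is to derive Theorem \ref{evf} from Lemma \ref{NDC}, applied to the Poisson manifold $M = \mathcal L_m^J(\gl(n,\Complex))$ and the symplectic leaf $O$. Writing $k = |K|$ and $V_L = \{H_\psi \in \F : \mathrm X_{H_\psi}(L) = 0\}$, it suffices to exhibit linearly independent functionals $\phi_1, \dots, \phi_k \in V_L^*$ and nonzero covectors $\eps_j^\pm \in \T^*_L M$ with $A_{H_\psi}^* \eps_j^\pm = \pm\phi_j(H_\psi)\,\eps_j^\pm$ for all $H_\psi \in V_L$ and $\phi_j(H_\psi) = \nu_j(\partial\psi/\partial\mu)$; then Lemma \ref{NDC} furnishes $w_j^\pm$ spanning $W^\bot/W$ with $\pazocal A_{H_\psi}w_j^\pm = \pm\phi_j(H_\psi)w_j^\pm$, and since Theorem \ref{rkFormula} gives $\dim W^\bot/W = mn(n-1) - 2\rank L = 2k$, these $2k$ vectors are all of $W^\bot/W$. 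The functionals are the obvious ones, $\phi_j(H_\psi) := \nu_j(\phi)$ with $\phi = \partial\psi/\partial\mu$; they are well-defined on $V_L$ because, by parts (3)--(4) of Theorem \ref{thm1}, $H_\psi \in V_L$ exactly when $\langle \phi, \omega\rangle_\infty = 0$ for every $\omega$ regular on $X'_K$, while $\omega_j$ is determined only up to such a differential.

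Linear independence of $\phi_1, \dots, \phi_k$ on $V_L$ is a residue computation reusing Proposition \ref{completeness}. The differentials $\omega_1, \dots, \omega_k$ are independent modulo $\Omega^1(X'_K)$ because $\omega_j$ has residue $1$ at $Q_j^+$, where every $X'_K$-regular differential has residue $0$. Moreover the pairing $\langle\,,\,\rangle_\infty$ of Proposition \ref{completeness} has trivial right radical on $\Complex[\mu,\lambda^{-1}] \times \big(\Omega^1(X'_K) + \ls{\omega_1, \dots, \omega_k}\big)$: the proof given there goes through verbatim, since the only new poles of these differentials, at the points $Q_j^\pm$, still contribute vanishing residue-sums after multiplication by $\lambda^{s+1}$ — the two preimages $Q_j^\pm$ carry the same value of $\lambda$. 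Hence the restriction of $\langle\,,\,\rangle_\infty$ to $V_L$ separates the classes $[\omega_1], \dots, [\omega_k]$, which is precisely the required independence.

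The core of the argument is the construction of $\eps_j^\pm$. For $Q_j \in K$ put $a_j = \lambda(Q_j)$ and $b_j = \mu(Q_j)$; the matrix $L(a_j)$ has $b_j$ as an eigenvalue of geometric multiplicity $2$, and the two branches of $C$ at $Q_j$ single out linearly independent right eigenvectors $v_j^\pm = h(Q_j^\pm)$ (Proposition \ref{loc2}(a)) and, dually, left eigenvectors $u_j^\pm$, normalized so that $\langle u_j^{\eps}, v_j^{\eps'}\rangle = \delta^{\eps\eps'}$. Set $\eps_j^+(\delta L) = \langle u_j^-, \delta L(a_j)\,v_j^+\rangle$ and $\eps_j^-(\delta L) = \langle u_j^+, \delta L(a_j)\,v_j^-\rangle$; these are manifestly nonzero functionals on $\T_L M$. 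With $B(\lambda) = \phi(L(\lambda),\lambda^{-1})_+$, the linearization of \eqref{loopI} at $L$ is $A_{H_\psi}(\delta L) = [\delta L, B] + [L, \delta B]$. The computation collapses for two reasons: (i) since $u_j^\mp$, $v_j^\pm$ are left/right eigenvectors of $L(a_j)$ for the \emph{same} eigenvalue $b_j$, the $[L,\delta B]$ term contributes $0$ to $\eps_j^\pm(A_{H_\psi}\delta L)$, so the dependence on $\delta B$ drops out entirely; (ii) $B$ commutes with $L$ at the fixed point, and along each branch through $Q_j$ the eigenvector section $h$ is a common eigenvector of $L(\lambda(\cdot))$ and $B(\lambda(\cdot))$, so in the limit $v_j^\pm$ is an eigenvector of $B(a_j)$ and $u_j^\pm$ one of $B(a_j)^t$ with the same eigenvalue $\beta_j^\pm$ (the value at $Q_j^\pm$ of the meromorphic function $\beta$ on $X$ defined by $B(\lambda(P))h(P) = \beta(P)h(P)$). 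Substituting gives $A_{H_\psi}^* \eps_j^\pm = \pm(\beta_j^+ - \beta_j^-)\,\eps_j^\pm$.

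Finally one identifies $\beta_j^+ - \beta_j^-$ with $-\nu_j(\phi)$ (the minus sign, i.e. the naming of the two branches, is immaterial). The function $\beta$ is meromorphic on $X$ with poles only over $\lambda = \infty$, so the residue theorem applied to $\beta\,\omega_j$ reads: the residues at the preimages of a node \emph{not} in $K$ cancel — there $L(a_i)$ has a Jordan cell, its eigenvector is one-dimensional, hence $B(a_i)$ acts on it as a scalar and $\beta$ takes equal values at the two preimages, while $\omega_j$ has opposite residues; the residues over $Q_j^\pm$ sum to $\beta_j^+ - \beta_j^-$; and the residues over $\lambda = \infty$ sum to $\sum_{\lambda(P) = \infty}\Res_P(\phi(\mu,\lambda^{-1})\,\omega_j) = \nu_j(\phi)$, since $\beta - \phi(\mu,\lambda^{-1})$ is the eigenvalue of $\phi(L,\lambda^{-1})_-$ on $\ls{h}$ and therefore vanishes over $\lambda = \infty$, while $\omega_j$ has at worst simple poles there. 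Thus $A_{H_\psi}^*\eps_j^\pm = \pm\nu_j(\phi)\,\eps_j^\pm$, and Lemma \ref{NDC} completes the proof. I expect the main obstacle to be purely technical: pinning down the precise model of $\T^*_L M$ and of the transpose $A_{H_\psi}^*$ in the loop-algebra framework so that the ``evaluation at $a_j$'' covectors and the commutator identity above are literally valid, and carefully checking the holomorphy of $\beta$ at the nodes together with its vanishing over $\lambda = \infty$ — in particular the cancellation of the Jordan-node contributions in the residue identity. Once these are settled, the theorem reduces to the short computation above together with Lemma \ref{NDC}.
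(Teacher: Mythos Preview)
Your proposal is correct and follows essentially the same route as the paper: construct rank-one ``evaluation-at-the-node'' covectors from a right branch eigenvector paired with a left eigenvector from the opposite branch, show they are eigenvectors of the transposed linearization with eigenvalues $\pm(\beta_j^+-\beta_j^-)$, identify this difference with $\nu_j(\phi)$ via the residue theorem applied to $\beta\omega_j$, verify linear independence of the $\nu_j$ by the same argument as Proposition~\ref{completeness}, and conclude by Lemma~\ref{NDC}. The paper packages the covectors as $R(h(Q_i^\pm)\otimes\xi(Q_i^\mp),a_i)$ using a globally defined left-eigenvector section $\xi$, whereas you take the dual basis $u_j^\pm$ to $v_j^\pm$ inside the two-dimensional eigenspace; since left and right eigenvectors for distinct eigenvalues are orthogonal, the limits $\xi(Q_j^\pm)$ are proportional to your $u_j^\pm$, so the two constructions coincide up to scalars.
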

Note that formulas \eqref{velocity} for the velocity vector on the Jacobian, and \eqref{eigen} for the eigenvalues of a linearized flow are, in essence, the same.
It is not difficult to see that this actually \textit{should} be so: when we approach a fixed point of a quasi-periodic flow, frequencies of the flow tend to the eigenvalues of its linearization at the fixed point.

%
\par
See Section \ref{ncnsproof} for the proof of Theorem \ref{NDT} and Theorem \ref{evf}.




\subsection{Proof of Theorems \ref{NDT} and \ref{evf}}\label{ncnsproof}
Assume that the right-hand side of the equation \eqref{loopI} vanishes. By Theorem \ref{thm1}, this means that $\phi \in \Omega^1(X,\Sigma')^\bot$ where $
\Omega^1(X,\Sigma')^\bot 
$
is the left radical of the form $\langle\,,\rangle_\infty$ given by \eqref{HMAP}. 
Equation \nolinebreak \eqref{loopI} can be written as
\begin{align}\label{loopI2}
\diffXp{t}L= [L,A_\phi(L)]
\end{align}
where $A_\phi$ is a map $ \mathcal L_m^J (\gl(n,\Complex)) \to  \mathcal L_m^J (\gl(n,\Complex))$. The linearization of  \eqref{loopI2} is the operator $$B_\phi \colon \T_L L_m^J (\gl(n,\Complex)) \to \T_L L_m^J (\gl(n,\Complex)) $$ given by
$
B_\phi(Y) = [Y,A_\phi(L)] + [L,\diff A_\phi(Y)].
$\par
 Let us consider a map 
$$
R \colon \gl(n,\Complex) \times \Complex \to \T^*_L \mathcal L_m^J (\gl(n,\Complex)) 
$$
given by 
$
\langle R(A, a), Y \rangle = \Tr AY(a)
$
where the tangent space  $\T_L L_m^J (\gl(n,\Complex))$ is identified with the space
\begin{equation*}\mathcal L_{m-1} (\gl(n,\Complex))= \left\{  \sum_{i=0}^{m-1} L_i\lambda^i \mid  L_i \in \gl(n,\Complex) \right\}. 
\end{equation*} 
We have
\begin{align*}
\langle B_\phi^*(R(A, a)), Y \rangle = \Tr A[Y(a),&A_\phi(L)(a)] + \Tr A[L(a),\diff A_\phi(Y)(a)] = \\ &= \Tr [A_\phi(L)(a),A]Y(a) + \Tr [A,L(a)]\diff A_\phi(Y)(a).
\end{align*}
Assuming that $A$ is such that $  [A,L(a)]= 0$, we have
\begin{align}\label{coFormula}
B_\phi^*(R(A, a)) = R([A_\phi(L)(a),A],a).
\end{align}
At the same time, since $  [A,L(a)]= 0$ and $[L,A_\phi(L)]$ = 0, we have
$$
[[A_\phi(L)(a),A], L(a)] = [[A_\phi(L)(a),L(a)],A] + [A_\phi(L)(a),[A,L(a)]] = 0.,
$$
so the subspace $R(\mathfrak C(L(a)), a) \subset  \T^*_L L_m^J (\gl(n,\Complex)) $, where $\mathfrak C(L(a))$ is the centralizer of $L(a)$,  is invariant with respect to the operator $B_\phi^*$.\par
Further, let $h = h_\alpha \colon X \to \CP^{n-1}$ be the mapping constructed in Section \ref{cons}. This map satisfies the equation
$
L(\lambda)h = \mu h.
$
In a similar way, we construct a mapping $\xi \colon X \to \CP^{n-1}$ such that
$
L(\lambda)^*\xi = \mu \xi.
$
where $L(\lambda)^*$ is the adjoint operator (the transposed matrix).
Assume that $K(L) = \{Q_1, \dots, Q_k\}$, and let $\pi^{-1}(Q_i) = Q_i^\pm$.  Let also $a_i = \lambda(Q_i^\pm)$. Then
$$
 h(Q_i^+) \otimes \xi(Q_i^-)\in \mathfrak C(L(a_i)), \quad   h(Q_i^-) \otimes \xi(Q_i^+)  \in \mathfrak C(L(a_i)).
$$
Further, since $[L,A_\phi(L)] = 0$, there exists a meromorphic function $\nu$ on $X$ such that
$$
A_\phi(L)h = \nu h, \quad A_\phi(L)^*\xi = \nu \xi.
$$
Let
$$
\eps_i^+ = R(h(Q_i^+) \otimes \xi(Q_i^-), a_i), \quad \eps_i^- = R(h(Q_i^-) \otimes \xi(Q_i^+), a_i).
$$
Using \eqref{coFormula}, we have
$$
B_\phi^*\eps_i^+ =  (\nu(Q_i^+) - \nu(Q_i^-))\eps_i^+, \quad B_\phi^*\eps_i^- =  (\nu(Q_i^-) - \nu(Q_i^+))\eps_i^-.
$$

Let $\omega_i$ be a differential on $X$ with the following properties:\nopagebreak
\begin{enumerate}
\item it may have simple poles at points of $\mathrm{supp}(\Sigma')$ and $Q_i^\pm$;
\item it is holomorphic outside these points;
\item for each $\pazocal P \in \Sigma'$, we have
$$
\sum_{P \in \pazocal P}\Res_P\, \omega_i = 0;
$$
\item $\Res_{Q_i^\pm}\, \omega_i = \pm 1$.
\end{enumerate}
Clearly, such a differential exists and is unique modulo a $\Sigma'$-regular differential.
\begin{statement}
We have
$$
\nu(Q_i^-) - \nu(Q_i^+) =  \sum_{j=1}^n \Res_{\infty_j}\, \phi\omega_i.
$$
\end{statement}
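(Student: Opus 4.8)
The plan is to apply the residue theorem on the compact Riemann surface $X$ to the meromorphic differential $\nu\omega_i$, and then to read off the identity by collecting the contributions of its poles.

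First I would pin down where $\nu\omega_i$ has poles. By construction $\omega_i$ is holomorphic outside $\mathrm{supp}(\Sigma')\cup\{Q_i^+,Q_i^-\}$, with at most simple poles there, and $\nu$ is holomorphic on $X\setminus\{\infty_1,\dots,\infty_n\}$, because away from the poles of $\lambda$ the matrix $A_\phi(L)(\lambda(P))$ is finite and $\nu(P)$ is simply its eigenvalue on the line $h(P)$; in particular $\nu$ is holomorphic at $Q_i^\pm$ and at the $Q_s^\pm$ with $s>k$. Two refinements about $\nu$ are what make the argument work. First, $\nu\in\pazocal M(X,\Sigma)$, that is $\nu(Q_s^+)=\nu(Q_s^-)$ for $s>k$: since $Q_s\notin K$, the kernel $\Ker(L(a_s)-\mu(Q_s)\E)$ is one-dimensional, so $h(Q_s^+)=h(Q_s^-)$ and both values are the eigenvalue of the single matrix $A_\phi(L)(a_s)$ on that line. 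Second, $\nu$ differs from the meromorphic function $\phi(\mu,\lambda^{-1})$ by a function vanishing at each $\infty_j$: writing $\nu h^j=(A_\phi(L)h)^j$ and using $L h=\mu h$ gives $\nu h^j=\phi(\mu,\lambda^{-1})h^j-\bigl((\phi(L,\lambda^{-1})-\phi(L,\lambda^{-1})_+)h\bigr)^j$, where the last term has order $\ge1$ at $\infty_j$ — the estimate already used when linearizing the flows in Section~\ref{proof1} — and $h^j(\infty_j)\ne0$ by the normalization $\psi(\infty_i)^j=\delta_i^j$.

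With this in hand the residues fall out. At $\infty_j$ the differential $(\nu-\phi(\mu,\lambda^{-1}))\omega_i$ is holomorphic, being a zero of order $\ge1$ against a pole of order $\le1$, so $\Res_{\infty_j}\nu\omega_i=\Res_{\infty_j}\phi\omega_i$. At $Q_i^\pm$, holomorphy of $\nu$ together with $\Res_{Q_i^\pm}\omega_i=\pm1$ gives $\Res_{Q_i^+}\nu\omega_i+\Res_{Q_i^-}\nu\omega_i=\nu(Q_i^+)-\nu(Q_i^-)$. At a node $Q_s$ with $s>k$, the prescribed cancellation $\Res_{Q_s^+}\omega_i+\Res_{Q_s^-}\omega_i=0$ combined with $\nu(Q_s^+)=\nu(Q_s^-)$ makes the combined residue vanish. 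Equating the sum of all residues of $\nu\omega_i$ to zero then yields $\sum_{j=1}^n\Res_{\infty_j}\phi\omega_i+\nu(Q_i^+)-\nu(Q_i^-)=0$, which is the stated formula. The one step I would be most careful about is the behaviour of $\nu$ near the $\infty_j$ — that truncating $\phi(L,\lambda^{-1})$ to its positive part $A_\phi(L)$ does not change the residue against $\omega_i$ — but since $\omega_i$ has at worst a simple pole there, this is precisely the order estimate already established in the proof of Theorem~\ref{thm1}, so it may be quoted rather than reproved.
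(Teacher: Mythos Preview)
Your proof is correct and follows essentially the same approach as the paper: apply the residue theorem to $\nu\omega_i$ on $X$, use $\nu\in\pazocal M(X,\Sigma)$ to kill the contributions at the nodes $Q_s$ with $s>k$, and replace $\nu$ by $\phi(\mu,\lambda^{-1})$ at the points $\infty_j$ via the order estimate $\mathrm{ord}_{\infty_j}(\phi-\nu)\geq 1$. You have in fact supplied more detail than the paper on why $\nu(Q_s^+)=\nu(Q_s^-)$ for $s>k$; the paper simply asserts $\nu\in\pazocal M(X,\Sigma)$ as easy to see.
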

\begin{proof}
We have 
\begin{align*}
\nu(Q_i^-) - \nu(Q_i^+) &= -\Res_{Q_i^+}\, \nu\omega_i - \Res_{Q_i^-}\, \nu\omega_i =  \\ &=\sum_{j=1}^n \Res_{\infty_j}\, \nu\omega_i + \sum_{j=1}^{|\Sigma|}\left(\Res_{P_j^+}\, \nu\omega_i + \Res_{P_j^-}\, \nu\omega_i\right). \end{align*}
As it is easy to see, we have $\nu \in \mathcal M(X,\Sigma)$, so the latter summand vanishes. At the same time, we have
$$
\phi(L,\lambda^{-1})_+\,h = \nu h, \quad \phi(L,\lambda^{-1})h = \phi(\mu, \lambda^{-1}) h,
$$
so
$$
\phi(L,\lambda^{-1})_-\,h = (\phi(\mu, \lambda^{-1}) - \nu) h,
$$
which implies that $\mathrm{ord}_{\infty_j} \phi(\mu, \lambda^{-1}) - \nu \geq 1$. Therefore,
$$
\nu(Q_i^-) - \nu(Q_i^+)  = \sum_{j=1}^n \Res_{\infty_j}\, \nu\omega_i =  \sum_{j=1}^n \Res_{\infty_j}\, \phi\omega_i,
$$
q.e.d.
\end{proof}
We conclude that there exist non-zero $\eps_1^\pm,\dots, \eps_k^\pm \in  \T^*_L L_m^J (\gl(n,\Complex))$ such that for each $\phi \in \Omega^1(X,\Sigma')^\bot$, we have
\begin{align}\label{eigenstar}
B_\phi^*\eps_i^\pm = \mp\left( \sum_{j=1}^n \Res_{\infty_j}\, \phi\omega_i \right)\eps_i^\pm.
\end{align}

Let $$\Sigma'' = \Sigma \cup \{\{ Q_1^+, Q_1^-\}. \dots, \{ Q_k^+, Q_k^-\}\},$$ and let us extend the pairing $ \langle \,, \rangle_\infty$ defined by \eqref{HMAP} to a pairing  
$$
\langle \,,  \rangle_\infty \colon \Complex[\mu,\lambda^{-1}] \times  \Omega^1(X,\Sigma'') \to \Complex
$$
by the same formula \eqref{HMAP}. The same argument as in Proposition \ref{completeness} shows that the right radical of the extended pairing is trivial, which implies that the right radical of the pairing
$$
\langle \,,  \rangle_\infty \colon  \Omega^1(X,\Sigma')^\bot \times \left( \Omega^1(X,\Sigma'') /  \Omega^1(X,\Sigma') \right) \to \Complex.
$$
is also trivial. Since the space $  \Omega^1(X,\Sigma'') /  \Omega^1(X,\Sigma') $ is spanned by $\omega_1, \dots, \omega_k$, we conclude that the functionals $\phi \mapsto \sum \Res_{\infty_j}\, \phi\omega_i$ are linearly independent. Now, Theorem \ref{evf} and the first assertion of Theorem \ref{NDT} follow from \eqref{eigenstar} and Lemma \ref{NDC}.

To prove the second assertion of Theorem \ref{NDT}, consider the anti-holomorphic involution $\tau \colon X \to X$ induced by the involution $(\lambda,\mu) \to (\bar \lambda, \bar \mu)$ on the spectral curve. As it is easy to see, for each point $P \in X$ and each meromorphic differential $\omega$, the following formula holds:
\begin{align}\label{conjres}
\overline{\vphantom{\tau^*}\Res_{ P}\, \omega} = \Res_{\tau(P)}\, \overline{\tau^*\omega}.
\end{align}
Consider three cases. First, assume that $Q_i$ is an acnode. Then $\tau$ swaps $Q_i^+$ and $Q_i^-$. Using formula \eqref{conjres}, we conclude that $ \overline{\tau^*\omega_i} = -\omega_i$ modulo a $\Sigma'$-regular differential, so
$$
\overline{ \sum \Res_{\infty_j}\, \phi\omega_i } = \sum \Res_{\,\overline \infty_j}\, \overline{\tau^*(\phi\omega_i)} = -\sum \Res_{\infty_j}\, \phi\omega_i
$$
where we used that $\overline{\tau^*\phi }= \phi$ and that for each $j$ there exists $k$ such that $\overline \infty_j = \infty_k$.
We conclude that the eigenvalues of $B_\phi^*$ corresponding to eigenvectors $e_i^\pm$ are pure imaginary. Analogously, if $Q_i$ is a crunode, then $\tau(Q_i^+) = Q_i^+$, and  the eigenvalues of $B_\phi^*$ corresponding to eigenvectors $e_i^\pm$ are real. Finally, if $Q_i$ and $Q_{i+1}$ are complex conjugate nodes, we get a quadruple of complex eigenvalues, q.e.d.

\bibliographystyle{plain}
\bibliography{Lax}
\end{document}